\newcommand{\bull}{\rule{.85ex}{1ex} \par \bigskip}
\newenvironment{proof}{\noindent {\bf Proof:\ }}{\hfill $\Box$ \par \bigskip}
\newtheorem{theorem}{Theorem}[section]
\newtheorem{definition}[theorem]{Definition}
\newtheorem{proposition}[theorem]{Proposition}
\newtheorem{lemma}[theorem]{Lemma}
\newtheorem{corollary}[theorem]{Corollary}
\newtheorem{conjecture}[theorem]{Conjecture}
\title{Hard constraint satisfaction problems\\ have hard gaps at location 1\footnotemark[1]}
\author{Peter Jonsson\footnotemark[2] \and Andrei Krokhin\footnotemark[3] \and Fredrik Kuivinen\footnotemark[4]}
\begin{document}
\maketitle

\footnotetext[1]{Preliminary versions of parts of this report appeared in {\em Proceedings of the 2nd International Computer Science Symposium in Russia (CSR-2007)}, Ekaterinburg, Russia, 2007}

\footnotetext[2]{Department of Computer and Information Science, Link\"{o}pings Universitet, SE-581 83 Link\"{o}ping, Sweden, email: {\tt petej@ida.liu.se}, phone: +46 13 282415, fax: +46 13 284499}

\footnotetext[3]{Department of Computer Science, University of Durham,
Science Laboratories, South Road, Durham DH1 3LE, UK, email: {\tt andrei.krokhin@durham.ac.uk}, phone: +44 191 334 1743}

\footnotetext[4]{Department of Computer and Information Science, Link\"{o}pings Universitet, SE-581 83 Link\"{o}ping, Sweden, email: {\tt freku@ida.liu.se}, phone: +46 13 286607, fax: +46 13 284499}

\newenvironment{example}{\noindent {\bf Example:\ } \rm }{\hfill}

\newcommand{\QCSP}[1]{\mbox{\rm QCSP$(#1)$}}
\newcommand{\MCSP}[1]{\mbox{{\sc Max CSP}$(#1)$}}
\newcommand{\wMCSP}[1]{\mbox{\rm weighted Max CSP$(#1)$}}
\newcommand{\cMCSP}[1]{\mbox{\rm cw-Max CSP$(#1)$}}
\newcommand{\tMCSP}[1]{\mbox{\rm tw-Max CSP$(#1)$}}
\renewcommand{\P}{\mbox{\bf P}}
\newcommand{\G}[1]{\mbox{\rm I$(#1)$}}
\newcommand{\NE}[1]{\mbox{$\neq_{#1}$}}

\newcommand{\NP}{\mbox{\bf NP}}
\newcommand{\NL}{\mbox{\bf NL}}
\newcommand{\PO}{\mbox{\bf PO}}
\newcommand{\NPO}{\mbox{\bf NPO}}
\newcommand{\APX}{\mbox{\bf APX}}
\newcommand{\Aut}{\mbox{\rm Aut}}
\newcommand{\bound}{\mbox{\rm -$B$}}

\newcommand{\GIF}[3]{\ensuremath{h_\{{#2},{#3}\}^{#1}}}

\newcommand{\Spmod}{\mbox{\rm Spmod}}
\newcommand{\Sbmod}{\mbox{\rm Sbmod}}

\newcommand{\Inv}[0]{\mbox{\rm Inv}}
\newcommand{\Pol}[0]{\mbox{\rm Pol}}
\newcommand{\sPol}[1]{\mbox{\rm s-Pol($#1$)}}

\newcommand{\un}{\underline}
\newcommand{\ov}{\overline}
\def\vect#1#2{#1 _1\zdots #1 _{#2}}
\def\zd{,\ldots,}
\let\sse=\subseteq
\let\la=\langle
\def\lla{\langle\langle}
\let\ra=\rangle
\def\rra{\rangle\rangle}
\let\vr=\varrho
\def\vct#1#2{#1 _1\zd #1 _{#2}}
\newcommand{\va}{{\bf a}}
\newcommand{\vb}{{\bf b}}
\newcommand{\vc}{{\bf c}}
\newcommand{\bx}{{\bf x}}
\newcommand{\by}{{\bf y}}
\def\Z{{\bur Z^+}}
\def\R{{\bur R}}
\def\D{{\cal D}}
\def\F{{\cal F}}
\def\I{{\cal I}}
\def\C{{\cal C}}
\def\U{{\cal U}}
\def\K{{\cal K}}
\def\Lat{{\cal L}}

\def\2mat#1#2#3#4#5#6#7#8{
\begin{array}{c|cc}
$~$ & #3 & #4\\
\hline
#1 & #5& #6\\
#2 & #7 & #8 \end{array}}

\font\tenbur=msbm10
\font\eightbur=msbm8
\def\bur{\fam11}
\textfont11=\tenbur \scriptfont11=\eightbur \scriptscriptfont11=\eightbur
\font\twelvebur=msbm10 scaled 1200
\textfont13=\twelvebur \scriptfont13=\tenbur \scriptscriptfont13=\eightbur




\newcounter{A}
\newcounter{prgline} 


\newcommand{\citelist}[1]{\raisebox{.2ex}{[}#1\raisebox{.2ex}{]}}
\newcommand{\scite}[1]{\citeauthor{#1}, \citeyear{#1}}
\newcommand{\shortcite}[1]{\cite{#1}}
\newcommand{\mciteii}[2]{\citeauthor{#1}, \citeyear{#1}, %
\citeyear{#2}}
\newcommand{\mciteiii}[3]{\citeauthor{#1}, \citeyear{#1}, %
\citeyear{#2}, \citeyear{#3}}

\newcommand{\multiciteii}[2]{\citelist{\scite{#1}, \citeyear{#2}}}
\newcommand{\multiciteiii}[3]%
  {\citelist{\scite{#1}, \citeyear{#2}, \citeyear{#3}}}


\renewcommand{\phi}{\varphi}
\renewcommand{\epsilon}{\varepsilon}

\newcommand{\draft}{\begin{center}\huge Draft!!! \end{center}}
\newcommand{\void}{\makebox[0mm]{}}     


\renewcommand{\text}[1]{\mbox{\rm \,#1\,}}        

\newcommand{\tand}{\text{\ and\ }}
\newcommand{\tor}{\text{\ or\ }}
\newcommand{\tif}{\text{\ if\ }}
\newcommand{\tiff}{\text{\ iff\ }}
\newcommand{\tfor}{\text{\ for\ }}
\newcommand{\tforall}{\text{\ for all\ }}
\newcommand{\totherwise}{\text{\ otherwise}}


\newcommand{\fnl}{\void \\}             

\newcommand{\pushlist}[1]{\setcounter{#1}{\value{enumi}} \end{enumerate}}
\newcommand{\poplist}[1]{\begin{enumerate} \setcounter{enumi}{\value{#1}}}


\renewcommand{\emptyset}{\varnothing}  
\newcommand{\union}{\cup}               
\newcommand{\intersect}{\cap}           
\newcommand{\setdiff}{-}                
\newcommand{\compl}[1]{\overline{#1}}   
\newcommand{\card}[1]{{|#1|}}           
\newcommand{\set}[1]{\{{#1}\}} 
\newcommand{\st}{\ |\ }                 
\newcommand{\suchthat}{\st}             
\newcommand{\cprod}{\times}             
\newcommand{\powerset}[1]{{\bf 2}^{#1}} 

\newcommand{\tuple}[1]{\langle{#1}\rangle}  
\newcommand{\seq}[1]{\langle #1 \rangle}
\newcommand{\emptyseq}{\seq{}}
\newcommand{\floor}[1]{\left\lfloor{#1}\right\rfloor}
\newcommand{\ceiling}[1]{\left\lceil{#1}\right\rceil}

\newcommand{\map}{\rightarrow}
\newcommand{\fncomp}{\!\circ\!}         

\newcommand{\transclos}[1]{#1^+}
\newcommand{\reduction}[1]{#1^-}        

\newtheorem{defnx}{Definition}
\newtheorem{axiomx}[defnx]{Axiom}
\newtheorem{theoremx}[defnx]{Theorem}
\newtheorem{propositionx}[defnx]{Proposition}
\newtheorem{lemmax}[defnx]{Lemma}
\newtheorem{corollx}[defnx]{Corollary}
\newtheorem{algx}[defnx]{Algorithm}
\newtheorem{exx}[defnx]{Example}
\newtheorem{factx}[defnx]{Fact}

\newcommand{\QED}{\nopagebreak[4]{\makebox[1mm]{}\hfill$\Box$}}


\newlength{\prgindent}
\newenvironment{program}{
  \begin{list}%
   {\arabic{prgline}}%
   {
   \usecounter{prgline}
   \setlength{\prgindent}{0em}
   \setlength{\parsep}{0em}
   \setlength{\itemsep}{0em}
   \setlength{\labelwidth}{1em}
   \setlength{\labelsep}{1em}
   \setlength{\leftmargin}{\labelwidth}
   \addtolength{\leftmargin}{\labelsep}
   \setlength{\topsep}{0em}
   \setlength{\parskip}{0em} } }%
 {\end{list}}

\newif\ifprgendtext
\prgendtexttrue

\newenvironment{prgblock}{\addtolength{\prgindent}{\labelsep}}%
{\addtolength{\prgindent}{-\labelsep}}
\newcommand{\prgbeginblock}{\addtolength{\prgindent}{\labelsep}}
\newcommand{\prgendblock}{\addtolength{\prgindent}{-\labelsep}}
\newcommand{\prgcndendblock}[1]{\addtolength{\prgindent}{-\labelsep}
 \ifprgendtext \prglin\prgres{#1}\fi}

\newcommand{\prglin}{\rm \item\hspace{\prgindent}}
\newcommand{\prgcontlin}{\\  \hspace{\prgindent}}

\newcommand{\prgres}[1]{{\bf #1}}
\newcommand{\prgassn}{\leftarrow}
\newcommand{\prgname}[1]{{\it #1}}

\newcommand{\prgbegin}{\prglin\prgres{begin}\prgbeginblock}
\newcommand{\prgend}{\prgendblock\prglin\prgres{end\ }}
\newcommand{\prgnoend}{\prgendblock}

\newcommand{\prgif}{\prglin\prgres{if\ }}
\newcommand{\prgthen}{\prgres{\ then\ }\prgbeginblock}
\newcommand{\prgelse}{\prgendblock\prglin\prgres{else}\prgbeginblock}
\newcommand{\prgelsif}{\prgendblock\prglin\prgres{elsif\ }}
\newcommand{\prgendif}{\prgcndendblock{end if}}

\newcommand{\prgwhile}{\prglin\prgres{while\ }}
\newcommand{\prgfor}{\prglin\prgres{for\ }}
\newcommand{\prgdo}{\prgres{\ do}\prgbeginblock}
\newcommand{\prgrepeat}{\prgres{\ repeat}\prgbeginblock}
\newcommand{\prgloop}{\prglin\prgres{loop}\prgbeginblock}
\newcommand{\prgendloop}{\prgendblock\prglin\prgres{end loop}}
\newcommand{\prgendwhile}{\prgcndendblock{end while}}
\newcommand{\prgendfor}{\prgcndendblock{end for}}
\newcommand{\prguntil}{\prgendblock\prglin\prgres{until\ }}

\newcommand{\prgcomment}{\prglin\prgres{comment\ }\it }
\newcommand{\prgprocedure}{\prglin\prgres{procedure\ }}
\newcommand{\prgnil}{\prgres{\ nil}}
\newcommand{\prgtrue}{\prgres{\ true}}
\newcommand{\prgfalse}{\prgres{\ false}}
\newcommand{\prgnot}{\prgres{\ not\ }}
\newcommand{\prgand}{\prgres{\ and\ }}
\newcommand{\prgor}{\prgres{\ or \ }}
\newcommand{\prgfail}{\prgres{fail}}
\newcommand{\prgreturn}{\prgres{return\ }}
\newcommand{\prgaccept}{\prgres{accept}}
\newcommand{\prgreject}{\prgres{reject}}

\prgendtextfalse


\newcommand{\ie}{{\em ie.}}                
\newcommand{\eg}{{\em eg.}}
\newcommand{\paper}{paper}                

\newcommand{\emdef}{\em}                   
\newcommand{\rinterpretation}{${\Bbb R}$-interpretation}
\newcommand{\rmodel}{${\Bbb R}$-model}
\newcommand{\transp}{^{\rm T}}

\newcommand{\unprint}[1]{}
\newcommand{\blankline}{$\:$}

\newcommand{\Solv}{{\it TSolve}}
\newcommand{\Neg}{{\it Neg}}
\newcommand{\logname}{XX}

\newcommand{\props}{{\it props}}
\newcommand{\rels}{{\it rels}}
\newcommand{\deduce}{\vdash_p}

\newcommand{\pform}{{\rm Pr}}
\newcommand{\axform}{{\rm AX}}
\newcommand{\axset}{{\bf AX}}
\newcommand{\resdeduce}{\vdash_{\rm R}}
\newcommand{\resaxdeduce}{\vdash_{\rm R,A}}

\newcommand{\cmis}{{\em \#mis}}
\newcommand{\combine}{{\em comb}}

\newcommand{\xCSP}{{\sc X-CSP}}
\newcommand{\CSP}{{\sc CSP}}

\newcommand{\Ind}[0]{\textrm{Ind}}
\newcommand{\typ}[0]{\textsf{typ}}
\newcommand{\Con}[0]{\textsf{Con}\;}

\newcommand{\lattleq}[0]{\sqsubseteq}
\newcommand{\lattgeq}[0]{\sqsupseteq}
\newcommand{\lattl}[0]{\sqsubset}
\newcommand{\lattg}[0]{\sqsupset}
\newcommand{\lub}[0]{\sqcup}
\newcommand{\glb}[0]{\sqcap}

\newcommand{\prob}[1]{{\sc #1}}
\newcommand{\opt}[0]{\textrm{{\sc opt}}}
\def\tup#1{\mathchoice{\mbox{\boldmath$\displaystyle#1$}}
{\mbox{\boldmath$\textstyle#1$}}
{\mbox{\boldmath$\scriptstyle#1$}}
{\mbox{\boldmath$\scriptscriptstyle#1$}}}

\newcommand{\strictimpl}[1]{\stackrel{s\phantom{8pt}}{\Longrightarrow_{#1}}}
\newcommand{\perfimpl}[1]{\stackrel{p\phantom{8pt}}{\Longrightarrow_{#1}}}

\newcommand{\cc}[1]{\textnormal{\textbf{#1}}} 
\newcommand{\proj}[2]{#1\big|_{#2}}
\newcommand{\ignore}[1]{}



\bibliographystyle{siam}

\begin{abstract}
  An instance of the \emph{maximum constraint satisfaction problem} (\prob{Max CSP})
  is a finite
  collection of constraints on a set of variables, and the goal is to
  assign values to the variables that maximises the number of
  satisfied constraints.  \prob{Max CSP} captures many well-known
  problems (such as \prob{Max $k$-SAT} and \prob{Max Cut}) and is
  consequently \cc{NP}-hard.  Thus, it is natural to study how
  restrictions on the allowed constraint types (or \emph{constraint language})
  affect the complexity and approximability of \prob{Max CSP}.
  The PCP theorem is equivalent to the existence of a
  constraint language for which \prob{Max CSP} has a hard gap at
  location 1, i.e. it is \cc{NP}-hard to distinguish between
  satisfiable instances and instances where
  at most some constant fraction of the constraints are satisfiable.
  All constraint languages, for which the \prob{CSP} problem (i.e.,
  the problem of deciding whether all constraints can
  be satisfied) is currently known to be \cc{NP}-hard,
  have a certain algebraic property. We prove that any constraint language
 with this algebraic
  property makes \prob{Max CSP} have a hard gap at location 1 which, in
  particular, implies that such problems cannot have a PTAS unless
  \cc{P} = \cc{NP}.
  We then apply this result to \prob{Max CSP} restricted to a single constraint
  type; this class of problems contains, for instance, \prob{Max Cut} and
  \prob{Max DiCut}. Assuming
  \cc{P} $\neq$ \cc{NP}, we show that such problems do not admit
  PTAS except in some trivial cases.
  Our results hold even if the number of occurrences
  of each variable is bounded by a constant. We use these results
  to partially answer open questions and strengthen results by Engebretsen et al.~[Theor. Comput. Sci., 312 (2004), pp. 17--45], 
Feder~et al. [Discrete Math., 307 (2007), pp. 386--392], Krokhin and Larose [Proc. Principles and Practice of
  Constraint Programming (2005), pp. 388--402], and Jonsson
  and Krokhin [J. Comput. System Sci., 73 (2007), pp. 691--702].
\end{abstract}

\bigskip

\noindent
{\bf Keywords:}
constraint satisfaction, optimisation, approximability, universal algebra, computational complexity, dichotomy


\pagestyle{plain}
\thispagestyle{plain}

\newpage

\section{Introduction}
Many combinatorial optimisation problems are \cc{NP}-hard so there has
been a great interest in constructing approximation algorithms for
such problems.  For some optimisation problems, there exist powerful
approximation algorithms known as \emph{polynomial-time approximation
  schemes} (PTAS).  An optimisation problem $\Pi$ has a
PTAS $A$ if, for any fixed rational $c > 1$ and for any
instance $\I$ of $\Pi$, $A(\I, c)$ returns a $c$-approximate (i.e.,
within $c$ of optimum) solution in time polynomial in $|\I|$.
There are some
well-known \cc{NP}-hard optimisation problems that have the highly
desirable property of admitting a PTAS: examples include
\prob{Knapsack}~\cite{approx-knapsack}, \prob{Euclidean
  Tsp}~\cite{tsp-ptas}, and \prob{Independent Set} restricted to
planar graphs~\cite{Ausiello99:complexity,planar-sep}.  It is also
well-known that a large number of optimisation problems do not admit
PTAS unless some unexpected collapse of complexity classes
occurs. For instance, problems like \prob{Max
  $k$-SAT}~\cite{Arora:etal:jacm98} and \prob{Independent
  Set}~\cite{Arora:Safra:jacm98} do not admit a PTAS unless
\cc{P} = \cc{NP}. We note that if $\Pi$ is a
problem that does not admit a PTAS, then there exists a
constant $c > 1$ such that $\Pi$ cannot be approximated within $c$ in
polynomial time.

The \emph{constraint satisfaction problem}
(\prob{CSP})~\cite{handbook-cp} and its optimisation variants have
played an important role in research on approximability.  For example,
it is well known that the famous PCP theorem has an equivalent reformulation
in terms of inapproximability of some
CSP~\cite{Arora:etal:jacm98,dinur-pcp,inapprox-combopt}, and the
recent combinatorial proof of this theorem~\cite{dinur-pcp} deals
entirely with CSPs.  Other important examples include H\r{a}stad's
first optimal inapproximability results~\cite{optimalinapp} and the
work around the unique games conjecture of
Khot~\cite{near-opt-ugc,ugc,optinapp-2csp}.

We will focus on a class of optimisation problems known as the
\emph{maximum constraint satisfaction problem} (\prob{Max CSP}). The
most well-known examples in this class probably are \prob{Max $k$-SAT}
and \prob{Max Cut}.

We are now ready to formally define our problem.  Let $D$ be a
finite set. A subset $R \subseteq D^n$ is a \emph{relation} and $n$
is the \emph{arity} of $R$. Let $R_D^{(k)}$ be the set of all
$k$-ary relations on $D$ and let $R_D = \cup_{i = 1}^\infty
R_D^{(i)}$. A \emph{constraint language} is a finite subset of
$R_D$.

\begin{definition}[\prob{CSP$(\Gamma)$}]
  The constraint satisfaction problem over the constraint language
  $\Gamma$, denoted \prob{CSP$(\Gamma)$}, is defined to be the
  decision problem with instance $(V, C)$, where
  \begin{itemize}
  \item $V$ is a set of variables, and
  \item $C$ is a collection of constraints $\{C_1, \ldots, C_q\}$, in
    which each constraint $C_i$ is a pair $(R_i, \tup{s_i})$ with
    $\tup{s_i}$ a list of variables of length $n_i$, called the
    constraint scope, and $R_i \in \Gamma$ is an $n_i$-ary relation in
    $R_D$, called the constraint relation.
  \end{itemize}
  The question is whether there exists an assignment $s : V
  \rightarrow D$ which satisfies all constraints in $C$ or not. A
  constraint $(R_i, (v_{i_1}, v_{i_2}, \ldots, v_{i_{n_i}})) \in C$ is
  \emph{satisfied} by an assignment $s$ if the image of the constraint
  scope is a member of the constraint relation, i.e., if $(s(v_{i_1}),
  s(v_{i_2}), \ldots, s(v_{i_{n_i}})) \in R_i$.
\end{definition}

Many combinatorial problems are subsumed by the \prob{CSP} framework;
examples include problems in graph theory~\cite{Hell:Nesetril:GH},
combinatorial optimisation~\cite{cspapprox}, and computational
learning~\cite{Dalmau:Jeavons:tcs2003}. We refer the reader
to~\cite{complexity-lang} for an introduction to this framework.

For a constraint language $\Gamma \subseteq R_D$, the optimisation
problem \prob{Max CSP$(\Gamma)$} is defined as follows:
\begin{definition}[\prob{Max CSP$(\Gamma)$}] \label{def:maxCSP}
  \prob{Max CSP$(\Gamma)$} is defined to be the optimisation problem
  with
  \begin{description}
  \item[Instance:] An instance $(V, C)$ of \prob{CSP$(\Gamma)$}.

  \item[Solution:] An assignment $s : V \rightarrow D$ to the
    variables.

  \item[Measure:] Number of constraints in $C$ satisfied by the assignment $s$.
  \end{description}
\end{definition}

We use \emph{collections} of constraints instead of just sets of
constraints as we do not have any weights in our definition of
\prob{Max CSP}. Some of our reductions will make use of copies of
one constraint to simulate something which resembles weights. We
choose to use collections instead of weights because bounded occurrence
restrictions are easier to explain in the collection setting. Note
that we prove our hardness results in this restricted setting
without weights and with a constant bound on the number of
occurrences of each variable.

Throughout this report, \prob{Max CSP$(\Gamma)$-$k$} will denote the
problem \prob{Max CSP$(\Gamma)$} restricted to instances with the
number of occurrences of each variable is bounded by $k$.  For our
hardness results we will write that \prob{Max CSP$(\Gamma)$-$B$} is
hard (in some sense) to denote that there is a $k$ such that \prob{Max
CSP$(\Gamma)$-$k$} is hard in this sense.  If a variable occurs $t$
times in a constraint which appears $s$ times in an instance, then
this would contribute $t \cdot s$ to the number of occurrences of that
variable in the instance.

\medskip

\begin{example} \label{ex:maxkcol}
  Given a (multi)graph $G=(V,E)$, the \prob{Max $k$-Cut}
  problem, $k \geq 2$, is the problem of maximising $|E'|$, $E'\sse
  E$, such that the subgraph $G'=(V,E')$ is $k$-colourable.
  For $k=2$, this problem is known simply as \prob{Max Cut}. The problem
  \prob{Max $k$-Cut} is known to be \APX-complete for any $k$ (it is Problem GT33
  in~\cite{Ausiello99:complexity}), and so has no PTAS.
  Let $N_k$ denote the binary
  disequality relation on $\{0, 1, \ldots, k-1\}$, $k \geq 2$, that
  is, $(x,y) \in N_k \iff x \neq y$. To see that \prob{Max
    CSP$(\{N_k\})$} is precisely \prob{Max $k$-Cut}, think of vertices
  of a given graph as of variables, and apply the relation to every
  pair of variables $x,y$ such that $(x,y)$ is an edge in the graph,
  with the corresponding multiplicity.
\end{example}

\medskip

Most of the early results on the complexity and approximability of
\prob{CSP} and \prob{Max CSP} were restricted to the Boolean case,
i.e. when $D=\{0,1\}$. For instance, Schaefer~\cite{gen-sat}
characterised the complexity of \prob{CSP$(\Gamma)$} for all $\Gamma$
over the Boolean domain, the approximability of \prob{Max
CSP$(\Gamma)$} for all $\Gamma$ over the Boolean domain have also been
determined~\cite{maxcsp-boolean,boolean-csp,cspapprox}.
It has been noted that the study of non-Boolean \prob{CSP} seems to
give a better understanding (when compared with Boolean \prob{CSP})
of what makes \prob{CSP} easy or hard: it appears that many
observations made on Boolean \prob{CSP} are special cases of more
general phenomena. Recently, there has been some major progress in
the understanding of non-Boolean \prob{CSP}: Bulatov has provided a
complete complexity classification of the \prob{CSP} problem over a
three-element domain \cite{CSP-3el} and also given a classification
of constraint languages that contain all unary
relations~\cite{Bulatov:lics2003}. Corresponding results for
\prob{Max CSP} have been obtained by
Jonsson~et al.~\cite{maxCSP-three} and
Deineko~et al.~\cite{maxCSP-fixed}.

We continue this line of research by studying two aspects of
non-Boolean \prob{Max CSP}. The complexity of \prob{CSP$(\Gamma)$}
is not known for all constraint languages $\Gamma$ --- it is in fact a
major open question~\cite{csp-alg,Feder:Vardi:siamjc98}. However,
the picture is not completely unknown since the complexity of
\prob{CSP$(\Gamma)$} has been settled for many constraint
languages~\cite{CSP-3el,maltsev-simple,csp-alg,maximal-CSP,alg-struct,closure-prop}.

It has been conjectured~\cite{Feder:Vardi:siamjc98} that for all constraint
languages $\Gamma$, \prob{CSP$(\Gamma)$} is either in \cc{P} or is
\cc{NP}-complete, and the refined conjecture~\cite{csp-alg} (which we refer to as the
``\prob{CSP} Conjecture'', see Section~\ref{sec:algCSP} for details) also describes the dividing line between
the two cases. Recall that if \cc{P} $\neq$ \cc{NP}, then Ladner's
Theorem~\cite{ladner} states that there are problems of intermediate
complexity, i.e., there are problems that are not in \cc{P} and not
\cc{NP}-complete. Hence, we cannot rule out a priori if there is a
constraint language $\Gamma$ such that \prob{CSP$(\Gamma)$} is
neither in \cc{P} nor \cc{NP}-complete. If the \prob{CSP} Conjecture
is true, then the family of constraint languages which are currently
known to make \prob{CSP$(\Gamma)$} \cc{NP}-complete consists of all
constraint languages with this property.

In the first part of the report we study the family of all constraint
languages $\Gamma$ such that it is currently known that
\prob{CSP$(\Gamma)$} is \cc{NP}-hard. We prove that each constraint
language in this family makes \prob{Max CSP$(\Gamma)$} have a hard gap at
location 1.  ``Hard gap at location 1'' means that it is \cc{NP}-hard
to distinguish instances of \prob{Max CSP$(\Gamma)$} in which all
constraints are satisfiable from instances where at most an
$\epsilon$-fraction of the constraints are satisfiable (for some
constant $\epsilon$ which depends on $\Gamma$)\footnotemark[5].
\footnotetext[5]{Some authors consider the promise problem {\sc Gap-CSP}$[\epsilon,1]$
where an instance is a {\sc Max CSP} instance $(V,C)$ and the
problem is to decide between the following two possibilities:
the instance is satisfiable, or
at most $\epsilon \cdot |C|$ constraints are simultaneously satisfiable.
Obviously, if a {\sc Max CSP}$(\Gamma)$ has a hard gap at location 1, then there
exists an $\epsilon$ such that the corresponding {\sc Gap-CSP}$[\epsilon,1]$
problem is {\bf NP}-hard.}
It is immediate that having a hard gap at
location 1 excludes the existence of a PTAS for
the problem. The result is proved in Section~\ref{sec:gap1}
(Theorem~\ref{th:maxCSP-unary}) and can be stated as follows (we refer
the reader to Section~\ref{sec:gap1} for an introduction to the
algebraic terminology).

\medskip

\noindent \textbf{Result A (Hardness at gap location 1 for \prob{Max
CSP}):} Let $\Gamma$ be a core constraint language and let
$\mathcal{A}$ be the algebra associated with $\Gamma$. If
$\mathcal{A}^c$ has a factor which only has projections as term
operations, then \prob{Max CSP$(\Gamma)$} has a hard gap at location
1. The result holds even if we have a constant bound on the number of
variable occurrences.

\medskip

A similar result holds when the problem is restricted to satisfiable
instances only (Corollary~\ref{satcorollary}).
We note that for the Boolean domain and without the bounded
occurrence restriction, \textbf{Result~A} follows from a result of
Khanna~et al.~\cite[Theorem~5.14]{cspapprox}.

Interestingly, the PCP
theorem is equivalent to the fact that, for {\em some} constraint
language $\Gamma$ over some finite set $D$, \prob{Max CSP$(\Gamma)$}
has a hard gap at location
1~\cite{Arora:etal:jacm98,dinur-pcp,inapprox-combopt}. Clearly,
\prob{CSP$(\Gamma)$} cannot be polynomial time solvable in this
case. For any constraint language $\Gamma$ satisfying the condition
from \textbf{Result~A}, the problem \prob{CSP$(\Gamma)$} is known to
be \cc{NP}-complete, and it has been conjectured~\cite{csp-alg} that
\prob{CSP$(\Gamma)$} is in \cc{P} for all other (core) constraint
languages (see~Section~\ref{sec:algCSP}). Thus, \textbf{Result~A}
states that \prob{Max CSP$(\Gamma)$} has a hard gap at location 1
for {\em any} constraint language such that \prob{CSP$(\Gamma)$} is
known to be \cc{NP}-complete. Moreover, if the above mentioned
conjecture holds, then \prob{Max CSP$(\Gamma)$} has a hard gap at
location 1 whenever \prob{CSP$(\Gamma)$} is not in \cc{P}.
Another equivalent reformulation of the PCP theorem states that the
problem \prob{Max 3-SAT} has a hard gap at location
1~\cite{Arora:etal:jacm98,inapprox-combopt}, and our proof of
consists of a gap preserving reduction from this
problem.

We also show how \textbf{Result~A} can be used for partially
answering two open questions. The first one was posed by
Engebretsen~et al.~\cite{inappgroupeqns} and concerns the
approximability of a finite group problem while the second
was posed by Feder~et al.~\cite{list-hom} and concerns the
hardness of \prob{CSP$(\Gamma)$} with the restriction that each
variable occurs at most a constant number of times, under the
assumption that \prob{CSP$(\Gamma)$} is \cc{NP}-complete.

The techniques we use to prove \textbf{Result~A} are partly from
universal algebra
--- such methods have previously proved to be very useful when
studying the complexity of
\prob{CSP}~\cite{CSP-3el,maltsev-simple,csp-alg,maximal-CSP,alg-struct,closure-prop}.
However, they have not previously been used to prove hardness
results for \prob{Max CSP}.
Typically, the algebraic
combinatorial property of supermodularity and the technique of
strict implementations (see Section~\ref{sec:impl} and
Section~\ref{sec:red-tech}, respectively, for the definitions) 
have been used when proving results of this kind.
This is, for instance, the case in
\cite{maxCSP-fixed,maxCSP-three} where it is proved that
for any constraint language $\Gamma$ over
$D$ such that $\Gamma$ includes the set $C_D = \{ \{(x)\} \mid x \in
D\}$ or $D$ has at most three elements, \prob{Max CSP$(\Gamma)$} is
either solvable (to optimality) in polynomial time or else
\cc{APX}-hard (in which case it cannot have a PTAS unless \cc{P} =
\cc{NP}). 

The second aspect of \prob{Max CSP} we study is the case when the
constraint language consists of a single relation; this class of
problems contains some of the most well-studied examples of \prob{Max
CSP} such as \prob{Max Cut} and \prob{Max DiCut}. Before we state this
result we need to define some terminology. For a relation $R$ we shall
say that $R$ is \emph{$d$-valid} if $(d, \ldots, d) \in R$ for $d \in
D$ and simply \emph{valid} if $R$ is $d$-valid for some $d \in
D$. Informally, our main result on this problem is (see
Theorem~\ref{th:single} for the formal statement):

\medskip

\noindent \textbf{Result B (Approximability of single relation
\prob{Max CSP}):} Let $R$ be a relation in $R_D$. If $R$ is empty or
valid, then \prob{Max CSP$(\{R\})$} is trivial.  Otherwise, there
exists a constant $c$ (which depends on $R$) such that it is
\cc{NP}-hard to approximate \prob{Max CSP$(\{R\})$} within $c$.  The
result holds even if we have a constant bound on the number of
variable occurrences.

\medskip

Jonsson and Krokhin~\cite{max-subdi} have proved that every problem
\prob{Max CSP$(\{R\})$} with $R$ neither empty nor valid is
\cc{NP}-hard. We strengthen their theorem by proving \textbf{Result~B}; to do
so, we make use of \textbf{Result~A}. Note that for some \prob{Max
CSP} problems such approximation hardness results are known, e.g.,
\prob{Max Cut} and \prob{Max DiCut}.  Our result extends those
hardness results to all possible relations.  As a corollary to this
result we get that if \prob{Max CSP$(\{R\})$} is \cc{NP}-hard, then
there is no PTAS for \prob{Max CSP$(\{R\})$} (assuming \cc{P} $\neq$
\cc{NP}). Note that a full complexity classification of
single-relation \prob{CSP} is not known. In fact, Feder and
Vardi~\cite{Feder:Vardi:siamjc98} have proved that by providing such a
classification, one has also classified the \prob{CSP} problem for
\emph{all} constraint languages.

In Section~\ref{sec:impl} we strengthen two earlier published results
on \prob{Max CSP} in various ways --- the common theme is that
\textbf{Result B} is used to obtain the results. The reader is
referred to Section~\ref{sec:impl} for definitions of the relevant
concepts used below to describe the results. We prove that unless
\cc{P} = \cc{NP}, constraint languages which contain all at most
binary 2-monotone relations on a partially ordered set which is not a
lattice order give rise to a \prob{Max CSP} problem which is hard to
approximate. The other result states that constraint languages which
contain all at most binary 2-monotone relations on a lattice and is
not supermodular on the lattice make \prob{Max CSP} hard to
approximate. These two problems have previously been studied by
Krokhin and Larose~\cite{maxcsp-diamonds-tr,maxcsp-diamonds}.

Here is an overview of the report: In Section~\ref{sec:prel} we
define some concepts we need. Section~\ref{sec:gap1} contains the
proof for our first result and Section~\ref{sec:single} contains the
proof of our second result. In Section~\ref{sec:impl} we strengthen
some earlier published results on \prob{Max CSP} as mentioned
above. We give a few concluding remarks in Section~\ref{sec:concl}.

\section{Preliminaries} \label{sec:prel}
A \emph{combinatorial optimisation problem} is defined over a set
of \emph{instances} (admissible input data); each instance $\I$
has a set $\textsf{sol}(\I)$ of \emph{feasible solutions}
associated with it, and each solution $y \in \textsf{sol}(\I)$ has
a value $m(\I, y)$. The objective is, given an instance $\I$, to
find a feasible solution of optimum value. The optimal value is
the largest one for \emph{maximisation} problems and the smallest
one for \emph{minimisation} problems.
A combinatorial optimisation problem
is said to be an $\NP$ optimisation ($\NPO$) problem if its
instances and solutions can be recognised in polynomial time, the
solutions are polynomially-bounded in the input size, and the
objective function can be computed in polynomial time (see,
e.g.,~\cite{Ausiello99:complexity}).

\begin{definition}[Performance ratio]
  A solution $s \in {\sf sol}(\I)$ to an instance $\I$ of an \cc{NPO}
  problem $\Pi$ is $r$-approximate if
\[
\max{\left\{ \frac{m(\I, s)}{\opt(\I)},\frac{\opt(\I)}{m(\I, s)} \right\} }\le r,
\]
where $\opt(\I)$ is the optimal value for a solution to $\I$. An
approximation algorithm for an \cc{NPO} problem $\Pi$ has
\emph{performance ratio} $R(n)$ if, given any instance $\I$
of $\Pi$ with $|\I|=n$, it outputs an $R(n)$-approximate
solution.
\end{definition}

\cc{PO} is the class of \cc{NPO} problems that can be solved (to
optimality) in polynomial time. An \cc{NPO} problem $\Pi$ is in the
class \cc{APX} if there is a polynomial time approximation algorithm
for $\Pi$ whose performance ratio is bounded by a constant. The
following result is well-known (see, e.g.,~\cite[Proposition~2.3]{supmod-maxCSP}).

\begin{lemma}\label{lem:in-apx}
  Let $D$ be a finite set. For every constraint language $\Gamma
  \subseteq R_D$, \prob{Max CSP$(\Gamma)$} belongs to \cc{APX}.
  Moreover, if $a$ is the maximum arity of any relation in $\Gamma$,
  then there is a polynomial time algorithm which, for every instance
  $\I = (V, C)$ of \prob{Max CSP$(\Gamma)$}, produces a solution
  satisfying at least $\frac{|C|}{|D|^a}$ constraints.
\end{lemma}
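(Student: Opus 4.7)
The plan is to use a derandomized uniform random assignment argument. First, I would analyze the expected number of satisfied constraints when each variable in $V$ is assigned a value drawn independently and uniformly at random from $D$. For a single constraint $(R_i, \tup{s_i})$ of arity $n_i$, the tuple obtained by evaluating the scope $\tup{s_i}$ under this random assignment is uniformly distributed on $D^{n_i}$, so the constraint is satisfied with probability $|R_i|/|D|^{n_i}$. Without loss of generality I may assume every relation in $\Gamma$ is nonempty -- empty relations can never contribute to the measure and may be discarded -- so this probability is at least $1/|D|^{n_i} \geq 1/|D|^a$. By linearity of expectation, the expected number of satisfied constraints is therefore at least $|C|/|D|^a$, and in particular some assignment attains this bound.

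Next, I would derandomize this existential statement using the classical method of conditional expectations. Order the variables arbitrarily as $v_1,\ldots,v_n$ and assign them values one at a time: at step $j$, choose the value $d \in D$ that maximises the conditional expectation of the number of satisfied constraints, given the partial assignment to $v_1,\ldots,v_{j-1}$ together with the tentative choice $v_j = d$ and uniform randomness on the remaining variables. This conditional expectation decomposes as a sum over constraints; for each constraint $C_i$ one counts the tuples in $R_i$ agreeing with the fixed positions of $\tup{s_i}$ and divides by $|D|$ raised to the number of unfixed positions. Since the arities are at most $a$ and $\Gamma$ is fixed, each such evaluation takes constant time, and the total running time is polynomial in $|\I|$. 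The standard conditional-expectation argument guarantees that the resulting deterministic assignment satisfies at least $|C|/|D|^a$ constraints.

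Finally, I would derive membership in \APX. Since $\opt(\I) \leq |C|$ trivially, the assignment produced above has value $m(\I,s) \geq |C|/|D|^a \geq \opt(\I)/|D|^a$, so the performance ratio is bounded by the constant $|D|^a$, which depends only on $\Gamma$. There is no real obstacle here: the argument is the uniform $D$-ary analogue of the classical random-assignment heuristic for Boolean {\sc Max $k$-SAT}; the only minor care required is the preliminary removal of any empty relations (or equivalently, applying the bound to the sub-instance of nonempty constraints, which also upper-bounds the optimum).
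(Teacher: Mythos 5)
Your proof is correct, and it uses the standard argument one would expect. Note that the paper itself does not prove this lemma; it states it as ``well-known'' and points to Proposition~2.3 of the cited reference on supermodular functions and \prob{Max CSP}, where the proof is indeed the random-assignment bound (a uniformly random assignment satisfies each nonempty constraint with probability at least $1/|D|^a$, hence at least $|C|/|D|^a$ constraints in expectation) followed by derandomisation via the method of conditional expectations, exactly as you describe. Your accounting of the conditional-expectation computation (counting consistent tuples for each constraint, constant work per constraint because the maximum arity $a$ and $|D|$ are fixed) and your concluding observation that $\opt(\I)\le|C|$ gives the constant performance ratio $|D|^a$ are all in order.

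One small remark on the empty-relation caveat you flag at the end: if $\Gamma$ were allowed to contain the empty relation, then the stated lower bound of $|C|/|D|^a$ on the number of satisfied constraints would literally fail (an instance using only empty constraints satisfies none), while the \APX{} membership would survive in the degenerate sense that every assignment is optimal. Your ``without loss of generality'' is therefore really the implicit convention, as you note, and the right reading of the lemma. Since the bound is only ever invoked in the paper for nonempty relations (e.g., for the edge relation of a core digraph in Lemma~\ref{lem:orbit-rest}), this causes no downstream problems.
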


\begin{definition}[Hard to approximate] \label{def:hardtoapproximate}
  We say that a problem $\Pi$ is \emph{hard to approximate} if there
  exists a constant $c$ such that, $\Pi$ is \cc{NP}-hard to
  approximate within $c$ (that is, the existence of a polynomial-time
  approximation algorithm for $\Pi$ with performance ratio $c$ implies
  \cc{P} = \cc{NP}).
\end{definition}

The following notion has been defined in a more general setting by
Petrank~\cite{gap-location}.
\begin{definition}[Hard gap at location $\alpha$]
  \prob{Max CSP$(\Gamma)$} has a \emph{hard gap at location $\alpha
  \leq 1$} if there exists a constant $\epsilon < \alpha$ and a
  polynomial-time reduction from an \cc{NP}-complete problem $\Pi$ to
  \prob{Max CSP$(\Gamma)$} such that,
  \begin{itemize}
    \item \textsc{Yes} instances of $\Pi$ are mapped to instances $\I
  = (V, C)$ such that $\opt(\I) \geq
  \alpha |C|$, and
  \item \textsc{No} instances of $\Pi$ are mapped to instances $\I
  = (V, C)$ such that $\opt(\I) \leq \epsilon |C|$.
  \end{itemize}
\end{definition}

Note that if a problem $\Pi$ has a hard gap at location $\alpha$ (for
any $\alpha$) then $\Pi$ is hard to approximate. This simple
observation has been used to prove inapproximability results for a
large number of optimisation problems. See,
e.g.,~\cite{hardness-of-approx,Ausiello99:complexity,inapprox-combopt}
for surveys on inapproximability results and the related PCP theory.

\subsection{Approximation Preserving Reductions}
To prove our approximation hardness results we use
\emph{$AP$-reductions}. This type of reduction is most commonly used
to define completeness for certain classes of optimisation problems
(i.e., \cc{APX}). However, no \cc{APX}-hardness results are actually
proven in this report since we concentrate on proving that problems
are hard to approximate (in the sense of
Definition~\ref{def:hardtoapproximate}). We will frequently use
$AP$-reductions anyway and this is justified by
Lemma~\ref{lem:ap-red} below. Our definition of $AP$-reductions
follows~\cite{boolean-csp,cspapprox}.

\begin{definition}[$AP$-reduction]
  Given two \cc{NPO} problems $\Pi_1$ and $\Pi_2$ an
  \emph{$AP$-reduction} from $\Pi_1$ to $\Pi_2$ is a triple $(F, G,
  \alpha)$ such that,
\begin{itemize}
\item $F$ and $G$ are polynomial-time computable functions and
$\alpha$ is a constant;

\item for any instance $\I$ of $\Pi_1$, $F(\I)$ is an
instance of $\Pi_2$;

\item for any instance $\I$ of $\Pi_1$, and any feasible
solution $s'$ of $F(\I)$, $G(\I,s')$ is a feasible solution of
$\I$;

\item for any instance $\I$ of $\Pi_1$, and any $r\ge 1$, if
$s'$ is an $r$-approximate solution of $F(\I)$ then $G(\I,s')$ is
an $(1+(r-1)\alpha+o(1))$-approximate solution of $\I$ where the
$o$-notation is with respect to $|\I|$.
\end{itemize}

If such a triple exist we say that $\Pi_1$ is $AP$-reducible to
$\Pi_2$. We use the notation $\Pi_1 \leq_{AP} \Pi_2$ to denote this
fact.
\end{definition}

It is a well-known fact (see, e.g., Section~8.2.1
in~\cite{Ausiello99:complexity}) that $AP$-reductions compose. The
following simple lemma makes $AP$-reductions useful to us.
\begin{lemma} \label{lem:ap-red}
If $\Pi_1$ $\leq_{AP}$ $\Pi_2$ and $\Pi_1$ is hard to approximate,
then $\Pi_2$ is hard to approximate.
\end{lemma}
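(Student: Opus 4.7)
The plan is a straightforward contrapositive argument: assume for contradiction a polynomial-time $c_2$-approximation algorithm for $\Pi_2$ and use the $AP$-reduction $(F, G, \alpha)$ to derive a polynomial-time $c_1$-approximation algorithm for $\Pi_1$, where $c_1 > 1$ is a constant such that approximating $\Pi_1$ within $c_1$ is \cc{NP}-hard (such a $c_1$ exists by the hypothesis that $\Pi_1$ is hard to approximate).

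The target ratio $c_2$ for $\Pi_2$ is dictated by the AP-reduction: a $c_2$-approximate solution of $F(\I)$ is mapped by $G$ to a $(1 + (c_2 - 1)\alpha + o(1))$-approximate solution of $\I$, so I would pick any $c_2 > 1$ satisfying $1 + (c_2 - 1)\alpha < c_1$. For instance $c_2 = 1 + \frac{c_1 - 1}{2\alpha}$ works when $\alpha > 0$, and any $c_2 > 1$ works when $\alpha = 0$. The composite algorithm for $\Pi_1$ is then: on input $\I$, compute $F(\I)$, invoke the hypothetical $c_2$-approximation algorithm on $F(\I)$ to obtain a solution $s'$, and output $G(\I, s')$. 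Each step runs in polynomial time by the definition of $AP$-reduction and by assumption, and $G(\I, s')$ is guaranteed to be a feasible solution of $\I$.

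The only subtlety is absorbing the $o(1)$ term in the performance guarantee. Since the $o$-notation is with respect to $|\I|$, there exists a constant $N$ such that $o(1) < c_1 - (1 + (c_2 - 1)\alpha)$ whenever $|\I| \geq N$; for these instances the composite delivers a $c_1$-approximation. For the finitely many instances of size below $N$, an optimal solution can be obtained by brute force in $O(1)$ time (equivalently, hard-coded into the algorithm). Hence the composite is a polynomial-time $c_1$-approximation for $\Pi_1$, contradicting the choice of $c_1$. No real obstacle is expected; the argument is essentially a bookkeeping exercise around the trade-off between $c_1, c_2, \alpha$ and the vanishing $o(1)$ term.
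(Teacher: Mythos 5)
Your proof is correct and follows essentially the same route as the paper: both treat the claim contrapositively, pick the target ratio for $\Pi_2$ so that $1 + (r-1)\alpha$ falls strictly below the known hardness threshold for $\Pi_1$, and use the composite algorithm $G(\I, \cdot) \circ (\text{alg for } \Pi_2) \circ F$. The only difference is that you make explicit how to absorb the $o(1)$ term (brute force on the finitely many small instances), whereas the paper simply notes that the bound holds once instances are large enough; your handling is a minor but genuine improvement in rigor.
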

\begin{proof}
  Let $c > 1$ be the constant such that it is \cc{NP}-hard to
  approximate $\Pi_1$ within $c$. Let $(F, G, \alpha)$ be the
  $AP$-reduction which reduces $\Pi_1$ to $\Pi_2$. We will prove that
  it is \cc{NP}-hard to approximate $\Pi_2$ within
  \[
  r = \frac{1}{\alpha} (c - 1) + 1 - \epsilon'
  \]
  for any $\epsilon' > 0$.

  Let $\I_1$ be an instance of $\Pi_1$. Then, $\I_2 = F(\I_1)$ is an
  instance of $\Pi_2$. Given an $r$-approximate solution to $\I_2$ we
  can construct an $(1 + (r-1)\alpha + o(1))$-approximate solution to
  $\I_1$ using $G$. Hence, we get an $1+(r-1)\alpha + o(1) = c -
  \alpha \epsilon' + o(1)$ approximate solution to $\I_1$, and when
  the instances are large enough this is strictly smaller than $c$. As
  $c > 1$ we can choose $\epsilon'$ such that $\epsilon' > 0$ and $c -
  \alpha \epsilon' > 1$.
\end{proof}



\subsection{Reduction Techniques} \label{sec:red-tech}
The basic reduction technique in our approximation hardness proofs is
based on \emph{strict implementations} and \emph{perfect
  implementations}. Those techniques have been used before when
studying \prob{Max CSP} and other \prob{CSP}-related
problems~\cite{boolean-csp,maxCSP-three,cspapprox}.

\begin{definition}[Implementation]
  A collection of constraints $C_1, \ldots, C_m$ over a tuple of
  variables $\tup{x} = (x_1, \ldots, x_p)$ called \emph{primary
    variables} and $\tup{y} = (y_1, \ldots, y_q)$ called
  \emph{auxiliary variables} is an \emph{$\alpha$-implementation} of
  the $p$-ary relation $R$ for a positive integer $\alpha$ if the
  following conditions are satisfied:
  \begin{enumerate}
  \item For any assignment to $\tup{x}$ and $\tup{y}$, at most $\alpha$
    constraints from $C_1, \ldots, C_m$ are satisfied.
  \item For any $\tup{x}$ such that $\tup{x} \in R$, there exists an
    assignment to $\tup{y}$ such that exactly $\alpha$ constraints are
    satisfied.
  \item For any $\tup{x}, \tup{y}$ such that $\tup{x} \not \in R$, at
    most $(\alpha - 1)$ constraints are satisfied.
  \end{enumerate}
\end{definition}

\begin{definition}[Strict/Perfect Implementation] \label{def:impl}
  An $\alpha$-implementation is a \emph{strict implementation} if for
  every $\tup{x}$ such that $\tup{x} \not \in R$ there exists
  $\tup{y}$ such that exactly $(\alpha - 1)$ constraints are
  satisfied. An $\alpha$-implementation (not necessarily strict) is a
  \emph{perfect implementation} if $\alpha = m$.
\end{definition}

It will sometimes be convenient for us to view relations as predicates
instead. In this case an $n$-ary relation $R$ over the domain $D$ is a
function $r : D^n \rightarrow \{0,1\}$ such that $r(\tup{x}) = 1 \iff
\tup{x} \in R$. Most of the time we will use predicates when we are
dealing with strict implementations and relations when we are working
with perfect implementations, because perfect implementations are
naturally written as a conjunction of constraints whereas strict
implementations may naturally be seen as a sum of predicates. We will
write strict $\alpha$-implementations in the following form
\[
g(\tup{x}) + (\alpha - 1) = \max_{\tup{y}}{\sum_{i=1}^{m}{g_i(\tup{x}_i)}}
\]
where $\tup{x} = (x_1, \ldots, x_p)$ are the primary variables,
$\tup{y} = (y_1, \ldots, y_q)$ are the auxiliary variables,
$g(\tup{x})$ is the predicate which is implemented, and each
$\tup{x}_i$ is a tuple of variables from $\tup{x}$ and $\tup{y}$.

We say that a collection of relations $\Gamma$ \emph{strictly
  (perfectly) implements} a relation $R$ if, for some $\alpha \in \Z$,
there exists a strict (perfect) $\alpha$-implementation of $R$ using
relations only from $\Gamma$. It is not difficult to show that if
$R$ can be obtained from $\Gamma$ by a series of strict (perfect)
implementations, then it can also be obtained by a single strict
(perfect) implementation (for the Boolean case, this is shown
in~\cite[Lemma~5.8]{boolean-csp}).

The following lemma indicates the importance of strict
implementations for \prob{Max CSP}. It was first proved for the
Boolean case, but without the assumption on bounded occurrences,
in~\cite[Lemma~5.17]{boolean-csp}. A proof of this lemma in our
setting can be found in~\cite[Lemma~3.4]{maxCSP-fixed} (the lemma is
stated in a slightly different form but the proof establishes the
required $AP$-reduction).

\begin{lemma} \label{lem:strict}
  If $\Gamma$ strictly implements a predicate $f$, then, for any
  integer $k$, there is an integer $k'$ such that \prob{Max
    CSP$(\Gamma \cup \{f\})$-$k$} $\leq_{AP}$ \prob{Max
    CSP$(\Gamma)$-$k'$}. 
\end{lemma}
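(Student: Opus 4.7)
The plan is to construct an explicit $AP$-reduction $(F,G,\alpha_{AP})$ witnessing the claim. Fix a strict $\alpha$-implementation
\[
f(\tup{x}) + (\alpha-1) \;=\; \max_{\tup{y}}\sum_{i=1}^{m} g_i(\tup{x}_i),
\]
where every $g_i \in \Gamma$, and fix the bound $k$.

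Construction of $F$ and $G$. Given an instance $\I = (V,C)$ of $\MCSP{\Gamma\cup\{f\}}$-$k$, let $C = C_\Gamma \sqcup C_f$, where $C_f$ is the multiset of $f$-constraints. For each $f$-constraint in $C_f$, introduce a fresh copy of the auxiliary variables $\tup{y}$ (disjoint from all other variables) and replace that constraint by the $m$ $\Gamma$-constraints $g_1(\tup{x}_1),\ldots,g_m(\tup{x}_m)$ obtained from the implementation. Define $F(\I)$ to be the resulting instance over $\Gamma$; note $|F(\I)| = |C_\Gamma| + m\cdot|C_f|$. Define $G(\I,s')$ to be the restriction of $s'$ to $V$.

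Bounded occurrences. Each original variable of $\I$ has at most $k$ occurrences; each occurrence inside an $f$-constraint is replaced by at most $M := \sum_i \mathrm{arity}(g_i)$ occurrences in the new $\Gamma$-constraints, while occurrences outside are untouched. Each auxiliary variable is fresh to its copy of the implementation and appears in at most $m$ new constraints. Hence it suffices to take $k' := k\cdot M + m$, a constant depending only on $\Gamma$, $f$, and $k$.

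Measure analysis. For any assignment $s$ to $V$ and any extension $s'$ to the auxiliary variables, the defining properties of a strict $\alpha$-implementation give, for each $f$-constraint with primary tuple $\tup{x}_j$:
\[
\sum_{i=1}^m g_i(\tup{x}_{j,i}) \;\le\; \begin{cases}\alpha & \text{if } s(\tup{x}_j) \in f,\\ \alpha-1 & \text{if } s(\tup{x}_j)\notin f,\end{cases}
\]
where in both cases the upper bound is attained by some choice of the auxiliary variables (using item~3 of the strict implementation for the second case). Summing,
\[
m(F(\I),s') \;\le\; m(\I,s) + (\alpha-1)\,|C_f|,
\]
with equality achievable by choosing the auxiliaries optimally and independently for each implementation. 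In particular $\mathrm{opt}(F(\I)) = \mathrm{opt}(\I) + (\alpha-1)|C_f|$, and for any $s' = G(\I,s')\cup\text{aux values}$ we have the absolute-loss bound $\mathrm{opt}(\I)-m(\I,s) \le \mathrm{opt}(F(\I))-m(F(\I),s')$.

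Ratio and choice of $\alpha_{AP}$. Suppose $s'$ is $r$-approximate for $F(\I)$. Then $\mathrm{opt}(F(\I))-m(F(\I),s')\le (r-1)\,m(F(\I),s') \le (r-1)\bigl(m(\I,s)+(\alpha-1)|C_f|\bigr)$, so
\[
\frac{\mathrm{opt}(\I)}{m(\I,s)} \;\le\; 1 + (r-1)\!\left(1+\frac{(\alpha-1)|C_f|}{m(\I,s)}\right).
\]
Lemma~\ref{lem:in-apx} applied to $\I$ gives $m(\I,s)\ge |C|/|D|^a$ for the trivial approximation, hence also $\mathrm{opt}(\I)\ge |C|/|D|^a$, and by a standard swap argument (or by simply returning the better of $s$ and the trivial $|D|^a$-approximation) we may assume $m(\I,s) \ge |C|/|D|^a \ge |C_f|/|D|^a$. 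Consequently the parenthesised factor is bounded by the constant $1 + (\alpha-1)|D|^a$, so setting $\alpha_{AP} := 1+(\alpha-1)|D|^a$ produces a genuine $AP$-reduction (the $o(1)$ absorbs any small slack from the trivial-solution swap).

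Main obstacle. The nontrivial step is the final ratio conversion: turning the additive discrepancy $(\alpha-1)|C_f|$ between $\mathrm{opt}(F(\I))$ and $\mathrm{opt}(\I)$ into a multiplicative guarantee requires a lower bound on $m(\I,s)$ that is linear in the instance size, which is exactly what Lemma~\ref{lem:in-apx} supplies. Everything else is a bookkeeping exercise using the three defining properties of a strict implementation.
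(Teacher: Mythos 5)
The paper does not actually supply a proof of this lemma; it cites \cite{boolean-csp} (Lemma~5.17, Boolean and unbounded-occurrence case) and, for the present form with domain-of-arbitrary-size and bounded occurrences, \cite{maxCSP-fixed} (Lemma~3.4). Your reconstruction is correct and almost certainly coincides with what those references do: replace every $f$-constraint by the constraints of a fixed strict $\alpha$-implementation with fresh auxiliary variables, observe that $\opt(F(\I)) = \opt(\I) + (\alpha-1)|C_f|$ (so the transformation preserves \emph{additive} loss exactly), and then convert the additive bound into a multiplicative one via the constant-factor trivial approximation of Lemma~\ref{lem:in-apx}, returning the better of the restricted solution and the trivial one so that $m(\I,s)$ is linearly bounded below. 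The occurrence-bound bookkeeping also works: each primary occurrence blows up by at most $M=\sum_i \mathrm{arity}(g_i)$ and the fresh auxiliaries occur at most $M$ times; your $k' = kM+m$ is slightly loose (you write that each auxiliary appears in ``at most $m$ new constraints,'' but the relevant count is occurrences, bounded by $M$, not $m$) yet still a valid upper bound since $kM + m \ge M$, and the lemma only asserts existence of some $k'$. The one inessential remark is that the $o(1)$ term is not actually needed --- the swap with the trivial solution produces an exact inequality, not an asymptotic one --- but including it is harmless.
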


Lemma~\ref{lem:strict} will be used as follows in our proofs of
approximation hardness: if $\Gamma'$ is a fixed finite collection of
predicates each of which can be strictly implemented by $\Gamma$, then
we can assume that $\Gamma' \subseteq \Gamma$. For example, if
$\Gamma$ contains a binary predicate $f$, then we can assume, at any
time when it is convenient, that $\Gamma$ also contains
$f'(x,y)=f(y,x)$, since this equality is a strict 1-implementation of
$f'$.

For proving hardness at gap location 1, we have the following lemma.
\begin{lemma} \label{lem:gap-impl}
  If a finite constraint language $\Gamma$ perfectly implements a
  relation $R$ and \prob{Max CSP$(\Gamma \cup \{R\})$-$k$} has a hard
  gap at location 1, then \prob{Max CSP$(\Gamma)$-$k'$} has a hard gap
  at location 1 for some integer $k'$.
\end{lemma}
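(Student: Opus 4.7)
The plan is to compose the assumed hard-gap reduction from some \cc{NP}-complete problem $\Pi$ to \prob{Max CSP$(\Gamma \cup \{R\})$-$k$} with a polynomial-time substitution that replaces each $R$-constraint by its perfect implementation over $\Gamma$. Fix a perfect $m$-implementation of $R$ consisting of constraints $C_1,\ldots,C_m$ on primary variables $\tup{x}$ and auxiliary variables $\tup{y}$. Given an instance $\I = (V,C)$ with $q_1$ $R$-constraints and $q_2$ $\Gamma$-constraints, substitute each $R$-constraint by the $m$ corresponding instantiations of $C_1,\ldots,C_m$, using a fresh copy of the auxiliary variables $\tup{y}$ for each. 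The resulting instance $\I' = (V',C')$ of \prob{Max CSP$(\Gamma)$} has $|C'| = mq_1 + q_2$.

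For the gap, in the \textsc{Yes} case let $s$ be a satisfying assignment of $\I$; every $R$-constraint has its primary tuple in $R$, so by clause~(2) of the definition of perfect implementation we may extend $s$ on each fresh auxiliary block so that all $m$ replacement constraints are satisfied, while the $q_2$ original $\Gamma$-constraints remain satisfied. Hence $\opt(\I') = |C'|$. In the \textsc{No} case assume $\opt(\I) \le \epsilon|C|$ with $\epsilon < 1$, and fix an assignment $s'$ of $\I'$ with restriction $s$ to the original variables. By clauses~(1) and~(3), each of the $q_1$ replacement groups contributes at most $m$ satisfied constraints when the primary tuple lies in $R$ under $s$, and at most $m-1$ otherwise; letting $N_R(s)$ denote the number of $R$-constraints of $\I$ satisfied by $s$, we obtain
\[
\mathrm{sat}(\I',s') \;\le\; mN_R(s) + (m-1)(q_1 - N_R(s)) + (\mathrm{sat}(\I,s) - N_R(s)) \;=\; q_1(m-1) + \mathrm{sat}(\I,s).
\]
Maximising over $s'$ and using $\mathrm{sat}(\I,s) \le \epsilon|C|$ together with $|C'| \le m|C|$ gives
\[
\frac{\opt(\I')}{|C'|} \;\le\; 1 - \frac{(1-\epsilon)|C|}{|C'|} \;\le\; 1 - \frac{1-\epsilon}{m} \;=:\; \epsilon' \;<\; 1,
\]
and composing with the given reduction yields the desired hard gap at location~$1$ for \prob{Max CSP$(\Gamma)$}.

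For the occurrence bound, let $d$ be the maximum number of times any primary variable appears across $C_1,\ldots,C_m$, and $e$ the maximum number of occurrences of any auxiliary variable in the same fixed implementation. Each occurrence of an original variable in an $R$-constraint of $\I$ becomes at most $d$ occurrences in $\I'$, an occurrence in an original $\Gamma$-constraint is preserved, and each fresh auxiliary variable appears at most $e$ times in $\I'$; hence every variable of $\I'$ occurs at most $k' := \max(dk, e)$ times, a constant depending only on $\Gamma$, $R$ and $k$. I do not anticipate any serious obstacle; the main point requiring care is clause~(3) of the perfect implementation in the \textsc{No} case, which is exactly what guarantees that a primary tuple outside $R$ loses at least one replacement constraint for every choice of auxiliary values, and is what turns the additive shortfall $(1-\epsilon)|C|$ on $\I$ into a proportional shortfall on $\I'$.
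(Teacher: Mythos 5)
Your proof is correct and follows essentially the same route as the paper's: replace each $R$-constraint by a fixed perfect implementation with fresh auxiliary variables, then use clause~(2) of the implementation definition for the \textsc{Yes} case and clause~(3) for the \textsc{No} case. The only cosmetic difference is that the paper additionally replaces each non-$R$ constraint by $N$ copies of itself so that $|C'| = N|C|$ holds exactly and every constraint of $\I$ contributes an $N$-constraint ``group'' to $\I'$, whereas you keep single copies and instead invoke the bound $|C'| \le m|C|$; both bookkeeping choices lead to the same threshold $\epsilon' = 1 - (1-\epsilon)/m$.
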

\begin{proof}
  Let $N$ be the minimum number of relations that are needed in a
  perfect implementation of $R$ using relations from $\Gamma$.

  Given an instance $\I = (V, C)$ of \prob{Max CSP$(\Gamma \cup
    \{R\})$-$k$}, we construct an instance $\I' = (V', C')$ of
  \prob{Max CSP$(\Gamma)$-$k'$} (where $k'$ will be specified below)
  as follows: we use the set $V''$ to store auxiliary variables during
  the reduction so we initially let $V''$ be the empty set. For a
  constraint $c = (Q, \tup{s}) \in C$, there are two cases to
  consider:
  \begin{enumerate}
  \item If $Q \neq R$, then add $N$ copies of $c$ to $C'$.
  \item If $Q = R$, then add the implementation of $R$ to $C'$ where
    any auxiliary variables in the implementation are replaced with
    fresh variables which are added to $V''$.
  \end{enumerate}
  Finally, let $V' = V \cup V''$.  It is clear that there exists an
  integer $k'$, independent of $\I$, such that $\I'$ is an instance of
  \prob{Max CSP$(\Gamma')$-$k'$}.

If all constraints are simultaneously satisfiable in $\I$, then all
constraints in $\I'$ are also simultaneously satisfiable.  On the
other hand, if $\opt(\I) \leq \epsilon |C|$ then
\begin{align}
\opt(\I') &\leq \epsilon N |C| + (1 - \epsilon) (N - 1) |C|           \notag \\
          &=    \left( \epsilon + (1 - \epsilon) (1 - 1/N) \right) |C'| . \notag
\end{align}

The inequality holds because each constraint in $\I$ introduces a
group of $N$ constraints in $\I'$ and, as $\opt(\I) \leq \epsilon |C|$,
at most $\epsilon |C|$ such groups are completely satisfied. In all
other groups (there are $(1 - \epsilon)|C|$ such groups) at least one
constraint is not satisfied. We conclude that \prob{Max
  CSP$(\Gamma)$-$k'$} has a hard gap at location 1.
\end{proof}

An important concept is that of a \emph{core}. To define cores
formally we need retractions. A \emph{retraction} of a constraint
language $\Gamma \subseteq R_D$ is a function $\pi : D \rightarrow D$
such that if $D'$ is the image of $\pi$ then $\pi(x) = x$ for all $x
\in D'$, furthermore for every $R \in \Gamma$ we have $(\pi(t_1),
\ldots, \pi(t_n)) \in R$ for all $(t_1, \ldots, t_n) \in R$. We will
say that $\Gamma$ is a \emph{core} if the only retraction of $\Gamma$
is the identity function.  Given a relation $R \in R_D^{(k)}$ and a
subset $X$ of $D$ we define the \emph{restriction of $R$ onto $X$} as
follows: $\proj{R}{X} = \{ \tup{x} \in X^k \mid \tup{x} \in R \}$. For
a set of relations $\Gamma$ we define $\proj{\Gamma}{X} = \{
\proj{R}{X} \mid R \in \Gamma \}$. If $\pi$ is a retraction of
$\Gamma$ with image $D'$, chosen such that $|D'|$ is minimal, then a
core of $\Gamma$ is the set $\proj{\Gamma}{D'}$. For constraint
language $\Gamma, \Gamma'$ we say that \emph{$\Gamma$ retracts to
$\Gamma'$} if there is a retraction $\pi$ of $\Gamma$ such that
$\pi(\Gamma) = \Gamma'$.

The intuition here is that if $\Gamma$ is not a core, then it has a
non-injective retraction $\pi$, which implies that, for every
assignment $s$, there is another assignment $\pi s$ that satisfies all
constraints satisfied by $s$ and uses only a restricted set of values.
Consequently the problem is equivalent to a problem over this smaller
set. As in the case of graphs, all cores of $\Gamma$ are isomorphic,
so one can speak about \emph{the} core of $\Gamma$.

\medskip

\begin{example}
  Every constraint language $\Gamma$ containing all unary relations is
  a core because the only retraction of the set of unary relations is
  the identity operation.
\end{example}

\medskip

The following simple lemma connects cores with non-approximability.

\begin{lemma} \label{lem:core-gap1}
  If $\Gamma'$ is the core of $\Gamma$, then, for any $k$, \prob{Max
    CSP$(\Gamma')$-$k$} has a hard gap at location 1 if and only if
  \prob{Max CSP$(\Gamma)$-$k$} has a hard gap at location 1.
\end{lemma}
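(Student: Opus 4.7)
The plan is to prove both directions by exhibiting a natural gap-preserving reduction that just rewrites each constraint relation $R$ into its restriction $\proj{R}{D'}$ (or vice versa), where $D'$ is the image of a minimal retraction $\pi\colon D\to D$ of $\Gamma$, so that $\Gamma'=\proj{\Gamma}{D'}$ (up to isomorphism). Because these rewrites do not change the variable set nor the collection of scopes, the bound $k$ on variable occurrences and the total number $|C|$ of constraints are preserved exactly, so the same $\epsilon$ and $\alpha=1$ work.

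First I would prove the easy direction: given an instance $\I=(V,C)$ of \prob{Max CSP$(\Gamma')$-$k$}, form $\I^\star=(V,C^\star)$ of \prob{Max CSP$(\Gamma)$-$k$} by replacing every constraint $(\proj{R}{D'},\tup s)$ with $(R,\tup s)$. Any assignment $V\to D'$ satisfies the same set of constraints in both instances, while for any $s\colon V\to D$, the composition $\pi\circ s\colon V\to D'$ satisfies every constraint of $\I^\star$ that $s$ satisfies (this uses exactly the defining property of a retraction of $\Gamma$). Hence $\opt(\I^\star)=\opt(\I)$, so yes/no instances are mapped to yes/no instances with the same gap constant $\epsilon$.

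For the other direction, given $\I=(V,C)$ of \prob{Max CSP$(\Gamma)$-$k$}, form $\I^\star=(V,C^\star)$ of \prob{Max CSP$(\Gamma')$-$k$} by replacing each $(R,\tup s)$ with $(\proj{R}{D'},\tup s)$. If $\opt(\I)=|C|$, take an $s$ satisfying all constraints; then $\pi\circ s$ is an assignment into $D'$ satisfying every constraint of $\I$, hence every constraint of $\I^\star$. Conversely, any assignment $s'\colon V\to D'$ for $\I^\star$ is also an assignment $V\to D$ for $\I$ satisfying exactly the same constraints, so $\opt(\I^\star)\le \opt(\I)$; and any $s\colon V\to D$ for $\I$ gives $\pi\circ s\colon V\to D'$ satisfying at least as many constraints in $\I$, hence in $\I^\star$, so $\opt(\I^\star)\ge \opt(\I)$. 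Therefore $\opt(\I^\star)=\opt(\I)$, which transfers a hard gap at location $1$ with the same $\epsilon$.

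There is no real obstacle: the only thing to double-check is that variable occurrence counts are preserved, which is obvious because in both reductions the variable set and scope list of every constraint remain untouched, and that $\Gamma'$ (as a set of relations over $D'$) is finite, so \prob{Max CSP$(\Gamma')$-$k$} is a legitimate problem. Combining the two directions gives the stated equivalence.
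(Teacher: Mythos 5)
Your proof is correct and follows essentially the same approach as the paper's: the identical syntactic reduction (rewrite each $R$ as $\pi(R)=\proj{R}{D'}$, or back), combined with the two observations that composing any assignment with the retraction $\pi$ never decreases the number of satisfied constraints, and that assignments into $D'$ satisfy exactly the same constraints in both instances. Your derivation of $\opt(\I)=\opt(\I^\star)$ is, if anything, laid out a bit more carefully than the paper's single compressed paragraph.
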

\begin{proof}
  Let $\pi$ be the retraction of $\Gamma$ such that $\Gamma' = \{
  \pi(R) \mid R \in \Gamma \}$, where $\pi(R) = \{ \pi(\tup{t}) \mid
  \tup{t} \in R\}$. Given an instance $\I = (V, C)$ of \prob{Max
  CSP$(\Gamma)$-$k$}, we construct an instance $\I' = (V, C')$ of
  \prob{Max CSP$(\Gamma')$-$k$} by replacing each constraint $(R,
  \tup{s}) \in C$ by $(\pi(R), \tup{s})$.

  From a solution $s$ to $\I'$, we construct a solution $s'$ to $\I'$
  such that $s'(x) = \pi(s(x))$. Let $(R, \tup{s}) \in C$ be a
  constraint which is satisfied by $s$. Then, there is a tuple
  $\tup{x} \in R$ such that $s(\tup{s}) = \tup{x}$ so $\pi(\tup{x})
  \in \pi(R)$ and $s'(\tup{s}) = \pi(s(\tup{s})) = \pi(\tup{x}) \in
  \pi(R)$.  Conversely, if $(\pi(R), \tup{s})$ is a constraint in
  $\I'$ which is satisfied by $s'$, then there is a tuple $\tup{x} \in
  R$ such that $s'(\tup{s}) = \pi(s(\tup{s})) = \pi(\tup{x}) \in
  \pi(R)$, and $s(\tup{s}) = \tup{x} \in R$. We conclude that $m(\I,
  s) = m(\I', s')$.

  It is not hard to see that we can do this reduction in the other way
  too, i.e., given an instance $\I' = (V', C')$ of \prob{Max
    CSP$(\Gamma')$-$k$}, we construct an instance $\I$ of \prob{Max
    CSP$(\Gamma)$-$k$} by replacing each constraint $(\pi(R), \tup{s})
  \in C'$ by $(R, \tup{s})$. By the same argument as above, this
  direction of the equivalence follows, and we conclude that the lemma
  is valid.
\end{proof}

An analogous result holds for the \prob{CSP} problem, i.e., if
$\Gamma'$ is the core of $\Gamma$, then \prob{CSP$(\Gamma)$} is in
\cc{P} (\cc{NP}-complete) if and only if \prob{CSP$(\Gamma')$} is in
\cc{P} (\cc{NP}-complete); see~\cite{alg-struct} for a proof. Cores
play an important role in Section~\ref{sec:single}, too. We have the
following lemma:
\begin{lemma}[Lemma 2.11 in~\cite{maxCSP-three}] \label{lem:core}
  If $\Gamma'$ is the core of $\Gamma$, then \prob{Max
    CSP$(\Gamma')$-$B$} $\leq_{AP}$ \prob{Max CSP$(\Gamma)$-$B$}.
\end{lemma}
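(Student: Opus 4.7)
The plan is to build a direct AP-reduction with $F$ just ``lifting'' constraints from the core back to the full language, $G$ ``projecting'' an assignment by composing with the retraction, and $\alpha = 1$.

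More concretely, let $\pi:D\to D'$ be a retraction of $\Gamma$ with image $D'$ minimal, so $\Gamma' = \pi(\Gamma)$. First I would verify the identity $\pi(R) = R|_{D'}$ for every $R\in\Gamma$: the inclusion $\pi(R)\subseteq R\cap (D')^n$ is immediate from the two defining properties of a retraction (that $\pi$ fixes $D'$ and that $\pi$ sends tuples of $R$ to tuples of $R$), while the reverse inclusion follows because any $\tup{t}\in R\cap (D')^n$ is fixed by $\pi$, hence lies in $\pi(R)$. Thus each relation of $\Gamma'$ has a canonical ``lift'' in $\Gamma$.

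Now define $F$: given an instance $\I' = (V,C')$ of \prob{Max CSP$(\Gamma')$-$k$}, let $F(\I') = (V, C)$, where every constraint $(\pi(R), \tup{s}) \in C'$ is replaced by $(R, \tup{s})$. The variable set and the scopes are unchanged, so the occurrence bound $k$ carries over unchanged; hence $F(\I')$ is an instance of \prob{Max CSP$(\Gamma)$-$k$}. Define $G(\I', s) = \pi \circ s$ for any assignment $s:V\to D$. If a constraint $(R, \tup{s}) \in C$ is satisfied by $s$, that is, $s(\tup{s}) \in R$, then $\pi(s(\tup{s}))\in R$ by the retraction property, and it lies in $(D')^n$, so it lies in $R\cap(D')^n = \pi(R)$; thus the corresponding constraint in $\I'$ is satisfied by $G(\I',s)$. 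Therefore
\[
m(\I', G(\I', s)) \;\ge\; m(F(\I'), s).
\]

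To finish, I would check $\opt(F(\I')) = \opt(\I')$. One direction follows from the displayed inequality applied to an optimal $s$ for $F(\I')$; the other direction follows because any optimal assignment $s^*:V\to D'$ for $\I'$ is also feasible for $F(\I')$, and since $s^*$ takes values in $D'$ we have $s^*(\tup{s}) \in R$ iff $s^*(\tup{s})\in R\cap(D')^n = \pi(R)$, so the numbers of satisfied constraints agree. Combining these, if $s$ is $r$-approximate for $F(\I')$ then
\[
\frac{\opt(\I')}{m(\I', G(\I',s))} \;\le\; \frac{\opt(F(\I'))}{m(F(\I'), s)} \;\le\; r,
\]
so $G(\I',s)$ is $r$-approximate for $\I'$, giving an AP-reduction with constant $\alpha = 1$. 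There is no real obstacle here: the only subtlety is confirming $\pi(R)=R|_{D'}$ so that the lift in $F$ is well defined, after which the bookkeeping for the performance ratio and the occurrence bound is immediate.
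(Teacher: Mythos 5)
Your proof is correct and takes the standard route. The paper does not actually spell out a proof of this lemma---it cites the result from Jonsson, Klasson, and Krokhin---but your construction (lifting each $\pi(R)$ to a preimage $R\in\Gamma$, mapping solutions by $s\mapsto\pi\circ s$, and observing $\opt(F(\I'))=\opt(\I')$ with $\alpha=1$) is exactly the reduction that one would extract there, and it parallels the argument the paper gives explicitly for the analogous Lemma~\ref{lem:core-gap1}. The key observation $\pi(R)=R|_{D'}$ is the right thing to pin down, and the bookkeeping for the occurrence bound (scopes unchanged, so $k$ is preserved) and the performance ratio is handled correctly.
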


The lemma is stated in a slightly different form
in~\cite{maxCSP-three} but the proof establishes the required
$AP$-reduction.

\section{Result A: Hardness at Gap Location 1 for \prob{Max CSP}} \label{sec:gap1}
In this section, we will prove \textbf{Result A} which we state as
Theorem~\ref{th:maxCSP-unary}.  The proof makes use of some concepts
from universal algebra and we present the relevant definitions and
results in Section~\ref{sec:algdef} and Section~\ref{sec:algCSP}. The
proof is contained in Section~\ref{sec:unary}.

\subsection{Definitions and Results from Universal Algebra} \label{sec:algdef}
We will now present the definitions and basic results we need from
universal algebra. For a more thorough treatment of universal
algebra in general we refer the reader
to~\cite{course-univalg,univalg}. The article~\cite{csp-alg} contains a
presentation of the relationship between universal algebra and
constraint satisfaction problems.

An \emph{operation} on a finite set $D$ is an arbitrary
function $f : D^k \rightarrow D$.  Any operation on $D$ can be
extended in a standard way to an operation on tuples over $D$, as
follows: let $f$ be a $k$-ary operation on $D$. For any collection of
$k$ $n$-tuples, $\tup{t_1},\tup{t_2}, \dots, \tup{t_k} \in D^n$, the
$n$-tuple $f(\tup{t_1},\tup{t_2}, \dots ,\tup{t_k})$ is defined as
follows:
\begin{align}
f(\tup{t_1},\tup{t_2}, \ldots, \tup{t_k}) = (&f(\tup{t_1}[1],\tup{t_2}[1], \ldots, \tup{t_k}[1]),
                                              f(\tup{t_1}[2],\tup{t_2}[2], \ldots, \tup{t_k}[2]), \ldots, \notag \\
                                             &f(\tup{t_1}[n],\tup{t_2}[n], \ldots, \tup{t_k}[n])),        \notag
\end{align}
where $\tup{t_j}[i]$ is the $i$-th component in tuple $\tup{t_j}$. If
$f(d, d, \ldots, d) = d$ for all $d \in D$, then $f$ is said to be
\emph{idempotent}. An operation $f : D^k \rightarrow D$ which
satisfies $f(x_1, x_2, \ldots, x_k) = x_i$, for some $i$, is called a
\emph{projection}.

Let $R$ be a relation in the constraint language $\Gamma$. If $f$ is
an operation such that for all $\tup{t_1},\tup{t_2}, \dots, \tup{t_k}
\in R$ we have $f(\tup{t_1}, \tup{t_2}, \dots ,\tup{t_k}) \in R$,
then $R$ is said to be \emph{invariant} (or, in other words, closed) under $f$.
If all constraint relations in $\Gamma$ are invariant under $f$, then
$\Gamma$ is said to be invariant under $f$. An operation $f$ such that $\Gamma$
is invariant under $f$ is called a \emph{polymorphism} of $\Gamma$.
The set of all polymorphisms of $\Gamma$ is denoted $\Pol(\Gamma)$.
Given a set of operations $F$, the set of all relations that is
invariant under all the operations in $F$ is denoted $\Inv(F)$.

\medskip

\begin{example}
  Let $D = \{0,1,2\}$ and let $R$ be the directed cycle on $D$, i.e.,
  $R = \{(0,1),$ $(1, 2),$ $(2, 0)\}$. One polymorphism of $R$ is the
  operation $f : \{0,1,2\}^3 \rightarrow \{0,1,2\}$ defined as $f(x,
  y, z) = x - y + z \pmod{3}$. This can be verified by considering all
  possible combinations of three tuples from $R$ and evaluating $f$
  component-wise.  Let $K$ be the complete graph on $D$. It is well
  known and not hard to check that if we view $K$ as a binary relation,
  then all idempotent polymorphisms of $K$ are projections.
\end{example}

\medskip

We continue by defining a closure operator $\langle \cdot \rangle$ on
sets of relations: for any set $\Gamma \subseteq R_D$, the set
$\langle \Gamma \rangle$ consists of all relations that can be
expressed using relations from $\Gamma \cup \{ EQ_D \}$ (where $EQ_D$
denotes the equality relation on $D$), conjunction, and existential
quantification. Those are the relations definable by \emph{primitive
  positive formulae} (pp-formulae). As an example of a pp-formula
consider the relations $A = \{(0,0), (0,1), (1,0)\}$ and $B =
\{(1,0), (0,1), (1,1)\}$ over the Boolean domain $\{0,1\}$. With
those two relations we can construct $I = \{(0,0), (0,1), (1,1)\}$
with the pp-formula
\[
I(x, y) \iff \exists z: A(x, z) \land B(z, y) .
\]
Note that pp-formulae and perfect implementations from
Definition~\ref{def:impl} are the same concept.  Intuitively,
constraints using relations from $\langle \Gamma \rangle$ are
exactly those which can be simulated by constraints using relations
from $\Gamma$ in the \prob{CSP} problem. Hence, for any finite
subset $\Gamma'$ of $\langle \Gamma \rangle$, \prob{CSP$(\Gamma')$}
is not harder than \prob{CSP$(\Gamma)$}. That is, if
\prob{CSP$(\Gamma')$} is \cc{NP}-complete for some finite subset
$\Gamma'$ of $\langle \Gamma \rangle$, then \prob{CSP$(\Gamma)$} is
\cc{NP}-complete. If \prob{CSP$(\Gamma)$} is in \cc{P}, then
\prob{CSP$(\Gamma')$} is in \cc{P} for every finite subset $\Gamma'$
of $\langle \Gamma \rangle$.  We refer the reader
to~\cite{closure-prop} for a further discussion on this topic.

  The sets of relations of the form $\langle \Gamma \rangle$ are
  referred to as \emph{relational clones}, or \emph{co-clones}.  An
  alternative characterisation of relational clones is given in the
  following theorem.
\begin{theorem}[\cite{PK79}] \label{th:polinv} \
\begin{itemize}
\item For every set $\Gamma \subseteq R_D$, $\langle \Gamma \rangle =
\Inv(\Pol(\Gamma))$.

\item If $\Gamma' \subseteq \langle \Gamma \rangle$, then
$\Pol(\Gamma) \subseteq \Pol(\Gamma')$.
\end{itemize}
\end{theorem}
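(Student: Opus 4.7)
The theorem is the classical Pol--Inv Galois correspondence. I would split it into a routine preservation lemma that handles part (ii) together with the easy inclusion $\langle \Gamma \rangle \subseteq \Inv(\Pol(\Gamma))$, and an indicator construction for the nontrivial inclusion $\Inv(\Pol(\Gamma)) \subseteq \langle \Gamma \rangle$. The preservation lemma asserts that if an operation $f$ preserves every relation in $\Gamma$, then $f$ preserves every relation pp-definable over $\Gamma \cup \{EQ_D\}$. Prove this by induction on the pp-formula: $f$ trivially preserves $EQ_D$ since it sends diagonals to diagonals; componentwise application of $f$ respects ``both hold'', so conjunctions are preserved; and if $k$ tuples have witnesses $y_1, \ldots, y_k$ for an existentially quantified variable, then $f(y_1, \ldots, y_k)$ witnesses the existential quantification for the $f$-combination. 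This gives $\Pol(\Gamma) \subseteq \Pol(\langle \Gamma \rangle)$, which is part (ii), and specialising to $R \in \langle \Gamma \rangle$ yields the easy inclusion of part (i).

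The heart of the proof is the reverse inclusion. Given an $n$-ary $R \in \Inv(\Pol(\Gamma))$, enumerate $R = \{t_1, \ldots, t_m\}$, view $R$ as the $m \times n$ matrix with rows $t_i$, and write $c_j := (t_1[j], \ldots, t_m[j]) \in D^m$ for its $j$-th column. I would build a pp-formula $\phi(x_1, \ldots, x_n)$ as follows: introduce an existentially quantified variable $y_d$ for every $d \in D^m$, identify the free variables as $x_j = y_{c_j}$, and for each $Q \in \Gamma$ of arity $k$ and each tuple $(d^{(1)}, \ldots, d^{(k)}) \in (D^m)^k$ with $(d^{(1)}[i], \ldots, d^{(k)}[i]) \in Q$ for every row $i$, include the atom $Q(y_{d^{(1)}}, \ldots, y_{d^{(k)}})$. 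To verify $R \subseteq \phi$: for each row $i$, the assignment $y_d := d[i]$ satisfies every atom by construction and evaluates the free variables to $t_i$. To verify $\phi \subseteq R$: given a satisfying extension $b$ with $b(y_{c_j}) = a_j$, define the $m$-ary operation $f(d) := b(y_d)$; the atoms of $\phi$ were chosen precisely so that $f$ preserves every $Q \in \Gamma$, hence $f \in \Pol(\Gamma)$. Invariance of $R$ under $f$ then yields $f(t_1, \ldots, t_m) \in R$, and the $j$-th component of this tuple is $f(c_j) = a_j$, so $(a_1, \ldots, a_n) \in R$.

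The main obstacle is conceptual rather than calculational: one must recognise that the ``right'' indicator set for the auxiliary variables is $D^m$, and that a satisfying assignment of the resulting pp-formula \emph{is} precisely the value-table of an $m$-ary operation that, by the design of the atoms, must lie in $\Pol(\Gamma)$. The formula $\phi$ contains on the order of $|\Gamma| \cdot |D|^{mk}$ atoms (with $k$ the maximum arity in $\Gamma$), which is large but finite; since the theorem concerns definability and not succinctness, this is not an issue.
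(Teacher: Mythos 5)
The paper does not prove this theorem; it is cited as a classical result of P\"oschel and Kalu\v{z}nin (with the same fact due independently to Geiger and to Bodnarchuk--Kalu\v{z}nin--Kotov--Romov). Your proposal reproduces the standard proof of that classical Galois correspondence: the easy direction (and part (ii)) via induction on pp-formulae, and the hard direction via the indicator construction with one auxiliary variable per element of $D^m$, interpreting a satisfying assignment of the pp-formula as the value table of an $m$-ary polymorphism. The details you give check out: the atoms you include are exactly the constraints stating that $f(d) := b(y_d)$ preserves every $Q \in \Gamma$, the row assignments $y_d \mapsto d[i]$ are the $i$-th projections and hence satisfy all atoms, and invariance of $R$ under the resulting $f$ closes the argument. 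This is correct and is, up to cosmetic choices, the proof one would find in the cited sources.

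Two small points worth a sentence in a careful write-up. First, the construction tacitly assumes $R \neq \emptyset$; the empty relation needs a separate (trivial but convention-dependent) remark, since $m = 0$ degenerates the indicator. Second, when distinct columns coincide, $c_j = c_{j'}$, you silently identify $x_j$ with $x_{j'}$; this is harmless because equal columns force equality of the corresponding coordinates in $R$, but it deserves to be said. Neither affects the correctness of the approach, and finiteness of $D$ (which the paper assumes throughout) is exactly what makes the $|D|^m$-variable indicator finite, so the argument goes through as you describe.
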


We will now define finite algebras and some related notions which we
need later on. The three definitions below closely follow the
presentation in~\cite{csp-alg}.

\begin{definition}[Finite algebra]
  A \emph{finite algebra} is a pair $\mathcal{A} = (A; F)$ where $A$
  is a finite non-empty set and $F$ is a set of finitary operations on
  $A$.
\end{definition}

We will only make use of finite algebras so we will write
\emph{algebra} instead of \emph{finite algebra}. An algebra is said
to be \emph{non-trivial} if it has more than one element.

\begin{definition}[Homomorphism of algebras]
  Given two algebras $\mathcal{A} = (A; F_A)$ and $\mathcal{B} = (B;
  F_B)$ such that $F_A = \{f^A_i \mid i \in I\}$, $F_B = \{f^B_i \mid
  i \in I \}$ and both $f^A_i$ and $f^B_i$ are $n_i$-ary for all $i
  \in I$, then $\phi : A \rightarrow B$ is said to be an
  \emph{homomorphism} from $\mathcal{A}$ to $\mathcal{B}$ if
  \[
  \phi(f^A_i(a_1, a_2, \ldots, a_{n_i})) = f^B_i(\phi(a_1), \phi(a_2), \ldots, \phi(a_{n_i}))
  \]
  for all $i \in I$ and $a_1, a_2, \ldots, a_{n_i} \in A$. If $\phi$
  is surjective, then $\mathcal{B}$ is a \emph{homomorphic image} of
  $\mathcal{A}$.
\end{definition}

Given a homomorphism $\phi$ mapping $\mathcal{A} = (A; F_A)$ to
$\mathcal{B} = (B; F_B)$, we can construct a equivalence relation
$\theta$ on $A$ as $\theta = \{(x, y) \mid \phi(x) = \phi(y) \}$.  The
relation $\theta$ is said to be a \emph{congruence relation} of
$\mathcal{A}$. We can now construct the \emph{quotient algebra}
$\mathcal{A}/\theta = (A/\theta; F_A/\theta)$. Here, $A/\theta = \{
x/\theta \mid x \in A \}$ and $x/\theta$ is the equivalence class
containing $x$. Furthermore, $F_A/\theta = \{ f/\theta \mid f \in F_A
\}$ and $f/\theta$ is defined such that $f/\theta(x_1/\theta,
x_2/\theta, \ldots, x_n/\theta) = f(x_1, x_2, \ldots, x_n)/\theta$.

For an operation $f : D^n \rightarrow D$ and a subset $X \subseteq D$
we define $\proj{f}{X}$ as the function $g : X^n \rightarrow D$ such
that $g(\tup{x}) = f(\tup{x})$ for all $\tup{x} \in X^n$. For a set of
operations $F$ on $D$ we define $\proj{F}{X} = \{ \proj{f}{X} \mid f
\in F \}$.

\begin{definition}[Subalgebra]
  Let $\mathcal{A} = (A; F_A)$ be an algebra and $B \subseteq A$. If
  for each $f \in F_A$ and any $b_1, b_2, \ldots, b_n \in B$, we have
  $f(b_1, b_2, \ldots, b_n) \in B$, then $\mathcal{B} = (B;
  \proj{F_A}{B})$ is a \emph{subalgebra} of $\mathcal{A}$.
\end{definition}

The operations in $\Pol(\Inv(F_A))$ are the \emph{term operations}
of $\mathcal{A}$. If all term operations are surjective, then the
algebra is said to be \emph{surjective}. Note that $\Inv(F_A)$ is a
core if and only if $\mathcal{A}$ is
surjective~\cite{csp-alg,alg-struct}.  If $F$ consist of all the
idempotent term operations of $\mathcal{A}$, then the algebra $(A;
F)$ is called the \emph{full idempotent reduct of $\mathcal{A}$},
and we will denote this algebra by $\mathcal{A}^c$. Given a set of
relations $\Gamma$ over the domain $D$ we say that the algebra
$\mathcal{A}_\Gamma=(D; \Pol(\Gamma))$ is \emph{associated} with
$\Gamma$. An algebra $\mathcal{B}$ is said to be a \emph{factor} of
the algebra $\mathcal{A}$ if $\mathcal{B}$ is a homomorphic image of
a subalgebra of $\mathcal{A}$. A \emph{non-trivial factor} is an
algebra which is not trivial, i.e., it has at least two elements.

\subsection{Constraint Satisfaction and Algebra} \label{sec:algCSP}
We continue by describing some connections between constraint
satisfaction problems and universal algebra. We will also formally
state \textbf{Result~A} in Theorem~\ref{th:maxCSP-unary}.
The following theorem concerns the hardness of \prob{CSP} for
certain constraint languages.
\begin{theorem}[\cite{csp-alg}] \label{th:CSP-unary}
  Let $\Gamma$ be a core constraint language. If
  $\mathcal{A}^c_\Gamma$ has a non-trivial factor whose term
  operations are only projections, then \prob{CSP$(\Gamma)$} is
  \cc{NP}-hard.
\end{theorem}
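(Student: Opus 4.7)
The plan is to give a polynomial-time reduction from a known \cc{NP}-hard \prob{CSP} over the domain $B$ of the factor $\mathcal{B}$ to \prob{CSP$(\Gamma)$}. Because every term operation of $\mathcal{B}$ is a projection, Theorem~\ref{th:polinv} implies that $\Inv(\Pol(\mathcal{B}))$ is the full set of relations on $B$; in particular, since $|B| \geq 2$, there is an \cc{NP}-hard CSP over $B$ whose relations are all preserved by $\mathcal{B}$ (e.g., graph $|B|$-colouring via the disequality relation when $|B| \geq 3$, or a Boolean \cc{NP}-hard problem such as NAE-$3$-SAT when $|B| = 2$). It remains to show that such a problem can be expressed using relations pp-definable from $\Gamma$.

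First I would replace $\Gamma$ by $\Gamma' = \Gamma \cup C_D$, where $C_D = \{\{d\} : d \in D\}$. It is a well-known consequence of $\Gamma$ being a core that \prob{CSP$(\Gamma)$} and \prob{CSP$(\Gamma')$} are polynomial-time equivalent, and $\Pol(\Gamma')$ coincides with the set of idempotent polymorphisms of $\Gamma$, so $\langle \Gamma' \rangle = \Inv(\Pol(\mathcal{A}^c_\Gamma))$. Writing the given factor as $\mathcal{B} = \mathcal{S}/\theta$, with $\mathcal{S} = (S;F)$ a subalgebra of $\mathcal{A}^c_\Gamma$ and $\theta$ a congruence on $\mathcal{S}$, both $S$ (viewed as a unary relation) and $\theta$ (viewed as a binary relation) are invariant under $\Pol(\Gamma')$ and therefore belong to $\langle \Gamma' \rangle$ by Theorem~\ref{th:polinv}.

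Next, let $\pi : S \to B$ denote the canonical quotient map and, for any $R \subseteq B^n$, set $R^{\ast} = \{(s_1,\ldots,s_n) \in S^n : (\pi(s_1),\ldots,\pi(s_n)) \in R\}$. A short calculation shows $R^{\ast}$ is invariant under every $f \in \Pol(\Gamma')$: if $f$ is $k$-ary and $(s_1^j,\ldots,s_n^j) \in R^{\ast}$ for $j=1,\ldots,k$, then the componentwise image has $\pi$-values $(f/\theta)(\pi(s_i^1),\ldots,\pi(s_i^k))$, which equals some $\pi(s_i^{j_0})$ because $f/\theta$ is a term operation of $\mathcal{B}$ and hence a projection; the resulting tuple therefore lies in $R^{\ast}$. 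Thus every $R^{\ast}$ belongs to $\langle \Gamma' \rangle$. Letting $R$ range over the (finitely many) relations of the chosen \cc{NP}-hard CSP on $B$, and adjoining $S$ itself as a unary constraint on every variable, yields a polynomial-time reduction from that CSP into \prob{CSP$(\Gamma'')$} for some finite $\Gamma'' \subseteq \langle \Gamma' \rangle$, which in turn reduces to \prob{CSP$(\Gamma)$}.

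The main technical hurdle is the opening step, the polynomial-time equivalence of \prob{CSP$(\Gamma)$} and \prob{CSP$(\Gamma \cup C_D)$} for core $\Gamma$. The standard argument exploits the fact that a core admits no non-bijective unary polymorphism to build a gadget, of size depending only on $|D|$, which forces a designated tuple of variables to realise every element of $D$; these variables can then be used in place of singleton constraints. Once that is granted, the rest is bookkeeping with the Pol/Inv Galois connection: subuniverses and congruences of $\mathcal{A}^c_\Gamma$ are pp-definable from $\Gamma'$, and the collapse of $\mathcal{B}$'s clone to projections means every relation on $B$ lifts via $\pi$ to a relation on $S$ in $\langle \Gamma' \rangle$ — including one that makes CSP \cc{NP}-hard.
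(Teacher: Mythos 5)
This theorem is cited from~\cite{csp-alg} and not reproved in the paper itself, but the paper's Lemma~\ref{lem:altchar} reproduces its technical core (lifting relations from the factor through the quotient map $\phi$ and showing the preimages lie in $\langle\Gamma\cup C_D\rangle$ because the factor's term operations are projections). Your proposal is correct and follows essentially that same route: invariance of $S$, $\theta$, and every $R^{\ast}$ under $\Pol(\Gamma\cup C_D)$ via the projection hypothesis, plus the standard polynomial-time equivalence of $\prob{CSP}(\Gamma)$ and $\prob{CSP}(\Gamma\cup C_D)$ for cores; the only cosmetic difference is your choice of target \cc{NP}-hard problem (graph colouring / NAE-$3$-SAT) versus the four $3$-SAT relations used in Lemma~\ref{lem:altchar}.
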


It has been conjectured~\cite{csp-alg}
that, for all other core languages $\Gamma$, the problem
\prob{CSP$(\Gamma)$} is tractable, and this conjecture has been
verified in many important cases (see,
e.g.,~\cite{Bulatov:lics2003,CSP-3el}).

The first main result of this report is the following theorem which
states that \prob{Max CSP$(\Gamma)$-$B$} has a hard gap at location
1 whenever the condition which makes \prob{CSP$(\Gamma)$} hard in
Theorem~\ref{th:CSP-unary} is satisfied.
\begin{theorem} \label{th:maxCSP-unary}
  Let $\Gamma$ be a core constraint language. If
  $\mathcal{A}^c_\Gamma$ has a non-trivial factor whose term
  operations are only projections, then \prob{Max CSP$(\Gamma)$-$B$}
  has a hard gap at location 1.
\end{theorem}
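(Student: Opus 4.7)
The plan is to reduce the bounded-occurrence version of \prob{Max 3-SAT}, which has a hard gap at location 1 as an equivalent form of the PCP theorem (see the discussion immediately following \textbf{Result~A} in the introduction, plus, e.g., the bounded-occurrence PCP of Dinur), to \prob{Max CSP$(\Gamma)$-$B$}. The algebraic hypothesis on $\mathcal{A}^c_\Gamma$ will be used to produce, via pp-definability, the relations needed to encode 3-SAT clauses, and Lemma~\ref{lem:gap-impl} will then transfer the gap.

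First, I would extract a ``hard'' factor algebraically. Let $\mathcal{B}$ be a subalgebra of $\mathcal{A}^c_\Gamma$ with universe $B \subseteq D$, and let $\theta$ be a congruence on $\mathcal{B}$ such that the quotient $\mathcal{B}/\theta$ is non-trivial and all its term operations are projections. Working over the core $\Gamma$ (possibly after adjoining the singleton unary relations $C_D$, which one can do for a core without increasing complexity), the unary relation $B$ and the binary relation $\theta$ are invariant under $\Pol(\Gamma)$ and so lie in $\langle \Gamma \rangle$ by Theorem~\ref{th:polinv}; both are therefore perfectly implementable from $\Gamma$. Now, because every term operation of $\mathcal{B}/\theta$ is a projection, every relation on $B/\theta$ is preserved; lifting any such relation to $B$ yields a relation invariant under $\Pol(\mathcal{B})$, which, combined with the pp-definitions of $B$ and $\theta$, also lies in $\langle \Gamma \rangle$ and hence is perfectly implementable from $\Gamma$.

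Next, I would build the reduction. Fix two distinct $\theta$-classes and identify them with the two Boolean truth values. Each Boolean clause relation of arity at most $3$ lifts to a relation on $B$, and by the previous step is perfectly implemented by $\Gamma$. Given a \prob{Max 3-SAT}-$k$ instance on variables $x_1, \ldots, x_n$, introduce $D$-variables $v_1, \ldots, v_n$ constrained to lie in $B$ via the pp-definition of $B$, and replace each clause by the perfect implementation of the corresponding lifted relation, using the pp-definition of $\theta$ to glue together auxiliary variables that should encode the same logical variable. The resulting construction is a perfect implementation of the source instance, so Lemma~\ref{lem:gap-impl} converts the hard gap at location 1 of bounded-occurrence Max 3-SAT into a hard gap at location 1 for \prob{Max CSP$(\Gamma)$-$B$}.

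The main obstacle will be carrying out the pp-interpretation cleanly through the subalgebra-plus-quotient construction while preserving both the perfect implementation property (so that Lemma~\ref{lem:gap-impl} actually applies) and the bounded-occurrence property. Concretely, a $D$-variable realising a logical variable is not forced to take a single value in $B$ but only to sit in a single $\theta$-class, so the gadgets must treat $\theta$-equivalent $D$-values as interchangeable; one has to verify that the resulting collection of constraints really is a perfect implementation rather than a merely strict one. Bounded occurrences, by contrast, transfer for free as long as every gadget has constant size, except that the pp-definitions of $B$ and $\theta$ might re-use a single $D$-variable several times per logical variable; if so, a standard splitting trick using fresh copies identified modulo $\theta$ restores a constant occurrence bound.
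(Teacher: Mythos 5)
Your high-level strategy matches the paper's: reduce bounded-occurrence \prob{Max 3-SAT} to \prob{Max CSP$(\Gamma)$-$B$} by using the algebraic hypothesis to pp-define (perfectly implement) the encoded clause relations, then apply Lemma~\ref{lem:gap-impl}. However there is a genuine gap at exactly the step you try to brush past.

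You write that ``the unary relation $B$ and the binary relation $\theta$ are invariant under $\Pol(\Gamma)$ and so lie in $\langle\Gamma\rangle$.'' This is false. Since $\mathcal{B}$ is a subalgebra of the \emph{idempotent} reduct $\mathcal{A}^c_\Gamma$, the universe $B$ and the congruence $\theta$ are preserved only by the idempotent polymorphisms of $\Gamma$, not by all of $\Pol(\Gamma)$ (e.g., non-identity automorphisms of a core are non-idempotent polymorphisms that need not fix $B$ setwise). Consequently $B$, $\theta$, and the lifted clause relations $R_i$ lie in $\langle\Gamma\cup C_D\rangle$, not in $\langle\Gamma\rangle$, and cannot be perfectly implemented by $\Gamma$ alone. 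Your parenthetical ``(possibly after adjoining the singleton unary relations $C_D$, which one can do for a core without increasing complexity)'' invokes a result that holds for the \emph{decision} problem \prob{CSP}, but it does not automatically give a gap-preserving, bounded-occurrence reduction, and proving that transfer is precisely the hard part of the theorem. The paper handles it by pp-defining the automorphism-group relation $R_S$ and the orbit-equality relations $EQ_i$ from $\Gamma$, constructing $(p+3)$-ary relations $R'_i\in\langle\Gamma\rangle$ whose ``sections'' at a tuple enumerating $D$ recover the $R_i$, and then replacing the single tuple of ``constant'' variables $u_1,\ldots,u_p$ — which would otherwise occur in every constraint — by a cloud of fresh copies tied together with $EQ_j$-constraints laid out along an explicit expander graph (Theorem~\ref{th:14expander}). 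The expander structure is essential: it guarantees that any solution which assigns the copies inconsistently loses at least as many equality constraints as it could gain, so the gap survives. Your ``standard splitting trick using fresh copies identified modulo $\theta$'' does not supply this property; a chain or tree of equalities lets an adversary cut one cheap edge and decouple an entire block, destroying the soundness of the gap reduction. Until the pp-definability over $\Gamma$ alone (rather than $\Gamma\cup C_D$) and the expander-based occurrence bounding are supplied, the proposal does not establish the theorem.
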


The proof of this result can be found in Section~\ref{sec:unary}.
Note that if the above conjecture is true then
Theorem~\ref{th:maxCSP-unary} describes all constraint languages
$\Gamma$ for which \prob{Max CSP$(\Gamma)$} has a hard gap at location
1 because, obviously, $\Gamma$ cannot have this property when
\prob{CSP$(\Gamma)$} is tractable.

There is another characterisation of the algebras in
Theorem~\ref{th:CSP-unary} which corresponds to tractable constraint
languages. To state the characterisation we need the following
definition.
\begin{definition}[Weak Near-Unanimity Function]
  An operation $f : D^n \rightarrow D$, where $n \geq 2$, is a \emph{weak near-unanimity
    function} if $f$ is idempotent and
  \[
  f(x, y, y, \ldots, y) = f(y, x, y, y, \ldots, y) = \ldots = f(y, \ldots, y, x)
  \]
  for all $x, y \in D$.
\end{definition}

Hereafter we will use the acronym \emph{wnuf} for weak
near-unanimity functions. We say that an algebra ${\mathcal A}$
{\em admits a wnuf} if there is a wnuf among the term operations
of ${\mathcal A}$. We also say that a constraint language $\Gamma$
admits a wnuf if there is a wnuf among the polymorphisms of $\Gamma$.
By combining a theorem proven by Mar\'oti and
McKenzie~\cite[Theorem~1.1]{wnuf} with a result by Bulatov and
Jeavons~\cite[Proposition~4.14]{algstruct-comb}, we get the following:
\begin{theorem} \label{th:wnuf}
  Let $\mathcal{A}$ be an idempotent algebra. The following are
  equivalent:
  \begin{itemize}
  \item There is a non-trivial factor $\mathcal{B}$ of $\mathcal{A}$
    such that $\mathcal{B}$ only have projections as term operations.
  \item The algebra $\mathcal{A}$ does not admit any wnuf.
  \end{itemize}
\end{theorem}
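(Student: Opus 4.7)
The plan is to reduce the claimed equivalence to the two cited results by passing through the auxiliary notion of a \emph{Taylor term}. Recall that a term operation $t(x_1,\dots,x_n)$ of an idempotent algebra $\mathcal{A}$ is a Taylor term if for every coordinate $i \in \{1,\dots,n\}$ there is an identity satisfied in $\mathcal{A}$ of the form $t(\dots) = t(\dots)$ in which the variable at position $i$ is $x$ on the left-hand side and $y$ on the right-hand side (the other coordinates being filled arbitrarily with $x$'s and $y$'s). With this notion in hand, I would first invoke Proposition~4.14 of \cite{algstruct-comb} (Bulatov--Jeavons): a finite idempotent algebra $\mathcal{A}$ admits a Taylor term if and only if no non-trivial factor of $\mathcal{A}$ has only projections as term operations. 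I would then invoke Theorem~1.1 of \cite{wnuf} (Mar\'oti--McKenzie): a finite idempotent algebra admits a Taylor term if and only if it admits a wnuf. Chaining the two equivalences yields Theorem~\ref{th:wnuf}.

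As a sanity check on the easier direction, I would verify by hand that admitting a wnuf forces every non-trivial factor to have a non-projection term. Indeed, the class of algebras satisfying a fixed set of identities is closed under subalgebras and homomorphic images, so if $f$ is a wnuf term of $\mathcal{A}$ and $\mathcal{B}$ is a factor of $\mathcal{A}$, then $\mathcal{B}$ inherits a term operation $g$ satisfying the same identities $g(x,y,\dots,y) = g(y,x,y,\dots,y) = \dots = g(y,\dots,y,x)$. But on any non-trivial domain a projection $\pi_j$ cannot satisfy these identities: evaluating two of the constraint equations with $i \neq j$ would force $x = y$ for all $x,y$, contradicting non-triviality of $\mathcal{B}$.

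The other direction---extracting a wnuf from the assumption that no non-trivial factor has only projections as term operations---is where the genuine technical content lies, and it is precisely what \cite{algstruct-comb,wnuf} provide. So beyond setting up the Taylor-term bridge and verifying the elementary direction, no further work is required; the main obstacle is simply being careful that the hypotheses of the two cited results match (both require a finite idempotent algebra, which is exactly the setting of $\mathcal{A}$ in the statement).
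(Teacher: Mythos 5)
Your proof is correct and takes essentially the same approach as the paper: the paper does not give a detailed argument either, but simply cites Mar\'oti--McKenzie's Theorem~1.1 and Bulatov--Jeavons' Proposition~4.14 and states that combining them yields the theorem, which is exactly your Taylor-term bridge. Your hand-verification of the easier direction is a helpful addition but is not part of the paper's own (one-line) justification.
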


\subsection{Proof of \textbf{Result A}} \label{sec:unary}
We will now prove Theorem~\ref{th:maxCSP-unary}.
Let $3SAT_0$ denote the relation $\{0,1\}^3 \setminus \{(0,0,0)\}$. We
also introduce three slight variations of $3SAT_0$, let $3SAT_1 =
\{0,1\}^3 \setminus \{(1,0,0)\}$, $3SAT_2 = \{0,1\}^3 \setminus
\{(1,1,0)\}$, and $3SAT_3 = \{0,1\}^3 \setminus \{(1,1,1)\}$. To
simplify the notation we let $\Gamma_{3SAT} = \{3SAT_0,$ $3SAT_1,$
$3SAT_2,$ $3SAT_3\}$. It is not hard to see that the problem \prob{Max
CSP$(\Gamma_{3SAT})$} is precisely \prob{Max 3Sat}. It is
well-known that this problem, even when restricted to instances in
which each variable occurs at most a constant number of times, has a
hard gap at location 1, see e.g.,~\cite[Theorem~7]{inapprox-combopt}. We state
this as a lemma.

\begin{lemma}[\cite{inapprox-combopt}] \label{lem:3sat-B}
  \prob{Max CSP$(\Gamma_{3SAT})$-$B$} has a hard gap at location 1.
\end{lemma}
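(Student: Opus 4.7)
The plan is to derive this from the PCP theorem plus a standard degree-reduction gadget based on expander graphs. The starting point is the well-known combinatorial reformulation of the PCP theorem: there exists an absolute constant $\epsilon_0 < 1$ and an \cc{NP}-complete language $\Pi$ (for instance 3-SAT itself) together with a polynomial-time reduction mapping \textsc{Yes} instances of $\Pi$ to satisfiable 3-CNF formulas and \textsc{No} instances to 3-CNF formulas in which no assignment satisfies more than an $\epsilon_0$-fraction of the clauses. Translated into our notation, this says that \prob{Max CSP$(\Gamma_{3SAT})$} has a hard gap at location $1$ when no occurrence bound is imposed. So the content of the lemma is really the \emph{bounded-occurrence} strengthening.

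To impose a bounded occurrence, the plan is to apply the Papadimitriou--Yannakakis expander trick. Given an instance $\phi$ with clause set $C$ and a variable $x$ appearing $d_x$ times, I would split the occurrences into fresh copies $x^1,\ldots,x^{d_x}$ and superimpose on these copies a $\Delta$-regular expander graph $H_x$ with a fixed spectral gap, where $\Delta$ is an absolute constant independent of $\phi$. For each edge $\{x^i,x^j\}$ of $H_x$, add the two equivalence-enforcing clauses $(\neg x^i\lor x^j)$ and $(x^i\lor\neg x^j)$, each padded to a clause of length three by literal repetition and thus available in $\Gamma_{3SAT}$ (for example, $x\lor y$ is realised by the relation $3SAT_0$ applied to the tuple $(x,x,y)$). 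Let $\phi'$ be the resulting instance. By construction every variable of $\phi'$ occurs at most $\Delta+O(1)$ times.

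Next I would verify the two sides of the gap. Completeness: if $\phi$ is satisfiable, assign every copy of $x$ the value of $x$ in a satisfying assignment; all original clauses and all expander clauses of $\phi'$ are satisfied. Soundness: given any assignment $\sigma'$ to $\phi'$, define the ``majority'' assignment $\sigma$ to $\phi$ by taking, for each $x$, the more popular value among the copies of $x$ under $\sigma'$. Let $S_x$ be the minority set of copies. The expander mixing lemma gives $|E(S_x,\overline{S_x})|\ge c\Delta|S_x|$ for some absolute $c>0$ whenever $|S_x|\le d_x/2$, and each such edge forces at least one of its two padded expander clauses to be violated. On the other hand, an original clause of $\phi'$ that is satisfied by $\sigma'$ but not by $\sigma$ must be ``witnessed'' by a minority-copy literal, hence charged to some $x$ with $|S_x|$ large enough. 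Summing over $x$, the number of expander clauses violated by $\sigma'$ is linear in the number of original clauses where $\sigma'$ beats $\sigma$. Combined with the PCP gap for $\sigma$, this yields an overall upper bound of $(1-\delta)|C(\phi')|$ on the number of simultaneously satisfiable clauses of $\phi'$ in the \textsc{No} case, for some constant $\delta>0$ depending only on $\epsilon_0$, $c$, $\Delta$.

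The main obstacle is the quantitative bookkeeping in the soundness analysis: one must choose $\Delta$ large enough that the penalty $c\Delta|S_x|$ from the expander clauses dominates the ``discount'' one gets in the original clauses when moving from $\sigma'$ to $\sigma$, while simultaneously keeping the ratio of expander clauses to original clauses bounded so that the multiplicative gap does not degrade to $1$. Once $\Delta$ is fixed via the standard calculation, the rest is routine, and the result follows.
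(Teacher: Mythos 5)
Your proof is correct, but note that the paper does not prove this lemma at all: it imports it as a known black box, citing Theorem~7 of Trevisan's survey, precisely because the statement (PCP hardness of \prob{Max 3-Sat} at gap location~1, with bounded occurrences) is standard. What you have written is a reconstruction of the standard argument behind that citation: the PCP theorem in its \prob{Max 3-Sat} gap form, followed by the Papadimitriou--Yannakakis expander-based degree reduction with equality constraints padded into $\Gamma_{3SAT}$ clauses. The argument is sound, and the one genuinely load-bearing quantitative point is handled correctly by your bookkeeping: since each copy $x^i$ occurs in exactly one original clause, a minority copy can ``save'' at most one original clause when passing from $\sigma'$ to the majority assignment $\sigma$, while each expander edge leaving the minority set $S_x$ kills at least one equality clause; so it suffices that the edge expansion of small sets be at least $1$ (which is exactly the combinatorial expander definition the paper itself uses later, in the proof of Theorem~\ref{th:maxCSP-unary}), and no delicate tuning of $\Delta$ beyond that is needed. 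The ratio of equality clauses to original clauses is $O(\Delta)$, so the gap degrades only by a constant factor, as you say. Two minor points you gloss over, both routine: the expanders must be polynomial-time constructible (e.g.\ via Lubotzky--Phillips--Sarnak, Theorem~\ref{th:14expander} in the paper), and explicit constructions yield graphs on roughly, not exactly, $d_x$ vertices, so a small padding/identification step is needed --- the paper's own use of expanders in Section~\ref{sec:unary} shows how to absorb this into the constants.
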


To prove Theorem~\ref{th:maxCSP-unary} we will utilise \emph{expander
graphs}.
\begin{definition}[Expander graph]
  A $d$-regular graph $G$ is an \emph{expander graph} if, for
  any $S \subseteq V[G]$, the number of edges between $S$ and $V[G]
  \setminus S$ is at least $\min(|S|,|V[G] \setminus S|)$.
\end{definition}

Expander graphs are frequently used for proving properties of
\prob{Max CSP}, cf.~\cite{Crescenzi:ccc97,maxsnp-j}. Typically, they
are used for bounding the number of variable occurrences.  A
concrete construction of expander graphs has been provided by
Lubotzky~et al.~\cite{ramanujan}.

\begin{theorem} \label{th:14expander}
  A polynomial-time algorithm $T$ and a fixed integer $N$
  exist such that, for any $k>N$, $T(k)$ produces a
  14-regular expander graph with $k(1+o(1))$ vertices.
\end{theorem}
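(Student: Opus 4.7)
The plan is to invoke the Lubotzky--Phillips--Sarnak construction of Ramanujan graphs~\cite{ramanujan} with parameter $p = 13$, producing a family of $(p+1) = 14$-regular graphs $X^{13,q}$, one for each odd prime $q$ (from some explicit arithmetic progression depending on $13$), whose vertex set is a quotient of $PSL_2(\mathbb{F}_q)$ or $PGL_2(\mathbb{F}_q)$ of order $\Theta(q^3)$. These graphs are Cayley graphs on an explicit generating set of size $14$ coming from integral quaternions of norm $13$, so given $q$ the adjacency list of $X^{13,q}$ can be written down in polynomial time in $q$, and hence polynomial in the output size $k(1+o(1))$.

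Next I would establish the edge-expansion property from the Ramanujan spectral bound. By the defining property of Ramanujan graphs, every eigenvalue $\lambda$ of the adjacency matrix of $X^{13,q}$ other than $\pm d = \pm 14$ satisfies $|\lambda| \leq 2\sqrt{p} = 2\sqrt{13}$. The standard Cheeger-type inequality for $d$-regular graphs then yields, for every $S \subseteq V$ with $|S| \leq |V|/2$,
\begin{equation*}
e(S, V\setminus S) \;\geq\; \frac{d - \lambda_2}{2}\,|S| \;\geq\; \frac{14 - 2\sqrt{13}}{2}\,|S| \;>\; |S|,
\end{equation*}
since $14 - 2\sqrt{13} > 2$. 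Applying the same bound to $V\setminus S$ when $|S| > |V|/2$ gives $e(S, V\setminus S) \geq \min(|S|,|V\setminus S|)$, which is exactly the expansion property demanded in the definition.

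The remaining step, and the main technical hurdle, is the size-density claim: for every sufficiently large $k$, the family must contain a graph with $k(1+o(1))$ vertices. The vertex counts available are of the form $q(q^2-1)$ or $q(q^2-1)/2$ for $q$ running over primes in a fixed arithmetic progression modulo $4 \cdot 13$. Given $k$, I would set $q_0 \approx (2k)^{1/3}$ and invoke a density statement for primes in arithmetic progressions (Linnik's theorem, or in a more elementary way the prime number theorem for arithmetic progressions) to find an admissible prime $q$ in the interval $[q_0,\, q_0(1+\eta_k)]$ for some $\eta_k \to 0$; the resulting graph then has $k(1+o(1))$ vertices, and $q$ is polynomial in $k$ so $T(k)$ runs in polynomial time. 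Setting $N$ to absorb the small primes for which the LPS construction degenerates or the density bound fails completes the proof.

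Note that we have some slack in the spectral constant: the bound $14 - 2\sqrt{13} > 2$ is genuinely needed to turn a spectral gap into raw edge expansion $\geq |S|$, which is why $p = 13$ (and not, say, $p = 3$ yielding $4$-regular graphs) is the natural choice for the fixed degree $14$ in Theorem~\ref{th:14expander}.
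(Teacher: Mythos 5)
Your proposal is correct and is precisely the derivation the paper intends: the paper states Theorem~\ref{th:14expander} without proof, citing only the Lubotzky--Phillips--Sarnak construction, and your argument (take $p=13$ so that $14-2\sqrt{13}>2$, turn the Ramanujan eigenvalue bound into the required edge expansion $e(S,V\setminus S)\geq\min(|S|,|V\setminus S|)$ via the standard variational inequality, and use primes in the admissible progression near $(2k)^{1/3}$ to achieve $k(1+o(1))$ vertices) is the standard way to extract the stated form from that citation. I see no gaps; your closing remark about why degree $14$ rather than, say, $4$ is needed for the raw expansion constant is also accurate.
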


There are four basic ingredients in the proof of
Theorem~\ref{th:maxCSP-unary}. The first three are
Lemma~\ref{lem:gap-impl}, Lemma~\ref{lem:3sat-B}, and the use of
expander graphs to bound the number of variable occurrences. We also
use an alternative characterisation (Lemma~\ref{lem:altchar}) of constraint
languages satisfying the conditions of the theorem. This is a slight
modification of a part of the proof of Proposition 7.9
in~\cite{csp-alg}. The implication below is in fact an equivalence and we
refer the reader to~\cite{csp-alg} for the details.
Given a function $f: D \rightarrow D$, and a relation $R \in R_D$, the
\emph{full preimage} of $R$ under $f$, denoted by $f^{-1}(R)$, is the
relation $\{ \tup{x} \mid f(\tup{x}) \in R \}$ (as usual, $f(\tup{x})$
denotes that $f$ should be applied componentwise to $\tup{x}$).

\begin{lemma} \label{lem:altchar}
  Let $\Gamma$ be a core constraint language. If the algebra
    $\mathcal{A}^c_\Gamma$ has a non-trivial factor whose term
    operations are only projections, then there is a subset $B$ of $D$
    and a surjective mapping $\phi : B \rightarrow \{0,1\}$ such that
    the relational clone $\langle \Gamma \cup C_D \rangle$ contains
    the relations $\phi^{-1}(3SAT_0)$, $\phi^{-1}(3SAT_1)$,
    $\phi^{-1}(3SAT_2)$, and $\phi^{-1}(3SAT_3)\}$.
\end{lemma}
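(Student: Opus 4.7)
The plan is to unpack the hypothesis into a subalgebra-plus-congruence picture and then verify invariance directly via Theorem~\ref{th:polinv}. Since $\mathcal{A}^c_\Gamma$ has a non-trivial factor whose term operations are only projections, there exist a subalgebra $\mathcal{S}=(S;F_S)$ of $\mathcal{A}^c_\Gamma$ and a congruence $\theta$ of $\mathcal{S}$ such that $\mathcal{S}/\theta$ has at least two elements and every term operation of $\mathcal{S}/\theta$ is a projection. I would pick two distinct $\theta$-classes $B_0,B_1 \subseteq S$, set $B := B_0 \cup B_1$, and define $\phi : B \to \{0,1\}$ by $\phi(x)=0$ for $x \in B_0$ and $\phi(x)=1$ for $x \in B_1$. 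Surjectivity is immediate, and $B_0 \cap B_1 = \emptyset$ since the classes are distinct.

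By Theorem~\ref{th:polinv}, $\langle \Gamma \cup C_D \rangle = \Inv(\Pol(\Gamma \cup C_D))$, and $\Pol(\Gamma \cup C_D)$ coincides with the set of idempotent polymorphisms of $\Gamma$ (preserving every singleton unary relation is equivalent to idempotency), i.e., with the term operations of $\mathcal{A}^c_\Gamma$. So it suffices to show that each $\phi^{-1}(3SAT_j)$ is invariant under every such $f$. A $k$-ary term operation $f$ of $\mathcal{A}^c_\Gamma$ restricts to a term operation $\proj{f}{S}$ of $\mathcal{S}$, which in turn descends to a term operation $\bar f$ of $\mathcal{S}/\theta$; by hypothesis $\bar f$ is the $i$-th coordinate projection for some $i \in \{1,\ldots,k\}$. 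Now take tuples $\tup{t_1},\ldots,\tup{t_k} \in \phi^{-1}(3SAT_j) \subseteq B^3 \subseteq S^3$. Componentwise application yields $\tup{u} := f(\tup{t_1},\ldots,\tup{t_k}) \in S^3$ because $S$ is a subalgebra, and since $\bar f$ is the $i$-th projection each coordinate of $\tup{u}$ lies in the same $\theta$-class as the corresponding coordinate of $\tup{t_i}$. But the coordinates of $\tup{t_i}$ lie in $B_0 \cup B_1$, so those of $\tup{u}$ lie in the same classes $B_0$ or $B_1$; hence $\tup{u} \in B^3$ and $\phi(\tup{u}) = \phi(\tup{t_i}) \in 3SAT_j$, as required.

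The main subtlety is that $B = B_0 \cup B_1$ is \emph{not} in general a subalgebra of $\mathcal{A}^c_\Gamma$, so the containment $\tup{u} \in B^3$ is not automatic from any closure property; rather, it is precisely the projection property of $\bar f$ that transports ``$B$-valuedness'' from $\tup{t_i}$ to $\tup{u}$, while at the same time pinning down which of $B_0$ or $B_1$ each coordinate of $\tup{u}$ belongs to. Once that observation is in place, the same calculation shows $\phi^{-1}(R) \in \langle \Gamma \cup C_D \rangle$ for every $R \subseteq \{0,1\}^n$, which in particular yields the four relations $\phi^{-1}(3SAT_0), \phi^{-1}(3SAT_1), \phi^{-1}(3SAT_2), \phi^{-1}(3SAT_3)$ claimed in the statement.
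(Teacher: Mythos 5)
Your proof is correct, and it follows the same overall strategy as the paper: reduce the claim via Theorem~\ref{th:polinv} to showing that $\phi^{-1}(3SAT_j)$ is invariant under every idempotent polymorphism of $\Gamma$, using the fact that such operations descend to projections on the factor. What you do differently is unpack two steps that the paper compresses or delegates. First, the paper says ``we can assume, without loss of generality, that the set $\{0,1\}$ is contained in the universe of $\mathcal{B}$'' and then treats $\phi$ as if it were $B\to\{0,1\}$; you instead explicitly restrict to the union $B=B_0\cup B_1$ of two congruence classes, which is what actually makes $\phi$ surjective onto $\{0,1\}$ as the lemma claims (the implicit fact justifying the paper's shortcut is that in an algebra with only projections as term operations, every subset is a subalgebra). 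Second, where the paper defers to ``it is not hard to check (see \cite{csp-alg})'' for the invariance of the full preimages, you carry out the verification directly, and you correctly identify the one subtle point: $B$ itself need not be a subalgebra, so closure of $\phi^{-1}(3SAT_j)$ under $f$ is not automatic and really does hinge on $\bar f$ being a projection, which both keeps the output coordinate in $B$ and controls its $\phi$-value. Your closing remark that the argument gives $\phi^{-1}(R)\in\langle\Gamma\cup C_D\rangle$ for every Boolean $R$ is also correct and is exactly what this kind of argument yields. The only cosmetic caveat is that you silently identify $\Pol(\Gamma\cup C_D)$ with both the idempotent polymorphisms of $\Gamma$ and the term operations of $\mathcal{A}^c_\Gamma$; this identification is standard and correct (it follows from the Galois connection in Theorem~\ref{th:polinv} together with the definition of the full idempotent reduct), but if this were written into the paper a one-line justification would be appropriate.
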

\begin{proof}
Let $\mathcal{A}'$ be the subalgebra of $\mathcal{A}$ such that there
is a homomorphism $\phi$ from $\mathcal{A'}$ to an algebra
$\mathcal{B}$ whose term operations are only projections. We can
assume, without loss of generality, that the set $\{0,1\}$ is
contained in the universe of $\mathcal{B}$. It is easy to see that any
relation is invariant under any projections. Since $\mathcal{B}$ only
have projections as term operations, the four relations $3SAT_0$,
$3SAT_1$, $3SAT_2$ and $3SAT_3$ are invariant under the term
operations of $\mathcal{B}$. It is not hard to check
(see~\cite{csp-alg}) that the full preimages of those relations under
$\phi$ are invariant under the term operations of $\mathcal{A}'$ and
therefore they are also invariant under the term operations of
$\mathcal{A}$. By Theorem~\ref{th:polinv}, this implies
$\{\phi^{-1}(3SAT_0),$ $\phi^{-1}(3SAT_1),$ $\phi^{-1}(3SAT_2),$
$\phi^{-1}(3SAT_3)\} \subseteq \langle \Gamma \cup C_D \rangle$.
\end{proof}

We are now ready to present the proof of
Theorem~\ref{th:maxCSP-unary}. Let $S$ be a permutation group on the
set $X$.  An \emph{orbit} of $S$ is a subset $\Omega$ of $X$ such that
$\Omega = \{ g(x) \mid g \in S \}$ for some $x \in X$.

\begin{proof}
  Let $\mathcal{A}_\Gamma = (D; \Pol(\Gamma))$ be the algebra
  associated with $\Gamma$.  For any $a \in D$, we denote the unary
  constant relation containing only $a$ by $c_a$, i.e., $c_a =
  \{(a)\}$. Let $C_D$ denote the set of all constant relations over
  $D$, that is, $C_D = \{c_a \mid a \in D\}$. By Lemma~\ref{lem:altchar},
  there exists a subset (in fact, a subalgebra) $B$ of $D$ and a
  surjective mapping $\phi: B \rightarrow \{0, 1\}$ such that the
  relational clone $\langle \Gamma \cup C_D \rangle$ contains
  $\phi^{-1}(\Gamma_{3SAT}) = \{\phi^{-1}(R) \mid R \in \Gamma_{3SAT}\}$. By
  Lemma~\ref{lem:3sat-B}, \prob{Max CSP$(\Gamma_{3SAT})$-$B$} is hard
  at gap location 1, so, by Lemma~\ref{lem:core-gap1}, \prob{Max
  CSP$(\phi^{-1}(\Gamma_{3SAT}))$-$B$} is also hard at gap location 1
  (because $\Gamma_{3SAT}$ is the core of $\phi^{-1}(\Gamma_{3SAT})$).

  Since $\Gamma$ is a core, its unary polymorphisms form a permutation group $S$
  on $D$. We can without loss of generality assume
  that $D = \{1, \ldots, p\}$. It is known (see Proposition 1.3 of~\cite{Szendrei:book})
  and not hard to check (using Theorem~\ref{th:polinv}) that $\Gamma$ can perfectly
  implement the following relation:  $R_S = \{(g(1),\ldots,g(p)) \mid g \in S\}$.
  Then it can also perfectly implement the relations $EQ_i$ for
  $1 \leq i \leq p$ where $EQ_i$ is the restriction of the equality
  relation on $D$ to the orbit in $S$ which contains $i$. We have
\begin{align} \notag
 EQ_i(x,y) \iff \exists z_1, \ldots, z_{i-1}, z_{i+1}, \ldots, z_{p} :
& R_S(z_1, \ldots, z_{i-1}, x, z_{i+1}, \ldots, z_{p}) \land \\ \notag
& R_S(z_1, \ldots, z_{i-1}, y, z_{i+1}, \ldots, z_{p}) . \notag
\end{align}

For $0 \leq i \leq 3$, let $R_i$ be the preimage of $3SAT_i$ under
$\phi$. Since $R_i \in \langle \Gamma \cup C_D \rangle$, we can show
that there exists a $(p+3)$-ary relation $R'_i$ in $\langle \Gamma
\rangle$ such that
\[
  R_i = \{(x,y,z) \mid (1,2,\ldots,p,x,y,z) \in R'_i\} .
\]

Indeed, since $R_i \in \langle\Gamma \cup C_D \rangle$, $R_i$ can be
defined by a pp-formula $R_i(x,y,z) \iff \exists \mathbf{t}
\psi(\mathbf{t},x,y,z)$ (here $\mathbf{t}$ denotes a tuple of
variables) where $\psi$ is a conjunction of atomic formulas involving
predicates from $\Gamma \cup C_D$ and variables from $\mathbf{t}$ and
$\{x,y,z\}$. Note that, in $\psi$, no predicate from $C_D$ is applied
to one of $\{x,y,z\}$ because these variables can take more than one
value in $R_i$. We can without loss of generality assume that every
predicate from $C_D$ appears in $\psi$ exactly once. Indeed, if such a
predicate appears more than once, then we can identify all variables to
which it is applied, and if it does not appear at all then we can add
a new variable to $\mathbf{t}$ and apply this predicate to it.  Now
assume without loss of generality that the predicate $c_i$, $1\le i\le
p$, is applied to the variable $t_i$ in $\psi$, and $\psi=\psi_1\wedge
\psi_2$ where $\psi_1=\bigwedge_{i=1}^{p}{c_i(t_i)}$ and $\psi_2$
contains only predicates from $\Gamma\setminus C_D$. Let $\mathbf{t'}$
be the list of variables obtained from $\mathbf{t}$ by removing
$t_1,\ldots,t_{p}$. It now is easy to check that that the $(p+3)$-ary
relation $R'_i$ defined by the pp-formula $\exists \mathbf{t'}
\psi_2(\mathbf{t},x,y,z)$ has the required property.

Choose $R'_i$ to be the minimal relation in $\langle \Gamma \rangle$
such that
\[
  R_i = \{(x,y,z) \mid (1,2,\ldots,p,x,y,z)\in R'_i\} .
\]
We will show that
\[
  R'_i = \{(g(1), g(2), \ldots, g(p), g(x), g(y), g(z)) \mid g\in S, (x,y,z) \in R_i \} .
\]

The set on the right-hand side of the above equality must be
contained in $R'_i$ because $R'_i$ is invariant under all operations in
$S$. On the other hand, if a tuple
$\mathbf{b}=(b_1,\ldots,b_{p},d,e,f)$ belongs to $R'_i$, then there is
a permutation $g\in S$ such that
$(b_1,\ldots,b_{p})=(g(1),\ldots,g(p))$ (otherwise, the intersection
of this relation with $R_S\times D^3\in \langle \Gamma \rangle$
would give a smaller relation with the required property). Now note that
the tuple $(1,\ldots,p,g^{-1}(d),g^{-1}(e),g^{-1}(f))$ also belongs
to $R'_i$ implying, by the choice of $R'_i$, that
$(g^{-1}(d),g^{-1}(e),g^{-1}(f))\in R_i$. This completes the proof and
the relation $R'_i$ is as described above.

To simplify the notation, let $\Gamma' = \{R'_i \mid 0 \leq i \leq 3 \} \cup \{EQ_1, \ldots,
EQ_{p}\}$. By Lemma~\ref{lem:gap-impl}, in order to prove the theorem,
it suffices to show that \prob{Max CSP$(\Gamma')$-$B$} has a hard gap at
location 1. By Lemma~\ref{lem:3sat-B}, there is an integer $K$ such
that \prob{Max CSP$(\Gamma_{3SAT})$-$K$} has a hard gap at location 1. Choose $K$ such that $K > 14$.
By Lemma~\ref{lem:core-gap1}, \prob{Max
CSP$(\phi^{-1}(\Gamma_{3SAT}))$-$K$} has the same property. We will now
$AP$-reduce \prob{Max CSP$(\phi^{-1}(\Gamma_{3SAT}))$-$K$} to \prob{Max
CSP$(\Gamma')$-$B$}. Take an arbitrary instance $\I=(V,C)$ of
\prob{Max CSP$(\phi^{-1}(\Gamma_{3SAT}))$-$K$}, and build an instance $\I'=(V',C')$ of
\prob{Max CSP$(\Gamma')$} as follows: introduce new variables $u_1,\ldots,u_{p}$,
and replace each constraint $R_i(x,y,z)$ in $\I$ by $R'_i(u_1, \ldots,
u_{p}, x, y, z)$.  Note that every variable, except the $u_i$'s, in
$\I'$ appears at most $K$ times. We will now use expander graphs to
construct an instance $\I''$ of \prob{Max CSP$(\Gamma')$} with a constant
bound on the number of occurrences for each variables.

Let $q$ be the number of constraints in $\I$ and let $q' = \max \{N, q\}$,
where $N$ is the constant in Theorem~\ref{th:14expander}. Let $G = (W, E)$ be an
expander graph (constructed in polynomial time by the algorithm $T(q')$
in Theorem~\ref{th:14expander}) such that $W = \{w_1, w_2, \ldots,
w_m\}$ and $m \geq q$. The expander graph $T(q')$ have $q'(1+o(1))$
vertices. Hence, there is a constant $\alpha$ such that $T(q')$ has at
most $\alpha q$ vertices. For each $1\le j\le p$, we introduce $m$
fresh variables $w_1^j, w_2^j, \ldots, w_m^j$ into $\I''$. For each
edge $\{w_i, w_k\} \in E$ and $1 \leq j \leq p$, introduce $p$ copies of the
constraint $EQ_j(w_i^j, w_k^j)$ into $C''$. Let $C_1, C_2, \ldots,
C_q$ be an enumeration of the constraints in $C'$. Replace $u_j$ by
$w_i^j$ in $C_i$ for all $1 \leq i \leq q$. Finally, let $C^*$ be
the union of the (modified) constraints in $C'$ and the equality
constraints in $C''$. It is clear that each variable occurs in
$\I''$ at most $Kp$ times (recall that $p=|D|$ is a constant).

Clearly, a solution $s$ to $\I$ satisfying all constraints can be
extended to a solution to $\I''$, also satisfying all constraints,
by setting $s(w_i^j)=j$ for all $1\le i\le m$ and all $1\le j\le p$.

On the other hand, if $m(\I, s) \leq \epsilon |C|$, then let $s'$ be an
optimal solution to $\I''$. We will prove that there is a constant
$\epsilon'<1$ (which depends on $\epsilon$ but not on $\I$) such that
$m(\I'', s') \leq \epsilon' |C^*|$.

We first prove that, for each $1 \leq j \leq p$, we can assume
that all variables in $W^j = \{w_1^j, w_2^j, \ldots, w_m^j\}$
have been assigned the same value by $s'$ and that all constraints in
$C''$ are satisfied by $s'$. We show that given a solution $s'$ to
$\I''$, we can construct another solution $s_2$ such that $m(\I'', s_2)
\geq m(\I'', s')$ and $s_2$ satisfies all constraints in $C''$.

Let $a^j$ be the value that at least $m/p$ of the variables in $W^j$
have been assigned by $s'$. We construct the solution $s_2$ as follows:
$s_2(w_i^j) = a^j$ for all $i$ and $j$, and $s_2(x) = s'(x)$ for all
other variables.

If there is some $j$ such that $X = \{ x \in W^j \mid s'(x) \neq
a^j\}$ is non-empty, then, since $G$ is an expander graph,  there
are at least $p\cdot\min(|X|, |W^j \setminus X|)$ constraints in
$C''$ which are not satisfied by $s'$. Note that by the choice of
$X$, we have $|W^j \setminus X| \ge m/p$ which implies
$p\cdot\min(|X|, |W^j \setminus X|)\ge |X|$.
By changing the value of the variables in $X$, we will make at most
$|X|$ non-equality constraints in $C^*$ unsatisfied because each of
the variables in $W^j$ occurs in at most one non-equality constraint
in $C^*$. In other words, when the value of the variables in $X$ are
changed we gain at least $|X|$ in the measure as some of the
equality constraints in $C''$ will become satisfied, furthermore we lose
at most $|X|$ by making at most $|X|$ constraints in $C^*$
unsatisfied. We conclude that $m(\I', s_2) \geq m(\I', s')$. Thus,
we may assume that all equality constraints in $C''$ are satisfied
by $s'$.

Since the expander graph $G$ is 14-regular and has at most $\alpha
q$ vertices, it has at most $\frac{14}{2}\alpha q$ edges. Hence, the
number of equality constraints in $C''$ is at most $7\alpha qp$, and
$|C''|/|C'|\leq 7\alpha p$. We can now bound $m(\I'', s_2)$ as follows:
\begin{align}
  m(\I'', s_2) \leq \opt(\I') + |C''| \leq
  \frac{\epsilon |C'| + |C''|}{|C'| + |C''|} (|C'| + |C''|) \leq
  \frac{\epsilon + 7\alpha p}{1+7\alpha p} (|C'| + |C''|) . \notag
\end{align}

Since $|C^*|=|C'| + |C''|$, it remains to set
$\epsilon'=\frac{\epsilon + 7\alpha p}{1+7\alpha p}$.
\end{proof}

We finish this section by using Theorem~\ref{th:maxCSP-unary} to
answer, at least partially, two open questions. The first one
concerns the complexity of \prob{CSP$(\Gamma)$-$B$}. In particular,
the following conjecture has been made by
Feder~et al.~\cite{list-hom}.

\begin{conjecture}
For any fixed $\Gamma$ such that \prob{CSP$(\Gamma)$} is
\cc{NP}-complete there is an integer $k$ such that
\prob{CSP$(\Gamma)$-$k$} is \cc{NP}-complete.
\end{conjecture}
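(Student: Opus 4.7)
The plan is to confirm the conjecture for every $\Gamma$ whose \cc{NP}-completeness is witnessed by the algebraic criterion of Theorem~\ref{th:CSP-unary}, which currently accounts for all \cc{NP}-complete cases of \prob{CSP$(\Gamma)$} known. The key observation is that a hard gap at location~1 decides satisfiability as a side effect, so Theorem~\ref{th:maxCSP-unary} almost immediately yields the desired bounded-occurrence \cc{NP}-completeness.

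First I would reduce to the core case. Set $\Gamma' = \pi(\Gamma)$ for a retraction $\pi : D \rightarrow D'$ witnessing that $\Gamma'$ is the core of $\Gamma$. If \prob{CSP$(\Gamma)$} is \cc{NP}-complete via the known criterion, then $\mathcal{A}^c_{\Gamma'}$ has a non-trivial factor whose term operations are only projections, so the hypothesis of Theorem~\ref{th:maxCSP-unary} is satisfied by $\Gamma'$. Applying that theorem gives a fixed integer $k$ such that \prob{Max CSP$(\Gamma')$-$k$} has a hard gap at location~1: an \cc{NP}-complete problem $\Pi$ reduces in polynomial time to \prob{Max CSP$(\Gamma')$-$k$} with YES-instances mapped to fully satisfiable instances and NO-instances to instances where at most an $\epsilon$-fraction of the constraints can be simultaneously satisfied (hence in particular unsatisfiable). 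The same map, read as a reduction to the decision problem \prob{CSP$(\Gamma')$-$k$}, distinguishes YES from NO via satisfiability, so \prob{CSP$(\Gamma')$-$k$} is \cc{NP}-hard.

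Next I would transfer the hardness from $\Gamma'$ back to $\Gamma$ by the identity-style reduction: given an instance of \prob{CSP$(\Gamma')$-$k$}, replace each constraint $(\pi(R), \mathbf{s})$ by $(R, \mathbf{s})$. Satisfiability is preserved in both directions, since any solution $s : V \rightarrow D'$ of the original remains a solution of the modified instance (as $\pi(R) \subseteq R$), while any solution $s' : V \rightarrow D$ of the modified instance yields $\pi \circ s'$ as a solution of the original (because each $R \in \Gamma$ is closed under $\pi$, so $\pi(s'(\mathbf{s})) \in \pi(R)$). Since this transformation leaves the number of occurrences of each variable unchanged, \prob{CSP$(\Gamma)$-$k$} is \cc{NP}-complete for the same $k$.

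I expect no serious technical obstacle: the only care lies in the bookkeeping of the retraction-based transfer and in remembering that the conclusion is necessarily conditional on \prob{CSP$(\Gamma)$} being \cc{NP}-complete for the algebraic reason currently known. A complete resolution of the conjecture, covering hypothetical future \cc{NP}-completeness proofs by other means, would require an unconditional CSP dichotomy theorem and is out of reach of the present machinery.
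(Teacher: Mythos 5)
Your proposal is correct and takes essentially the same route as the paper, which states only that ``an affirmative answer follows immediately from Theorem~\ref{th:maxCSP-unary}'' under the assumption that the CSP dichotomy conjecture holds (equivalently, for all currently known \cc{NP}-complete cases). You usefully spell out what the paper leaves implicit --- that a hard gap at location~1 immediately yields \cc{NP}-hardness of the bounded-occurrence decision problem, and that the retraction-based transfer from the core $\Gamma'$ back to $\Gamma$ preserves the occurrence bound --- and you correctly flag the conditional nature of the conclusion.
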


Under the assumption that the CSP conjecture (that all problems
\prob{CSP$(\Gamma)$} not covered by Theorem~\ref{th:CSP-unary} are
tractable) holds, an affirmative answer follows immediately from
Theorem~\ref{th:maxCSP-unary}. So for all constraint languages
$\Gamma$ such that \prob{CSP$(\Gamma)$} is currently known to be
\cc{NP}-complete it is also the case that \prob{CSP$(\Gamma)$-$B$} is
\cc{NP}-complete.

The second result concerns the approximability of equations over
non-abelian groups. Petrank~\cite{gap-location} has noted that
hardness at gap location 1 implies the following: suppose that we
restrict ourselves to instances of \prob{Max CSP$(\Gamma)$} such that
there exist solutions that satisfy all constraints, i.e. we
concentrate on \emph{satisfiable} instances. Then, there exists a
constant $c$ (depending on $\Gamma$) such that no polynomial-time
algorithm can approximate this problem within $c$ (unless \cc{P} =
\cc{NP}). We get the following result for satisfiable instances:

\begin{corollary} \label{satcorollary}
  Let $\Gamma$ be a core constraint language and let $\mathcal{A}$ be
  the algebra associated with $\Gamma$. Assume there is a factor
  $\mathcal{B}$ of $\mathcal{A}^c$ such that $\mathcal{B}$ only have
  projections as term operations. Then, there exists a constant $c$
  such that \prob{Max CSP$(\Gamma)$-$B$} restricted to satisfiable
  instances cannot be approximated within $c$ in polynomial time
  (unless \cc{P} = \cc{NP}).
\end{corollary}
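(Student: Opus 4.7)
The plan is to derive the corollary directly from Theorem~\ref{th:maxCSP-unary}. That theorem guarantees that, under the stated algebraic hypothesis on $\Gamma$, the problem \prob{Max CSP$(\Gamma)$-$B$} has a hard gap at location $1$. By the definition of a hard gap at location $1$, this means there is an \cc{NP}-complete problem $\Pi$, a constant $\epsilon<1$, and a polynomial-time reduction from $\Pi$ to \prob{Max CSP$(\Gamma)$-$B$} such that \textsc{Yes}-instances of $\Pi$ are mapped to satisfiable instances $\I=(V,C)$ (i.e. $\opt(\I)=|C|$) and \textsc{No}-instances are mapped to instances with $\opt(\I)\le \epsilon|C|$.

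Next, I would fix any constant $c$ with $1<c<1/\epsilon$ and assume for contradiction that a polynomial-time algorithm $A$ approximates \prob{Max CSP$(\Gamma)$-$B$} within $c$ on satisfiable instances. I would use $A$ to solve $\Pi$ in polynomial time as follows: given an instance $x$ of $\Pi$, apply the reduction above to obtain $\I=(V,C)$, run $A$ on $\I$, and count the number $m$ of constraints satisfied by the returned assignment; accept $x$ iff $m>\epsilon|C|$. For a \textsc{Yes}-instance the reduced instance is satisfiable, so the hypothesis on $A$ guarantees an output of value at least $|C|/c > \epsilon|C|$, and the procedure accepts. For a \textsc{No}-instance \emph{every} assignment satisfies at most $\opt(\I)\le \epsilon|C|$ constraints, so whatever $A$ returns, the test fails and the procedure rejects. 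This contradicts the \cc{NP}-completeness of $\Pi$ unless $\cc{P}=\cc{NP}$.

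Essentially no obstacle is expected: the only subtle point is that the approximation algorithm $A$ is, by assumption, only meaningful on satisfiable instances, so one might worry that $A$ produces garbage on \textsc{No}-instances and spuriously accepts. However, no guarantee on $A$'s output is needed for \textsc{No}-instances, because the upper bound $\opt(\I)\le \epsilon|C|$ is an unconditional property of $\I$, automatically bounding the value of \emph{any} assignment (in particular whatever $A$ outputs). Thus the proof is a short unpacking of the meaning of ``hard gap at location $1$'', specialised to the promise that the input is satisfiable; the substantive content is entirely contained in Theorem~\ref{th:maxCSP-unary}.
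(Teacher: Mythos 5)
Your proposal is correct and takes essentially the same approach as the paper, which simply cites Petrank's observation that a hard gap at location~1 implies inapproximability on satisfiable instances and combines it with Theorem~\ref{th:maxCSP-unary}. You have merely spelled out the short argument behind that observation, including the (correct) note that no guarantee on the approximation algorithm's behaviour is needed on \textsc{No}-instances since $\opt(\I)\le\epsilon|C|$ bounds the value of any output assignment.
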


We will now use this observation for studying a problem
concerning groups.  Let $\mathcal{G}=(G,\cdot)$ denote a finite
group with identity element $1_G$.  An \emph{equation} over a set of
variables $V$ is an expression of the form $w_1 \cdot \ldots \cdot
w_k = 1_G$, where $w_i$ (for $1 \leq i \leq k$) is either a
variable, an inverted variable, or a group constant.
Engebretsen~et al.~\cite{inappgroupeqns} have studied the following
problem:
\begin{definition}[\prob{Eq$_\mathcal{G}$}] \label{def:eqG}
  The computational problem \prob{Eq$_\mathcal{G}$} (where
  $\mathcal{G}$ is a finite group) is defined to be the optimisation
  problem with

\smallskip

  \begin{description}
  \item[Instance:] A set of variables $V$ and a collection of
    equations $E$ over $V$.

  \item[Solution:] An assignment $s : V \rightarrow G$ to the
    variables.

  \item[Measure:] Number of equations in $E$ which are satisfied by $s$.
  \end{description}
\end{definition}

\smallskip

The problem \prob{Eq$_\mathcal{G}^1$[3]} is the same as
\prob{Eq$_\mathcal{G}$} except for the additional restrictions that
each equation contains exactly three variables and no equation
contains the same variable more than once. Their main result was the
following inapproximability result:

\begin{theorem}[Theorem~1 in~\cite{inappgroupeqns}] \label{th:inappeq}
  For any finite group $\mathcal{G}$ and constant $\epsilon > 0$, it
  is \cc{NP}-hard to approximate \prob{Eq$_\mathcal{G}^1$[3]} within
  $|G| - \epsilon$.
\end{theorem}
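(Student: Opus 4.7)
The plan is to prove this via a long-code PCP reduction adapted to the group $\mathcal{G}$, following the template of H\aa stad's optimal inapproximability results. First note that the ratio $|G|-\epsilon$ is essentially tight: in any equation $w_1\cdot w_2\cdot w_3=1_G$ with three distinct variables, fixing the values of the first two determines the third uniquely, so a uniformly random assignment satisfies each equation with probability exactly $1/|G|$. Thus the goal is to produce a reduction with completeness approaching $1$ and soundness approaching $1/|G|$.

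Concretely, I would start from a two-prover one-round game (equivalently, a gap version of Label Cover) with arbitrarily small soundness error, obtained by combining the PCP theorem with parallel repetition. For each prover-vertex $v$ with label set $[n_v]$, I would introduce a variable $f_v(x)$ for every $x\in G^{n_v}$, thought of as the supposed Long Code over $\mathcal{G}$ of the label of $v$, and enforce the usual folding conditions so that only essentially one representative per $\mathcal{G}$-orbit need be considered (this is what makes every test collapse into a three-variable equation in which no variable is repeated). For each edge $(u,v)$ with projection $\pi$, I would then design a three-query test of the form $f_u(x)\cdot f_v(y)^{\varepsilon_1}\cdot f_u(z)^{\varepsilon_2}=1_G$, where $z$ is obtained from $x$ and $y$ by a carefully chosen noise distribution on $G^{n_v}$ tailored so that honest long codes pass with probability close to $1$, while any pair of functions inducing a strategy that satisfies the test with probability noticeably above $1/|G|$ can be decoded into a nontrivial strategy for the underlying Label Cover game.

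The soundness analysis is the main obstacle. For abelian $\mathcal{G}$ this is a direct generalization of H\aa stad's Fourier-analytic arithmetization: characters form an orthonormal basis and the acceptance probability can be expanded into a sum over triples of characters that is dominated by the diagonal terms, which in turn correspond to Label Cover assignments. For non-abelian $\mathcal{G}$ one must work instead with the irreducible unitary representations of $\mathcal{G}$ and estimate matrix-valued Fourier coefficients; the extra obstacle is that representations of dimension $>1$ do not multiply scalarly, and one needs to choose the inversion pattern $(\varepsilon_1,\varepsilon_2)$ and the noise distribution so that only the trivial representation survives the expectation unless the strategy has significant "influence" on a small set of coordinates. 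This is essentially the content of Theorem~1 of~\cite{inappgroupeqns}.

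Finally, I would argue that each test, after folding, is a legitimate instance of \prob{Eq$_\mathcal{G}^1$[3]}, and that a straightforward averaging argument converts the completeness--soundness gap on individual tests into a gap on the fraction of satisfied equations of the form $1-o(1)$ versus $1/|G|+o(1)$, which yields the claimed inapproximability within $|G|-\epsilon$ for every fixed $\epsilon>0$. Note that an appeal to Corollary~\ref{satcorollary} alone will not suffice here: for abelian groups the Maltsev operation $(x,y,z)\mapsto xy^{-1}z$ is a wnuf polymorphism of the group equation relation, so the algebraic hypothesis of Theorem~\ref{th:maxCSP-unary} fails and the purely algebraic machinery developed in Section~\ref{sec:unary} cannot deliver the sharp bound, even though it can yield hardness at gap location 1 in the non-abelian case.
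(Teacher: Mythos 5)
This statement is quoted verbatim from Engebretsen, Holmerin, and Russell (the bracketed attribution ``Theorem~1 in~\cite{inappgroupeqns}'' is part of the theorem header), and the paper you are reading does \emph{not} prove it; it cites it as a known result in order to motivate the open question about satisfiable instances of \prob{Eq$_\mathcal{G}$}, which is then partially addressed via Theorem~\ref{th:maxCSP-unary} and Corollary~\ref{satcorollary}. So there is no ``paper's own proof'' to compare against, and your attempt is really a reconstruction of the argument in the cited reference.

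As such a reconstruction, your sketch captures the correct high-level structure: the observation that a uniformly random assignment satisfies each equation with probability $1/|G|$, so $|G|-\epsilon$ is the right target; the reduction from low-soundness Label Cover obtained via PCP plus parallel repetition; the use of Long Codes over $\mathcal{G}$ with folding to keep the queries distinct; the three-query noisy test; and, for the soundness analysis, characters for abelian $\mathcal{G}$ versus irreducible unitary representations and matrix-valued Fourier coefficients for non-abelian $\mathcal{G}$. These are indeed the ingredients in Engebretsen et al.'s proof. The sketch is at a high enough level that the genuinely hard parts (the exact form of the noise distribution, the choice of inversion pattern, the decoding argument from a high-acceptance strategy to a Label Cover labelling, and especially the representation-theoretic estimates when $\dim > 1$) are named but not carried out, so it should be read as a proof plan rather than a proof. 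Your final remark is a correct and useful observation in the context of this paper: for abelian $\mathcal{G}$ the Maltsev operation $(x,y,z)\mapsto xy^{-1}z$ is a polymorphism (indeed a wnuf) of $\Gamma_\mathcal{G}$, so Theorem~\ref{th:maxCSP-unary} does not apply at all; and even for non-abelian $\mathcal{G}$ the algebraic reduction in Section~\ref{sec:unary} only yields \emph{some} constant gap, not the sharp $|G|-\epsilon$ bound. This is exactly why the paper must cite Theorem~\ref{th:inappeq} rather than derive it, and conversely why its own Corollary on \prob{Eq$_\mathcal{G}$} adds something (hardness at gap location~1, i.e.\ on satisfiable instances, with bounded occurrences) that the cited theorem does not give.
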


Engebretsen~et al.~left the approximability of
\prob{Eq$_\mathcal{G}^1$[3]} for satisfiable instances as an open
question. We will give a partial answer to the approximability of
satisfiable instances of \prob{Eq$_\mathcal{G}$}.

It is not hard to see that for any integer $k$, the equations with at
most $k$ variables over a finite group can be viewed as a constraint
language. For a group $\mathcal{G}$, we denote the constraint language
which corresponds to equations with at most three variables by
$\Gamma_\mathcal{G}$. Hence, for any finite group $\mathcal{G}$, the
problem \prob{Max CSP$(\Gamma_\mathcal{G})$} is no harder than
\prob{Eq$_\mathcal{G}$}.

Goldmann and Russell~\cite{groupeqns} have shown that
\prob{CSP$(\Gamma_\mathcal{G})$} is \cc{NP}-hard for every finite
non-abelian group $\mathcal{G}$. This result was extended to more
general algebras by Larose and Z\'adori~\cite{univeqns}.  They also
showed that for any non-abelian group $\mathcal{G}$, the algebra
$\mathcal{A} = (G; \Pol(\Gamma_\mathcal{G}))$ has a non-trivial factor
$\mathcal{B}$ such that $\mathcal{B}$ only have projections as
term operations.  We now combine Larose and Z\'adori's result with
Theorem~\ref{th:maxCSP-unary}:
\begin{corollary}
  For any finite non-abelian group $\mathcal{G}$,
  \prob{Eq$_\mathcal{G}$} has a hard gap at location 1.
\end{corollary}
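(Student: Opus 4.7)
The plan is to verify the hypotheses of Theorem~\ref{th:maxCSP-unary} for the constraint language $\Gamma_\mathcal{G}$ and then transfer the resulting hard gap to \prob{Eq$_\mathcal{G}$}.

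First I would check that $\Gamma_\mathcal{G}$ is a core. For every $g \in G$ the one-variable equation $x \cdot g^{-1} = 1_G$ picks out the unary constant relation $\{g\}$, so $C_G \subseteq \Gamma_\mathcal{G}$. Any retraction of $\Gamma_\mathcal{G}$ must then fix each element of $G$ pointwise, and hence be the identity, so $\Gamma_\mathcal{G}$ is a core. The same observation forces every polymorphism of $\Gamma_\mathcal{G}$ to be idempotent: applying it componentwise to copies of the tuple $(g) \in \{g\}$ must land back in $\{g\}$. Consequently one has the convenient identification $\mathcal{A}^c_{\Gamma_\mathcal{G}} = \mathcal{A}_{\Gamma_\mathcal{G}} = (G; \Pol(\Gamma_\mathcal{G}))$.

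Next I would invoke the result of Larose and Z\'adori cited immediately before the corollary: for any finite non-abelian group $\mathcal{G}$, the algebra $\mathcal{A}_{\Gamma_\mathcal{G}}$ admits a non-trivial factor whose only term operations are projections. Combined with the identification from the previous step, this is exactly the algebraic hypothesis of Theorem~\ref{th:maxCSP-unary} for the core language $\Gamma_\mathcal{G}$, so the theorem yields a hard gap at location~1 for \prob{Max CSP$(\Gamma_\mathcal{G})$-$B$}.

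Finally I would transfer the gap from \prob{Max CSP$(\Gamma_\mathcal{G})$-$B$} to \prob{Eq$_\mathcal{G}$}. By construction of $\Gamma_\mathcal{G}$, every instance of the former is verbatim an instance of the latter, and the two measures agree on corresponding constraints; therefore the reduction witnessing the hard gap at location~1 for \prob{Max CSP$(\Gamma_\mathcal{G})$-$B$} simultaneously witnesses a hard gap at location~1 for \prob{Eq$_\mathcal{G}$}. The only step I expect to require any care is the identification $\mathcal{A}^c_{\Gamma_\mathcal{G}} = \mathcal{A}_{\Gamma_\mathcal{G}}$, which is what licenses using Larose and Z\'adori's factor as a factor of the full idempotent reduct demanded by Theorem~\ref{th:maxCSP-unary}; beyond this bookkeeping the argument is a direct chaining of quoted results.
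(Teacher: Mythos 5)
Your proposal is correct and follows the same route as the paper: it cites the Larose--Z\'adori result that the algebra associated with $\Gamma_\mathcal{G}$ has a non-trivial factor with only projections as term operations, feeds this into Theorem~\ref{th:maxCSP-unary}, and then observes that any instance of \prob{Max CSP$(\Gamma_\mathcal{G})$} is verbatim an instance of \prob{Eq$_\mathcal{G}$}. The paper states the corollary without an explicit proof, leaving these steps implicit; your write-up correctly fills in the one non-trivial piece of bookkeeping, namely that since $C_G \subseteq \Gamma_\mathcal{G}$ (via the one-variable equations $x \cdot g^{-1} = 1_G$), every polymorphism of $\Gamma_\mathcal{G}$ is idempotent, $\Gamma_\mathcal{G}$ is a core, and hence $\mathcal{A}^c_{\Gamma_\mathcal{G}} = \mathcal{A}_{\Gamma_\mathcal{G}}$, so the Larose--Z\'adori factor really is a factor of the full idempotent reduct as Theorem~\ref{th:maxCSP-unary} requires.
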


Thus, there is a constant $c$ such that no polynomial-time algorithm
can approximate satisfiable instances of \prob{Eq$_\mathcal{G}$}
better than $c$, unless $\cc{P} = \cc{NP}$. There also exists a
constant $k$ (depending on the group $\mathcal{G}$) such that the
result holds for instances with variable occurrence bounded by $k$.

\section{Result B: Approximability of Single Relation \prob{Max CSP}} \label{sec:single}
In this section, we will prove the following theorem:
\begin{theorem} \label{th:single}
  Let $R \in R_D^{(n)}$ be non-empty. If $(d, \ldots, d) \in R$ for
  some $d \in D$, then \prob{Max CSP$(\{R\})$} is solvable in linear
  time. Otherwise, \prob{Max CSP$(\{R\})$-$B$} is hard to
  approximate.
\end{theorem}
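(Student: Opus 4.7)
The trivial direction is immediate: if $(d,\ldots,d)\in R$ for some $d\in D$, the constant assignment $v\mapsto d$ satisfies every constraint in any instance in linear time.

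For the hard direction, assume $R$ is non-empty and non-valid. First I would reduce to the core case via Lemma~\ref{lem:core}, observing that non-validity is preserved under retractions: the core of $\{R\}$ is $\{\pi(R)\}$ with $\pi(R)\subseteq R$ for an idempotent retraction $\pi$, so any constant tuple in $\pi(R)$ would already lie in $R$, contradicting non-validity.

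With $\{R\}$ a non-valid core, my plan is to $AP$-reduce from a Max CSP problem that is already known to be hard to approximate. If the algebra $\mathcal{A}^c_{\{R\}}$ satisfies the condition of Theorem~\ref{th:maxCSP-unary} (Result A), i.e.\ has a non-trivial factor whose term operations are only projections, then Result A applies directly and yields a hard gap at location $1$. Otherwise, by Theorem~\ref{th:wnuf}, $\{R\}$ admits a weak near-unanimity polymorphism; examples in this regime include the disequality relation, which yields Max $k$-Cut. In this second case I would trace through the Jonsson--Krokhin NP-hardness proof of \prob{Max CSP$(\{R\})$} in~\cite{max-subdi} and verify that the explicit gadgets built from $R$ using non-validity are in fact strict implementations in the sense of Section~\ref{sec:red-tech}; then Lemma~\ref{lem:strict} upgrades the NP-hardness reduction to an $AP$-reduction. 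The source problem can be taken to be a Max CSP that is approximation-hard by independent means --- Max Cut when $|D|=2$ via the Boolean Max CSP dichotomy, or a richer Max CSP on $|D|>2$ to which Result A applies.

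The main obstacle is the ``wnuf case'': the rich polymorphism structure of $\{R\}$ blocks any direct application of Result A to $\{R\}$, so the seed hard problem must be carved out of $\{R\}$'s own structure using only non-validity --- in particular, without being able to strictly implement arbitrary constants when $R$ has a group-like symmetry. The bounded-occurrence refinement in both cases is handled, just as in the proof of Theorem~\ref{th:maxCSP-unary}, by inserting 14-regular expander graphs to identify repeated occurrences of auxiliary variables while incurring only a constant-factor loss in the gap.
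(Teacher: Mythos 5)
Your trivial direction and the reduction to cores are correct and match the paper. Your high-level dichotomy (``either Result~A applies to the algebra of $\{R\}$, or $\{R\}$ admits a wnuf'') is also conceptually on track, but the proposal has a genuine gap in the wnuf branch and a structural misunderstanding of how the paper exploits Result~A.

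First, the paper does not handle the wnuf case by ``tracing through the Jonsson--Krokhin NP-hardness proof and verifying their gadgets are strict implementations.'' That plan is not a proof: there is no argument here that those gadgets are strict, and in fact the paper's proof is a fresh construction. The paper first reduces arity to $2$ (Lemma~\ref{lem:arity}, a strict-implementation argument you omit entirely), and then analyses digraphs by vertex-transitivity. The key technical fact that makes this work is Lemma~\ref{th:trans-alg-andrei}: a vertex-transitive core digraph that does not retract to a directed cycle admits \emph{no} wnuf --- so after passing through strict implementations, orbit-based domain restrictions (Lemma~\ref{lem:orbit-rest}, Lemma~\ref{lem:non-transitive}), and the arity reduction, Result~A can be brought to bear on a \emph{derived} relation even when the original $R$ has rich polymorphism structure. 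The genuine ``Result~A cannot help at all'' cases are much narrower than your proposal suggests: only directed cycles and bipartite graphs, which are handled via strict implementation of $\neq_k$ and reduction to $\prob{Max $k$-Cut}$ (Lemmas~\ref{lem:undircycle}, \ref{lem:double-edge}, \ref{lem:bipartite}). Your framing suggests the whole wnuf regime must be handled without Result~A, which is the wrong picture and is precisely why the ``seed hard problem carved from $R$'s own structure'' step feels intractable in your sketch --- the paper avoids this by moving to a smaller relation first.

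Second, the expander-graph technique does not appear in the proof of Result~B. It is used only inside the proof of Theorem~\ref{th:maxCSP-unary} (Result~A) to bound the occurrences of the global constant variables $u_1,\ldots,u_p$ in the reduction from $\prob{Max 3-SAT}$. Result~B inherits bounded occurrence for free through Lemmas~\ref{lem:strict}, \ref{lem:core}, \ref{lem:domrest}, and~\ref{lem:orbit-rest}, all of which are $AP$-reductions that only blow up the occurrence bound by a constant factor. Saying you will ``insert 14-regular expanders to identify repeated occurrences of auxiliary variables'' in this part of the argument signals a misreading of where the expander argument is needed.
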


The proof makes crucial use of Theorem~\ref{th:maxCSP-unary} and it
can be divided into a number of steps:
\begin{enumerate}
\item Lemma~\ref{lem:double-edge} together with
  Lemma~\ref{lem:undircycle} proves that directed cycles are
  hard to approximate (i.e., the theorem holds when $R$ is the
  edge relation of a directed cycle).

\item Vertex-transitive digraphs which are not directed cycles are
  proved to be hard to approximate in Lemma~\ref{lem:trans}.

\item Lemma~\ref{lem:bipartite} give approximation hardness for
  bipartite digraphs.

\item Lemma~\ref{lem:non-transitive} reduces the non-vertex transitive
  case to the vertex-transitive case.

\item Lemma~\ref{lem:arity} reduces general relations to binary
  relations, i.e., to digraphs. \label{it:arity}

\item Finally, Theorem~\ref{th:single} is proved by assembling the
  results from the previous sections.
\end{enumerate}

As indicated by the list above the bulk of the work deals with binary
relations.

\subsection{Approximability of Binary Relations} \label{sec:bin-rel}
In this section, we will prove that non-empty non-valid binary
relations give rise to \prob{Max CSP} problems which are hard to
approximate. Subsection~\ref{sec:trans-digraph} deals with binary
(not necessarily symmetric) relations having a transitive
automorphism group, and Section~\ref{sec:general-digraph} deals with general
binary relations.

Sometimes it will be convenient for us to view binary relations as
digraphs. A \emph{digraph} is a pair $(V, E)$ such that $V$ is a
finite set and $E \subseteq V \times V$. A \emph{graph} is a digraph
$(V, E)$ such that for every pair $(x, y) \in E$ we also have $(y, x)
\in E$. Let $R \in R_D$ be a binary relation. As $R$ is binary it can be
viewed as a digraph $G$ with vertex set $V[G] = D$ and edge set $E[G]
= R$.  We will mix freely between those two notations. For example, we
will sometimes write $(x, y) \in G$ with the intended meaning $(x,y)
\in E[G] = R$.

Let $G$ be a digraph, $R=E[G]$, and let $\Aut(G)$ denote the
automorphism group of $G$. If $\Aut(G)$ is transitive (i.e., contains
a single orbit), then we say that $G$ is
\emph{vertex-transitive}. If $D$ can be partitioned into two sets,
$A$ and $B$, such that for any $x, y \in A$ (or $x,y \in B$) we have
$(x, y) \not \in R$, then $R$ (and $G$) is \emph{bipartite}. The
\emph{directed cycle of length
  $n$} is the digraph $G$ with vertex set $V[G] = \{0, 1, \ldots,
n-1\}$ and edge set $E[G] = \{ (x, x+1) \mid x \in V[G] \}$, where the
addition is modulo $n$.  Analogously, the \emph{undirected cycle of
  length $n$} is the graph $H$ with vertex set $V[H] = \{0, 1, \ldots,
n-1\}$ and edge set $E[H] = \{(x, x+1) \mid x \in V[H] \} \cup \{(x+1,
x) \mid x \in V[H] \}$ (also in this case the additions are modulo
$n$). The undirected path with two vertices will be denoted by $P_2$.




\subsubsection{Vertex-transitive Digraphs} \label{sec:trans-digraph}
We will now tackle non-bipartite vertex-transitive digraphs and prove
that they give rise to \prob{Max CSP} problems which are hard at gap
location 1. To do this, we make use of the algebraic framework which
we used and developed in Section~\ref{sec:gap1}. Recall that we denote
the unary constant relations over a domain $D$ by $C_D$, i.e., $C_D =
\{ \{(x)\} \mid x \in D \}$. We will need certain hardness results in
the forthcoming proofs.

\begin{theorem}[\cite{hcol-rev}] \label{th:hcol-alg}
Let $G$ be an undirected core graph and let $\mathcal{A}_G$ be the
  algebra associated with $G$.  If $G$ is not bipartite, then there is
  a factor of $\mathcal{A}_G^c$ which only have projections as term
  operations.
\end{theorem}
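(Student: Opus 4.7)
The plan is to invoke the algebraic characterisation of Theorem~\ref{th:wnuf}: the existence of a factor of $\mathcal{A}_G^c$ whose term operations are only projections is equivalent to $\mathcal{A}_G^c$ admitting no weak near-unanimity operation. So it suffices to prove that a non-bipartite undirected core graph $G$ has no idempotent wnuf in $\Pol(G)$, and then appeal to Theorem~\ref{th:wnuf} to obtain the required factor.

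The combinatorial starting point is that non-bipartiteness supplies an odd cycle. First I would fix a shortest odd cycle $C = v_0 v_1 \cdots v_{2k}$ in $G$; by minimality, $V(C)$ carries no chord, so the induced subgraph on $V(C)$ is exactly $C_{2k+1}$. Next, suppose for contradiction that $f \in \Pol(G)$ is an idempotent wnuf of arity $n \ge 2$. The heart of the argument is a parity analysis of how $f$ can act on tuples drawn from $V(C)$: the wnuf equations force
\[
  f(v_1, v_0, \ldots, v_0) = f(v_0, v_1, v_0, \ldots, v_0) = \cdots = f(v_0, \ldots, v_0, v_1) = w
\]
for a single element $w \in V(G)$, while the polymorphism property applied to the cycle edges $\{v_i, v_{i+1}\}$ forces $w$ to be adjacent in $G$ to $v_0$. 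Iterating these edge-preservation constraints as one slides the special coordinate $v_1$ around tuples whose entries range over $V(C)$ produces a closed walk in $G$ whose length modulo $2$ clashes with the odd length of $C$; here core-ness enters crucially, since it prevents such a walk from being shortcut through vertices outside $V(C)$.

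The main obstacle I anticipate is that an arbitrary polymorphism of $G$ need not preserve $V(C)$ setwise, so the parity argument has to be executed inside the full algebra $\mathcal{A}_G^c$ rather than in a conveniently restricted subalgebra. To handle this I would exploit the core property of $G$ together with the minimality of $C$: any idempotent endomorphism of a core is the identity, and this rigidity can be leveraged to show that the image of tuples from $V(C)^n$ under $f$ is pinned down closely enough to $V(C)$ that the closed walk constructed above lives effectively on the cycle. Once this containment is in hand, the odd-cycle parity obstruction contradicts the existence of $f$, and an application of Theorem~\ref{th:wnuf} yields a factor of $\mathcal{A}_G^c$ whose only term operations are projections.
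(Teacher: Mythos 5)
This statement is cited in the paper from Bulatov's ``H-coloring dichotomy revisited''~\cite{hcol-rev}; the paper itself gives no proof, so you are not reconstructing an argument in the paper but attempting to reprove a nontrivial external result. Reducing to the non-existence of a wnuf via Theorem~\ref{th:wnuf} is fine, but the sketch after that has a concrete error and a genuine unresolved gap.

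The concrete error is in the step where you claim ``the polymorphism property applied to the cycle edges $\{v_i,v_{i+1}\}$ forces $w$ to be adjacent in $G$ to $v_0$.'' To conclude that $f$ sends two tuples to adjacent vertices you need those tuples to be adjacent \emph{in every coordinate}. The tuple $(v_1,v_0,\ldots,v_0)$ is not coordinatewise adjacent to $(v_0,v_0,\ldots,v_0)$ unless $G$ has a loop at $v_0$, which a core graph (other than the one-vertex loop) does not. So you cannot deduce $w\sim v_0$. What you \emph{can} deduce is, e.g., that $f(v_1,v_0,\ldots,v_0)$ is adjacent to $f(v_0,v_1,\ldots,v_1)$, but this is a different and much weaker constraint, and the parity argument as outlined does not go through from it.

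The deeper gap is the one you flag yourself but do not resolve: an idempotent polymorphism of $G$ of arity $n\ge 2$ has no reason to map $V(C)^n$ into $V(C)$, or even into any set you can control using the fact that $G$ is a core. The rigidity of cores (``any endomorphism of a core is an automorphism'') governs unary operations; it does not confine the values of a higher-arity operation to the subgraph you feed it. This is precisely why the known proofs of this theorem are substantially more involved: Bulatov's argument~\cite{hcol-rev} works through pp-definability and tame-congruence-theoretic types rather than a direct parity walk, and the alternative modern route goes through the absorption/loop-lemma machinery of Barto, Kozik, and Niven, which is again far from an elementary cycle-chasing argument. As written, your sketch does not close either of these gaps, so it is not a proof; for the purposes of this paper the correct move is simply to cite~\cite{hcol-rev}, as the authors do.
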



\begin{lemma} \label{lem:v-trans-34}
  Let $G$ be a vertex-transitive core digraph such that $|V[G]| = 3$
  or $|V[G]| = 4$. If $G$ is not a directed cycle, then $G$ does not
  admit a wnuf.
\end{lemma}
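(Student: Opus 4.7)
The plan is to classify up to isomorphism all vertex-transitive core digraphs $G$ with $|V[G]|\in\{3,4\}$ and then verify the conclusion for each one not isomorphic to a directed cycle. I would enumerate the circulant digraphs over the transitive permutation groups on three and four elements ($\mathbb{Z}/3$ when $|V[G]|=3$; and $\mathbb{Z}/4$, $\mathbb{Z}/2\times\mathbb{Z}/2$, $D_4$ and their supergroups when $|V[G]|=4$), then discard those admitting a proper retraction. The resulting list should be short: on three vertices the only core besides $\vec{C}_3$ is the symmetric complete graph $K_3$; on four vertices the non-directed-cycle cores are $K_4$ together with a small number of genuinely directed cores, chiefly the $\mathbb{Z}/4$-circulant $C$ with connection set $\{1,2\}$, consisting of a directed $4$-cycle augmented by the two antipodal pairs of bidirectional edges.

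For the undirected members of the list, $K_3$ and $K_4$, both are non-bipartite core graphs, so by Theorem~\ref{th:hcol-alg} the algebra $\mathcal{A}^c_G$ has a factor whose term operations are only projections, and by Theorem~\ref{th:wnuf} this is equivalent to $\mathcal{A}^c_G$ admitting no wnuf. Since every wnuf is idempotent and therefore belongs to $\mathcal{A}^c_G$, $G$ itself admits no wnuf.

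The main obstacle is the remaining directed non-symmetric cases, where Theorem~\ref{th:hcol-alg} no longer applies. For such a $G$, I would assume toward contradiction that $f$ is a wnuf polymorphism of arity $n\ge 2$ and play off the wnuf identities $f(x,y,\ldots,y)=f(y,x,y,\ldots,y)=\cdots=f(y,\ldots,y,x)$ against edge preservation. For the circulant $C$, a concrete starting point is to apply $f$ componentwise to tuples built from the antipodal double-edge pair $(0,2),(2,0)$: the wnuf identities collapse the relevant values onto $f(0,2,\ldots,2)$ and $f(2,0,\ldots,0)$, and a short case analysis (using that in $C$ both $(a,b)$ and $(b,a)$ are edges only for antipodal pairs $\{a,b\}$, and that $C$ has no loops) forces either a loop or a non-edge in the image. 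An equivalent, possibly cleaner, route is to exhibit a non-trivial factor of $\mathcal{A}^c_G$ whose term operations are all projections (for example via the congruence induced by the antipodal pairing on the vertex set) and invoke Theorem~\ref{th:wnuf} directly. Either variant reduces to a finite check because $|D|\le 4$ and the relevant arity can be taken small; I expect the bookkeeping to be the only delicate part.
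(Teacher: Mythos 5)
Your enumeration of the cores is correct, and your treatment of the undirected cases $K_3$ and $K_4$ via Theorems~\ref{th:hcol-alg} and~\ref{th:wnuf} is exactly the paper's argument. You also correctly identify the one remaining 4-vertex core (the directed $4$-cycle with both antipodal pairs doubled) as the crux of the lemma.

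For that directed case, however, your proposal diverges from the paper and leaves a genuine gap. The paper does not run a wnuf-versus-edge-preservation case analysis, nor does it construct a congruence; instead it gives a perfect implementation (pp-definition) of $K_4$ from the circulant $G$ (a $2$-step composition through a bidirectional edge, $\exists u,v:\,G(u,x)\wedge G(u,v)\wedge G(v,u)\wedge G(v,y)$), so that $\Pol(G)\subseteq\Pol(K_4)$ by Theorem~\ref{th:polinv}. Since the idempotent polymorphisms of $K_4$ are only projections, the same holds for $G$, and no projection is a wnuf of arity at least $2$. This handles every arity in one stroke. Your Route~1, by contrast, is not the short check you suggest: the antipodal-edge observation only yields that $f(0,2,\ldots,2)$ and $f(2,0,\ldots,0)$ form an antipodal pair, which is perfectly consistent. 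For arity $2$ the commutativity $f(0,2)=f(2,0)$ does instantly produce a loop, but for arity $\ge 3$ the wnuf identities identify only the ``one coordinate different'' values $A_x^y$, leaving many unconstrained mixed values $f(x,y,z,\ldots)$, and the constraint system does not collapse in a few lines. Your Route~2 is also unsupported: you neither verify that the antipodal partition is a congruence of $\mathcal{A}^c_G$ nor that the resulting $2$-element quotient has only projections as term operations; establishing either essentially requires showing that the idempotent polymorphisms of $G$ are already trivial, which is what the paper's pp-definition of $K_4$ delivers directly. So the missing idea is precisely the perfect implementation of $K_4$ (or some equivalent reduction to a known wnuf-free structure); without it, the directed $4$-vertex case is not settled.
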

\begin{proof}
 Let $v$ and $u$ be two vertices in a vertex-transitive core
  digraph. Note that the in- and out-degrees of $u$ and $v$ must
  coincide, and hence the in- and out-degrees of $v$ must be the same.

  Having this in mind, it is easy to see that there are only two
  digraphs with three vertices satisfying the conditions in the lemma:
  the directed cycle and the complete graph on three vertices.
  Similarly, it is easy to see that there are three
  core digraphs on four vertices which are vertex-transitive:
  the directed cycle, the complete graph on four vertices and
  the digraph in Figure~\ref{fig:v-trans-4}.

  For complete graphs, the results follows from
  Theorems~\ref{th:hcol-alg} and~\ref{th:wnuf}. Denote the digraph in
  Figure~\ref{fig:v-trans-4} by $G$ and consider the following perfect
  implementation (originally used by MacGillivray~\cite[step~3 in
  Theorem~3.4]{vertex-trans}).
  \[
  H(x, y) \iff \exists u, v: G(x, u) \land G(u, v) \land G(v, u) \land G(v, y)
  \]
  It is not hard to see that $H$ is the complete graph on four
  vertices. Since $H$ does not admit any wnuf, Theorem~\ref{th:polinv}
  implies that $G$ does not admit a wnuf either.
\end{proof}

\newcommand{\arr}[1]{\ar@{-}[#1] |-{\object@{>}} }
\newcommand{\uarr}[1]{\ar@{-}[#1]}

\begin{figure}
    \centering
    \[
    \xymatrix{ {\bullet} \arr{r}           & {\bullet} \arr{d} \\
               {\bullet} \arr{u} \uarr{ur} & {\bullet} \arr{l} \uarr{ul} }
    \]
    \caption{The non-trivial case in Lemma~\ref{lem:v-trans-34}} \label{fig:v-trans-4}
\end{figure}
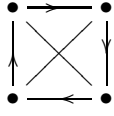

\begin{lemma} \label{lem:retract-pol}
Let $G$ and $H$ be two digraphs such that there is a retraction from
$G$ to $H$. If $G$ admits a wnuf, then so does $H$.
\end{lemma}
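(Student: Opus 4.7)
The plan is to lift a wnuf on $G$ to a wnuf on $H$ by postcomposing with the retraction. Let $\pi$ be the retraction from $G$ to $H$, so $\pi : V(G) \to V(G)$, $\pi(V(G)) = V(H)$, $\pi$ fixes $V(H)$ pointwise, and $\pi$ is a homomorphism of digraphs (in particular a unary polymorphism of $G$ whose image spans $H$). Suppose $f : V(G)^n \to V(G)$ is a wnuf of $G$. I would define $f_H : V(H)^n \to V(H)$ by
\[
f_H(x_1, \ldots, x_n) = \pi\bigl(f(x_1, \ldots, x_n)\bigr) \quad \text{for } x_1, \ldots, x_n \in V(H),
\]
and then check the three required properties.

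First, $f_H$ takes values in $V(H)$ because $\pi$ does. Second, $f_H$ is a polymorphism of $H$: given edges $(x_i, y_i) \in E[H] \subseteq E[G]$ for $1 \le i \le n$, the fact that $f$ preserves $E[G]$ gives $(f(\bar x), f(\bar y)) \in E[G]$, and since $\pi$ is a homomorphism from $G$ to $H$ we obtain $(f_H(\bar x), f_H(\bar y)) = (\pi f(\bar x), \pi f(\bar y)) \in E[H]$. Third, for idempotence, if $x \in V(H)$ then $f_H(x, \ldots, x) = \pi(f(x, \ldots, x)) = \pi(x) = x$, using idempotence of $f$ and the fact that $\pi$ fixes $V(H)$ pointwise.

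Finally, for the weak near-unanimity identities, observe that for all $x, y \in V(H)$ the wnuf property of $f$ already gives
\[
f(x, y, \ldots, y) = f(y, x, y, \ldots, y) = \cdots = f(y, \ldots, y, x).
\]
Applying $\pi$ to this common value yields the same string of equalities for $f_H$, so $f_H$ is a wnuf of $H$. Hence $H$ admits a wnuf.

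There is no real obstacle: the argument is a routine verification that the three defining features of a wnuf polymorphism (edge preservation, idempotence, the near-unanimity identities) are all preserved under composition with a retraction. The only point that requires a moment of care is using $\pi$ in two different ways simultaneously — as a $G$-to-$H$ homomorphism (for the polymorphism check) and as the identity on $V(H)$ (for idempotence) — but both follow directly from the definition of a retraction as given in Section~\ref{sec:prel}.
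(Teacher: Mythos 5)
Your proposal is correct and matches the paper's proof exactly: the paper also defines the candidate wnuf on $H$ as $r(f(x_1,\ldots,x_n))$ (where $r$ is the retraction) and leaves the verification as ``easy to check.'' You have simply spelled out the routine checks that the paper omits.
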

\begin{proof}
Let $r$ be a retraction from $G$ to $H$. If $G$ admits a wnuf
$f(x_1,\ldots,x_n)$, then it is easy to check that $H$ admits
the wnuf $r(f(x_1,\ldots,x_n))$.
\end{proof}

\begin{lemma} \label{th:trans-alg-andrei}
  If $G$ is a vertex-transitive digraph that does not retract to a
  directed cycle, then $G$ admits no wnuf.
\end{lemma}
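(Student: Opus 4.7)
The plan is to reduce the general assertion to the case of a core digraph (already handled at small sizes by Lemma~\ref{lem:v-trans-34}) and then bootstrap from there via pp-definability.

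First, I would pass to the core. Let $H$ be a core of $G$; by a classical result (see, e.g., \cite{Hell:Nesetril:GH}), the core of a vertex-transitive digraph is itself vertex-transitive. Since $G$ retracts to $H$ and by assumption does not retract to any directed cycle, $H$ cannot be a directed cycle. By Lemma~\ref{lem:retract-pol}, if $G$ admitted a wnuf, then $H$ would admit a wnuf; hence it is enough to prove the claim under the additional assumption that $G$ is already a vertex-transitive core digraph which is not a directed cycle.

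Second, if $|V[G]| \in \{3,4\}$, I would invoke Lemma~\ref{lem:v-trans-34} directly. The substantive content is therefore the case $|V[G]| \ge 5$. Here the plan is to exhibit, uniformly in $G$, a pp-definition over $G$ of a digraph $H$ that is already known not to admit a wnuf. Two natural targets are available: a non-bipartite undirected core graph (handled by Theorem~\ref{th:hcol-alg} together with Theorem~\ref{th:wnuf}), or a complete graph on at least three vertices (which is just the special case that is already used inside the proof of Lemma~\ref{lem:v-trans-34}, going back to MacGillivray's construction). The model for the construction is the formula
\[
 H(x,y) \iff \exists u_1,\ldots,u_k:\ \bigwedge_i G(\ldots),
\]
built from iterated forward/backward neighbourhood relations of $G$; for a vertex-transitive core digraph that is not a directed cycle such a formula (depending only on basic parameters of the automorphism group, such as orbit lengths of small tuples) can be used to collapse enough pairs of vertices to produce a complete graph on the image. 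Once such a pp-definable $H$ is in hand, Theorem~\ref{th:polinv} gives $\Pol(G) \subseteq \Pol(H)$, so any wnuf of $G$ would be a wnuf of $H$, contradicting Theorems~\ref{th:hcol-alg} and~\ref{th:wnuf}.

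The main obstacle is the third step: producing the uniform pp-definition that works for \emph{every} vertex-transitive core digraph which is not a directed cycle. The delicate point is that we have essentially no structural description of such $G$ beyond vertex-transitivity and the absence of a retraction onto a cycle, so the pp-formula has to be built from very robust data, most naturally the pattern of short directed walks between vertices, combined with the orbit structure of $\Aut(G)$ on pairs. I would follow MacGillivray's argument (the same one cited in Lemma~\ref{lem:v-trans-34}): use a short combination of forward edges, backward edges, and directed $2$-cycles, if present, to identify antipodal vertices in a suitable auxiliary graph and thereby force a clique. The fact that $G$ does not retract to any directed cycle is exactly what prevents this construction from degenerating, so that is where this hypothesis gets used.
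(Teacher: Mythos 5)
Your outline identifies the correct ingredients (pass to the core, dispatch sizes $3$ and $4$ by Lemma~\ref{lem:v-trans-34}, invoke MacGillivray plus pp-definability plus Lemma~\ref{lem:retract-pol}), but the crucial step~3 is genuinely missing, not merely under-detailed, and the structure of the argument you propose around it is not the one that actually closes the gap.

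The difficulty is exactly the one you flag: there is no known single pp-formula, uniform over all vertex-transitive non-cycle cores $G$ with $|V[G]|\ge 5$, that directly yields a clique or a fixed non-bipartite graph. MacGillivray's Theorem~3.4 in \cite{vertex-trans} does not deliver a one-shot construction; it is an \emph{iterative} reduction. In one application it produces (as a perfect implementation from $\{E[G]\}\cup C_{V[G]}$, together with a retraction) a vertex-transitive $H'$ that either (i) has a double-edge subgraph $undir(H')$ that is non-bipartite and non-valid, or (ii) has strictly fewer vertices, or (iii) has the same number of vertices but strictly more edges, and in cases (ii)/(iii) it still fails to retract to a directed cycle. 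A contradiction is then obtained by choosing a minimal counterexample: minimize $|V[G]|$, and among those maximize $|E[G]|$. Case (i) contradicts Theorems~\ref{th:hcol-alg} and~\ref{th:wnuf} after passing to the core of $undir(H')$; cases (ii) and (iii) contradict minimality, using Theorem~\ref{th:polinv} and Lemma~\ref{lem:retract-pol} to transport the wnuf to $H'$. This double-induction on $(|V|,-|E|)$ is the idea your proposal lacks; without it, your appeal to ``a short combination of forward edges, backward edges, and directed $2$-cycles'' to ``force a clique'' is a hope, not a proof.

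Two secondary points. First, the pp-definitions coming from MacGillivray's indicator and subindicator constructions are over $\{E[G]\}\cup C_{V[G]}$, not over $\{E[G]\}$ alone; you should include the constants explicitly (this is harmless because a wnuf is idempotent and therefore preserves each $c_a$). Second, the conclusion you want from one MacGillivray step is not ``$H$ has no wnuf'' but ``if $G$ has a wnuf then so does $H'$''; it is only after the inductive termination argument that you actually reach a digraph (a non-bipartite undirected core) for which the absence of a wnuf is known outright.
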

\begin{proof}
  For the sake of contradiction, let $G$ be a digraph with the minimum
  number of vertices such that $G$ is vertex-transitive, does not
  retract to a directed cycle and $G$ admits a wnuf. Furthermore,
  among all counterexamples with $|V[G]|$ vertices, let $G$ be the one
  with the maximum number of edges.

  It is well known, and easy to show, that the core of a
  vertex-transitive digraph is also vertex-transitive.  If $G$ is not
  a core, then the core of $G$ is vertex-transitive, admits a wnuf
  (Lemma~\ref{lem:retract-pol}), and does not retract to a directed
  cycle. Hence, if $G$ is not a core, then the core of $G$ is a
  smaller counterexample. We can therefore, without loss of
  generality, assume that $G$ is a core. By
  Lemma~\ref{lem:v-trans-34}, we can assume that $|V[G]| > 4$. We need
  the latter assumption because this is assumed in the proof of
  Theorem~3.4 in~\cite{vertex-trans}, which we use below.

  For a digraph $H$, let $undir(H)$ be the digraph induced by the
  double edges of $H$. It is easy to see that $undir(H)$ can be
  perfectly implemented from $H$ as follows:
\begin{align}
undir(H)(x, y) \iff H(x, y) \land H(y, x) . \label{eq:undir}
\end{align}

  The proof of Theorem~3.4
  in~\cite{vertex-trans} shows that it is possible to perfectly
  implement a digraph $H$ with $G$ and $C_{V[G]}$ such that
  there is a retraction $r$ from $H$ to a digraph $H'$ which is
  vertex-transitive and
  \begin{enumerate}
    \item $undir(H')$ is not bipartite and not valid, or
    \item $|V[H']| < |V[G]|$ and $H'$ does not retract to a directed cycle, or
    \item $|V[H']| = |V[G]|$ and $|E[H']| > |E[G]|$ and $H'$ does not retract to a directed cycle.
  \end{enumerate}

In fact, the proof in~\cite{vertex-trans} uses constructions called
``indicator'' and ``subindicator'' to obtain $H$ from $G$, but these
constructions are well known to precisely correspond to certain
perfect implementations (or pp-formulas, see~\cite{hcol-rev} for
details).

Note that, since any wnuf is idempotent, the constraint language
$\{E[G]\}\cup C_{V[G]}$ admits a wnuf. Then, by
Theorem~\ref{th:polinv}, the digraph $H$ admits a wnuf.
Lemma~\ref{lem:retract-pol} applied to $H$ shows that $H'$ admits a
wnuf as well.  Now we see that cases (2) and (3) are impossible,
since $H'$ would contradict the choice of $G$. Case (1) leads to a
contradiction too because, by Lemmas~\ref{th:hcol-alg}
and~\ref{lem:retract-pol}, the core of $undir(H')$, which is a
non-bipartite undirected graph, would also admit a wnuf which is
impossible by Theorems~\ref{th:hcol-alg} and~\ref{th:wnuf}.
\end{proof}

\begin{corollary} \label{lem:trans}
  Let $H$ be a vertex-transitive core digraph which is not valid and
  not a directed cycle. Then, \prob{Max CSP$(\{H\})$-$B$} has a hard
  gap at location 1.
\end{corollary}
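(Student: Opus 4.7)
The plan is to deduce this as a direct consequence of Theorem~\ref{th:maxCSP-unary}, applied to the single-relation constraint language $\Gamma = \{E[H]\}$. Since $H$ is assumed to be a core digraph, $\Gamma$ is a core constraint language, so the remaining hypothesis to verify is that $\mathcal{A}^c_\Gamma$ has a non-trivial factor whose only term operations are projections.

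First, I would reduce this to a statement about weak near-unanimity functions. Apply Theorem~\ref{th:wnuf} to the (idempotent) algebra $\mathcal{A}^c_\Gamma$: the desired factor exists if and only if $\mathcal{A}^c_\Gamma$ admits no wnuf. Moreover, since a wnuf is by definition idempotent, a wnuf is a polymorphism of $\Gamma$ precisely when it is a term operation of $\mathcal{A}^c_\Gamma$. Hence it is enough to show that $H$ itself admits no wnuf.

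At this point Lemma~\ref{th:trans-alg-andrei} does essentially all of the work: a vertex-transitive digraph that does not retract to a directed cycle admits no wnuf. So I only need to rule out the possibility that $H$ retracts to a directed cycle. This is where the core hypothesis is crucial: by the definition of a core in Section~\ref{sec:prel}, the only retraction of $H$ is the identity map on $V[H]$, and therefore the only digraph $H$ retracts to is $H$ itself. Since $H$ is assumed not to be a directed cycle, this excludes every directed cycle as a retract.

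The only remaining subtlety is that Theorem~\ref{th:maxCSP-unary} requires a \emph{non-trivial} factor, which implicitly requires $|V[H]| \geq 2$. The hypothesis that $H$ is not valid handles this: a vertex-transitive core on a single vertex would necessarily consist of a loop and thus be valid. Consequently $|V[H]| \geq 2$, Theorem~\ref{th:maxCSP-unary} applies, and \prob{Max CSP$(\{H\})$-$B$} has a hard gap at location 1. The only conceptual obstacle is recognising how cores force retractions to be the identity, so that the ``does not retract to a directed cycle'' hypothesis of Lemma~\ref{th:trans-alg-andrei} becomes simply ``is not a directed cycle''; after that, the result is a direct chaining of results proved earlier in the paper.
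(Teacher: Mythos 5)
Your proof is correct and takes exactly the paper's route: chain Lemma~\ref{th:trans-alg-andrei} through Theorem~\ref{th:wnuf} into Theorem~\ref{th:maxCSP-unary}, with the core hypothesis supplying precisely the step you identify (a core has only the identity retraction, so ``not a directed cycle'' upgrades to ``does not retract to a directed cycle''). The paper gives this as a one-line proof; you have just made its implicit reasoning explicit, including the observation that wnufs being idempotent means ``$H$ admits a wnuf'' and ``$\mathcal{A}^c_{\{H\}}$ admits a wnuf'' coincide. (A tiny slip: a one-vertex core need not be a loop — the edgeless one-vertex digraph is also a vertex-transitive core and is not valid — so non-validity alone does not force $|V[H]|\geq 2$; one really needs non-emptiness, which the paper also treats as implicit and supplies explicitly in Lemma~\ref{lem:trans-digraph} where this corollary is invoked.)
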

\begin{proof}
  Immediately follows from Lemma~\ref{th:trans-alg-andrei},
  Theorem~\ref{th:wnuf}, and Theorem~\ref{th:maxCSP-unary}
\end{proof}

The next lemmas help to deal with the remaining vertex-transitive
graphs, i.e. those that retract to a directed cycle.

\begin{lemma} \label{lem:undircycle}
  If $G$ is the undirected path with two vertices $P_2$, or an
  undirected cycle $C_k$, $k > 2$, then \prob{Max CSP$(\{G\})$-$B$} is
  hard to approximate.
\end{lemma}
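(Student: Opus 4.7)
I would split the argument into three cases depending on what $G$ is: (i) an odd cycle $C_k$ with $k\ge 3$; (ii) the undirected edge $P_2$; and (iii) an even cycle $C_k$ with $k\ge 4$. In every case the goal is to conclude hardness of approximation for \prob{Max CSP$(\{G\})$-$B$}, either directly through the algebraic machinery of Section~\ref{sec:gap1} or by reducing to a case that has already been handled.

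Case (i) is the easiest and is the ``pure algebraic'' case. An odd cycle $C_k$ is a non-bipartite undirected core graph, so Theorem~\ref{th:hcol-alg} yields a factor of $\mathcal{A}^c_{C_k}$ whose term operations are only projections. Theorem~\ref{th:maxCSP-unary} then says directly that \prob{Max CSP$(\{C_k\})$-$B$} has a hard gap at location 1, and in particular is hard to approximate in the sense of Definition~\ref{def:hardtoapproximate}.

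Case (iii) is a purely structural reduction to case (ii). An even cycle $C_k$ is bipartite with parts $A,B$; sending $A\mapsto\{0\}$ and $B\mapsto\{1\}$ gives a homomorphism $C_k\to P_2$, and composing with the obvious embedding of $P_2$ back into $C_k$ (pick any edge of $C_k$) produces a retraction of $C_k$ onto $P_2$. Hence the core of $C_k$ is $P_2$, and Lemma~\ref{lem:core} supplies an $AP$-reduction
\[
\prob{Max CSP$(\{P_2\})$-$B$}\;\le_{AP}\;\prob{Max CSP$(\{C_k\})$-$B$},
\]
so by Lemma~\ref{lem:ap-red} the hardness of the right-hand problem follows from that of the left-hand one.

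Case (ii) is the crux, and is also where I expect the main obstacle to lie. The relation $P_2$ on $\{0,1\}$ is exactly the Boolean disequality, so \prob{Max CSP$(\{P_2\})$} is precisely \prob{Max Cut}, and \prob{Max CSP$(\{P_2\})$-$B$} is \prob{Max Cut} on graphs of bounded degree. The algebraic framework built up in Section~\ref{sec:gap1} cannot be used here, because the binary disequality on $\{0,1\}$ is preserved by the majority operation, which is a wnuf; hence by Theorem~\ref{th:wnuf} the algebra associated with $P_2$ has no factor whose term operations are only projections, and Theorem~\ref{th:maxCSP-unary} simply does not apply. One must instead appeal to the classical combinatorial inapproximability of bounded-degree \prob{Max Cut}, which is APX-hard via a standard gap-preserving reduction from bounded-occurrence \prob{Max 3-SAT} (and is in any case a well-known consequence of the PCP theorem together with expander-based degree reduction, much as in the final portion of the proof of Theorem~\ref{th:maxCSP-unary}). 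Once this external fact is cited, hardness of \prob{Max CSP$(\{P_2\})$-$B$} is immediate, and together with cases (i) and (iii) it completes the proof.
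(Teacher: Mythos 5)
Your proposal is correct, and for the cases $G = P_2$ and $G = C_k$ with $k$ even it coincides with the paper's proof (Example~\ref{ex:maxkcol} for \prob{Max Cut}, then Lemmas~\ref{lem:core} and~\ref{lem:ap-red} to pass from $C_k$ to its core $P_2$). For the odd-cycle case, however, you take a genuinely different route. The paper handles $C_k$ with $k$ odd by an explicit strict implementation of the disequality relation $N_k$, writing $N_k(z_1,z_{k-1})+(k-3)=\max_{z_2,\ldots,z_{k-2}}\sum_{i} C_k(z_i,z_{i+1})$, and then invoking \APX-hardness of \prob{Max $k$-Cut} through Lemmas~\ref{lem:strict} and~\ref{lem:ap-red}. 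You instead observe that an odd cycle is a non-bipartite undirected core, apply Theorem~\ref{th:hcol-alg} to get a non-trivial factor of $\mathcal{A}^c_{C_k}$ with only projections as term operations, and then invoke Theorem~\ref{th:maxCSP-unary} directly. Your route is cleaner, stays inside the algebraic machinery the paper itself develops, and incidentally proves the stronger ``hard gap at location $1$'' statement for odd cycles; the paper's route is more elementary, needing no algebra at all for this lemma. Your remark about why the algebraic route cannot cover $P_2$ (Boolean disequality is preserved by the majority operation, a wnuf) is accurate and is the genuine reason both proofs have to fall back on the classical \APX-hardness of bounded-degree \prob{Max Cut} in that sub-case.
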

 \begin{proof}
   If $G = P_2$, then the result follows from Example~\ref{ex:maxkcol}. If
   $G = C_k$ and $k$ is even, then the core of $C_k$ is isomorphic to
   $P_2$ and the result follows from
   Lemmas~\ref{lem:core},~\ref{lem:ap-red} combined with
   Example~\ref{ex:maxkcol}.

   From now on, assume that $G = C_k$, $k$ is odd, and $k \geq 3$. We
   will show that we can strictly implement $N_k$, i.e., the
   inequality relation. We use the following strict implementation
   \begin{align}
   N_k(z_1, z_{k-1}) + (k - 3) = \max_{z_2, z_3, \ldots, z_{k-2}} &C_k(z_1, z_2) + C_k(z_2, z_3) + \ldots +      \notag \\
                                                                  &C_k(z_{k-3}, z_{k-2}) + C_k(z_{k-2}, z_{k-1}) \notag .
   \end{align}
   It is not hard to see that if $z_1 \neq z_{k-1}$, then all $k-2$
   constraints on the right hand side can be satisfied. If $z_1 =
   z_{k-1}$, then $k-3$ constraints are satisfied by the assignment
   $z_i = z_1 + i - 1$, for all $i$ such that $1 < i < k-1$ (the
   addition and subtraction are modulo $k$). Furthermore, no
   assignment can satisfy all constraints. To see this, note that such
   an assignment would define a path $z_1, z_2, \ldots, z_{k-1}$ in
   $C_k$ with $k-2$ edges and $z_1 = z_{k-1}$. This is impossible
   since $k-2$ is odd and $k-2 < k$ .

   The lemma now follows from Lemmas~\ref{lem:strict}
   and~\ref{lem:ap-red} together with Example~\ref{ex:maxkcol}.
\end{proof}

\begin{lemma} \label{lem:double-edge}
  If $G$ is a digraph such that $(x, y) \in E[G] \Rightarrow (y, x)
  \not \in E[G]$, then \prob{Max CSP$(\{H\})$-$B$} $\leq_{AP}$
  \prob{Max CSP$(\{G\})$-$B$}, where $H$ is the undirected graph
  obtained from $G$ by replacing every edge in $G$ by two edges in
  opposing directions in $H$.
\end{lemma}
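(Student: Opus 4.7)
The plan is to give a direct AP-reduction from \prob{Max CSP$(\{H\})$-$B$} to \prob{Max CSP$(\{G\})$-$B$} by replacing each $H$-constraint with the two $G$-constraints corresponding to the two possible orientations. Given an instance $\I = (V, C)$ of \prob{Max CSP$(\{H\})$-$k$}, I would construct $F(\I) = (V, C')$ by replacing every constraint of the form $H(x,y)$ in $C$ by the pair of constraints $G(x,y)$ and $G(y,x)$ in $C'$. The recovery map on solutions is the identity, since both instances are over the same variable set.

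The correctness hinges on a single observation using the hypothesis that $G$ has no antiparallel edges: for any pair of domain values $a,b$, if $(a,b) \in E[H]$ then exactly one of $(a,b)$ and $(b,a)$ lies in $E[G]$, while if $(a,b) \notin E[H]$ then neither does. Consequently, for every assignment $s : V \to D$ and each constraint $H(x,y) \in C$, the pair $G(x,y), G(y,x) \in C'$ contributes exactly $1$ to $m(F(\I), s)$ when $s$ satisfies $H(x,y)$ and $0$ otherwise. Summing over all constraints yields $m(\I, s) = m(F(\I), s)$, and therefore $\opt(\I) = \opt(F(\I))$.

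This identity of measures makes the AP-reduction trivial with $\alpha = 1$: any $r$-approximate solution of $F(\I)$ has the same approximation ratio when viewed as a solution of $\I$. For the occurrence bound, each appearance of a variable in a constraint of $\I$ produces exactly two appearances in $F(\I)$, so a $k$-bound on $\I$ becomes a $2k$-bound on $F(\I)$, which is still a constant. No step is genuinely hard; the only point deserving emphasis is that the orientedness hypothesis on $G$ is essential, since it prevents both $G(x,y)$ and $G(y,x)$ from being satisfied simultaneously and so ensures the measure matches exactly rather than with loss. Equivalently, one could package this observation as the strict $1$-implementation $H(x,y) + 0 = \max\{G(x,y) + G(y,x)\}$ and invoke Lemma~\ref{lem:strict}, but writing out the AP-reduction directly is cleaner here.
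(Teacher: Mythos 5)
Your proof is correct and is essentially the same as the paper's, which simply observes that $H(x,y) + (1-1) = G(x,y) + G(y,x)$ is a strict $1$-implementation (valid because the no-antiparallel-edges hypothesis forbids satisfying both summands) and invokes Lemma~\ref{lem:strict}. Your direct AP-reduction just unpacks that one-line argument, and you note the equivalence yourself at the end.
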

\begin{proof}
  $H(x, y) + (1 - 1) = G(x, y) + G(y, x)$ is a strict implementation
  of $H$ and the result follows from Lemma~\ref{lem:strict}.
\end{proof}

\begin{lemma} \label{lem:trans-digraph}
  If $G$ is a non-empty non-valid vertex-transitive digraph, then
  \prob{Max CSP$(\{G\})$-$B$} is hard to approximate.
\end{lemma}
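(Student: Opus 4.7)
The plan is to pass to the core of $G$ and then distinguish two cases depending on whether this core is a directed cycle. By Lemma~\ref{lem:core} it suffices to prove approximation hardness of \prob{Max CSP$(\{G'\})$-$B$}, where $G'$ denotes the core of $G$. Two facts about $G'$ will be used: first, the core of a vertex-transitive digraph is again vertex-transitive (a standard fact, already invoked in the proof of Lemma~\ref{th:trans-alg-andrei}); and second, since $G$ has no loops, every retraction $\pi$ of $G$ satisfies $(\pi(x),\pi(y)) \in E[G]$ whenever $(x,y) \in E[G]$, forcing $\pi(x) \neq \pi(y)$ on every edge, so $G'$ is also non-valid (and clearly non-empty because $G$ is). These observations make the case split below well-defined.

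The first case is when $G'$ is not a directed cycle. Then Corollary~\ref{lem:trans} applies directly and yields that \prob{Max CSP$(\{G'\})$-$B$} has a hard gap at location $1$, which in particular means it is hard to approximate.

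The second case is when $G'$ is a directed cycle of length $k$. Non-validity forces $k \geq 2$. For $k = 2$, the edge set $\{(0,1),(1,0)\}$ coincides with the undirected edge $P_2$, so Lemma~\ref{lem:undircycle} finishes the job. For $k \geq 3$, the directed cycle has no pair of opposing edges, so Lemma~\ref{lem:double-edge} gives \prob{Max CSP$(\{C_k\})$-$B$} $\leq_{AP}$ \prob{Max CSP$(\{G'\})$-$B$}, where $C_k$ is the undirected cycle on $k$ vertices; combining this with the hardness of $C_k$ from Lemma~\ref{lem:undircycle} and with Lemma~\ref{lem:ap-red} yields the desired hardness for $G'$.

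Finally, transferring back to $G$ via Lemma~\ref{lem:core} and Lemma~\ref{lem:ap-red} completes the argument. I do not expect a serious obstacle here, since every ingredient is already assembled in the preceding subsections; the only real content of the proof is the case split and the remark that the core of a non-valid vertex-transitive digraph inherits both vertex-transitivity and non-validity.
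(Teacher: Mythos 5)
Your proof is correct and follows essentially the same approach as the paper: pass to the core, handle the directed-cycle case via Lemmas~\ref{lem:undircycle} and~\ref{lem:double-edge}, and handle all other vertex-transitive cores via Corollary~\ref{lem:trans}. You spell out details the paper leaves implicit (that the core inherits vertex-transitivity and non-validity, and that the directed $2$-cycle must be treated directly as $P_2$ since Lemma~\ref{lem:double-edge} does not apply there), but the decomposition and ingredients are identical.
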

\begin{proof}
  By Lemmas~\ref{lem:core} and~\ref{lem:ap-red}, it is enough to
  consider cores. For directed cycles, the results follows from
  Lemmas~\ref{lem:undircycle} and~\ref{lem:double-edge}, and, for all other
  digraphs, from Corollary~\ref{lem:trans}.
\end{proof}

\subsubsection{General Digraphs} \label{sec:general-digraph}
The main lemma of this section is Lemma~\ref{lem:digraph} which proves
our result for general digraphs. We begin by considering bipartite
digraphs.

\begin{lemma} \label{lem:bipartite}
  If $G$ is a bipartite digraph which is neither empty nor valid,
  then \prob{Max CSP$(\{G\})$-$B$} is hard to approximate.
\end{lemma}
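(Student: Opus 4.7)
My plan is to handle the lemma by a case split on whether $G$ contains a symmetric pair of edges, with both subcases ultimately reducing to the known hardness of \prob{Max CSP}$(\{P_2\})$-$B$ from Example~\ref{ex:maxkcol}. As a preliminary step, by Lemma~\ref{lem:core} combined with Lemma~\ref{lem:ap-red} it suffices to prove the hardness for the core of $G$, so I assume throughout that $G$ is a core and fix a bipartition $V[G]=A\cup B$.

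Suppose first that $G$ contains a symmetric pair $(a,b),(b,a)\in G$, where $a\in A$ and $b\in B$. The key observation is that the map $\phi\colon V[G]\to\{a,b\}$ sending every vertex of $A$ to $a$ and every vertex of $B$ to $b$ is a retraction of $G$ onto $\{(a,b),(b,a)\}$: any edge $(x,y)\in G$ crosses the bipartition, so $(\phi(x),\phi(y))\in\{(a,b),(b,a)\}\subseteq G$, and $\phi$ fixes the image $\{a,b\}$. Since $G$ is a core, $\phi$ must be the identity, forcing $V[G]=\{a,b\}$ and $G=\{(a,b),(b,a)\}$, i.e.\ an isomorphic copy of $P_2$. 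Example~\ref{ex:maxkcol} then concludes this case.

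Otherwise, $G$ is antisymmetric, and I would apply Lemma~\ref{lem:double-edge} to obtain an AP-reduction \prob{Max CSP}$(\{H\})$-$B\leq_{AP}$ \prob{Max CSP}$(\{G\})$-$B$, where $H$ is the symmetric closure of $G$. The digraph $H$ is a non-empty symmetric bipartite relation (equivalently, an undirected bipartite graph), so the same bipartition-collapsing retraction as before shows its core is $P_2$. Lemma~\ref{lem:core} then gives \prob{Max CSP}$(\{P_2\})$-$B\leq_{AP}$ \prob{Max CSP}$(\{H\})$-$B$, and composing the two AP-reductions with Example~\ref{ex:maxkcol} yields the desired hardness for \prob{Max CSP}$(\{G\})$-$B$.

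The main obstacle, as I see it, is justifying the collapse map $\phi$ in the first case: this is precisely where the bipartiteness of $G$ earns its keep, guaranteeing that every edge lands on one of the two edges of the symmetric pair that is assumed to exist. Once this retraction is secured, the core assumption immediately pins $G$ down to a two-vertex relation, and the antisymmetric case follows essentially by appeal to the two previously established reductions.
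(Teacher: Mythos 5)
Your proposal is correct and takes essentially the same route as the paper: a case split on whether a symmetric edge pair exists, with the symmetric case collapsing to $P_2$ via the core (the bipartition-collapsing retraction you spell out is exactly what justifies the paper's terser claim that the core is isomorphic to $P_2$), and the antisymmetric case reduced to the symmetric one via Lemma~\ref{lem:double-edge}, both then resting on Example~\ref{ex:maxkcol} together with Lemmas~\ref{lem:core} and~\ref{lem:ap-red}. The only cosmetic difference is that you normalize to a core up front while the paper invokes the core only inside the first case, which changes nothing substantive.
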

\begin{proof}
  If there are two edges $(x, y), (y, x) \in E[G]$, then the core of
  $G$ is isomorphic to $P_2$ and the result follows from
  Lemmas~\ref{lem:ap-red} and~\ref{lem:core} together with
  Example~\ref{ex:maxkcol}. If no such pair of edges exist, then
  Lemmas~\ref{lem:ap-red} and~\ref{lem:double-edge} reduce this case to
  the previous case where there are two edges $(x, y), (y, x) \in
  E[G]$.
\end{proof}

We will use a technique known as {\em domain
restriction}~\cite{maxCSP-fixed} in the sequel. For a subset $D'
\subseteq D$, let $\proj{\Gamma}{D'} = \{ \proj{R}{D'} \mid R \in
\Gamma \mbox{\ and } \proj{R}{D'} \mbox{ is
  non-empty}\}$. The following lemma was proved
in~\cite[Lemma~3.5]{maxCSP-fixed} (the lemma is stated in a slightly
different form there, but the proof together
with~\cite[Lemma~8.2]{Ausiello99:complexity} and
Lemma~\ref{lem:in-apx} implies the existence of the required
$AP$-reduction).
\begin{lemma} \label{lem:domrest}
  Let $D'\subseteq D$ and $D' \in \Gamma$, then \prob{Max
    CSP$(\proj{\Gamma}{D'})$-$B$} $\leq_{AP}$ \prob{Max
    CSP$(\Gamma)$-$B$}.
\end{lemma}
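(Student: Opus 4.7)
The plan is to construct an $L$-reduction from \prob{Max CSP$(\proj{\Gamma}{D'})$-$k$} to \prob{Max CSP$(\Gamma)$-$k'$} for a suitable constant $k'$, and then invoke the standard fact (Lemma~8.2 of~\cite{Ausiello99:complexity}) that an $L$-reduction whose source problem lies in \cc{APX} is automatically an $AP$-reduction; membership of \prob{Max CSP$(\Gamma)$} in \cc{APX} is supplied by Lemma~\ref{lem:in-apx}. If $D' = \emptyset$ the problem \prob{Max CSP$(\proj{\Gamma}{D'})$} is trivial, so I assume $D' \ne \emptyset$ and fix some element $d^* \in D'$.

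Given an instance $\I = (V, C)$ of \prob{Max CSP$(\proj{\Gamma}{D'})$-$k$}, let $a$ be the maximum arity among relations in $\Gamma$, choose the constant $M = k+1$, and build $\I' = (V, C')$ as follows: for every $(\proj{R}{D'}, \tup{s}) \in C$, put the constraint $(R, \tup{s})$ into $C'$; and for every variable $v \in V$, put $M$ copies of the unary constraint $(D', (v))$ into $C'$. Using the hypothesis $D' \in \Gamma$, this is indeed an instance of \prob{Max CSP$(\Gamma)$}, and each variable occurs at most $k' := k + M$ times in $\I'$. To map solutions backward, given any assignment $s'$ to $\I'$, define $s(v) = s'(v)$ when $s'(v) \in D'$ and $s(v) = d^*$ otherwise.

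To finish I would verify the two defining $L$-reduction inequalities. Writing $V_{bad} = \{v \mid s'(v) \notin D'\}$, the backward mapping changes the assignment on at most $|V_{bad}|$ variables, each appearing in at most $k$ main constraints, so at most $k|V_{bad}|$ main constraints get destroyed; on the other hand $s'$ already failed exactly $M|V_{bad}|$ of the unary constraints. Since $\proj{R}{D'} \subseteq R$ and $s$ only takes values in $D'$, this yields $m(\I, s) \ge m(\I', s') - M|V| + (M-k)|V_{bad}| \ge m(\I', s') - M|V|$. Applying this to an optimal $s'$ and combining with the obvious lower bound $\opt(\I') \ge \opt(\I) + M|V|$ (extend any optimal solution of $\I$ trivially), I get $\opt(\I') = \opt(\I) + M|V|$ and hence the additive error is preserved, $\opt(\I) - m(\I,s) \le \opt(\I') - m(\I',s')$, giving $\beta_L = 1$. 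For the multiplicative constant, Lemma~\ref{lem:in-apx} gives $\opt(\I) \ge |C|/|D|^a$, and $|V| \le a|C|$, so $\opt(\I') \le (1 + Ma|D|^a)\,\opt(\I)$, which supplies $\alpha_L = 1 + Ma|D|^a$.

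The only real subtlety lies in the choice of $M$: it must be large enough ($M > k$ suffices) so that the gain in unary constraints from moving an out-of-$D'$ variable into $D'$ dominates the potential loss among main constraints incident to it, while simultaneously remaining a fixed constant so that $k' = k+M$ is a valid occurrence bound independent of $\I$. Setting $M = k+1$ meets both requirements.
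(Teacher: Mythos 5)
Your proposal is correct and follows exactly the route the paper itself indicates (the paper does not reproduce the proof but points to the $L$-reduction in~\cite{maxCSP-fixed} combined with Lemma~\ref{lem:in-apx} and~\cite[Lemma~8.2]{Ausiello99:complexity}); attaching a constant number $M>k$ of copies of the unary constraint $D'(v)$ to each variable is the standard domain-restriction gadget, and your $L$-reduction analysis, including $\opt(\I')=\opt(\I)+M|V|$ and the occurrence bound $k'=k+M$, is sound. The only pedantic point is that the problem to which you apply Lemma~\ref{lem:in-apx} for $\APX$-membership should be the source problem \prob{Max CSP$(\proj{\Gamma}{D'})$} rather than \prob{Max CSP$(\Gamma)$}, but since that lemma applies to every finite constraint language this is immaterial.
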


Typically, we will let $D'$ be an orbit in the automorphism group of
a graph. We are now ready to present the
three lemmas that are the building blocks of Lemma~\ref{lem:non-transitive}.
Let $G$ be a digraph. For a set $A \subseteq V[G]$, we define $A^+ =
\{ j \mid (i, j) \in E[G], i \in A \}$, and $A^- = \{ i \mid (i, j)
\in E[G], j \in A\}$.

\begin{lemma} \label{lem:unary-union}
  If a constraint language $\Gamma$ contains two unary predicates
  $S, T$ such that $S \cap T = \emptyset$, then
  $\Gamma$ strictly implements $S \cup T$.
\end{lemma}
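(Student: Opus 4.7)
The plan is to exhibit a strict 1-implementation of $S\cup T$ directly, using just the two predicates $S$ and $T$ applied to the single primary variable, and with no auxiliary variables at all. Concretely, I propose the implementation
\[
(S\cup T)(x) + (1-1) \;=\; \max_{\emptyset}\bigl( S(x) + T(x) \bigr),
\]
i.e.\ the collection consisting of the two unary constraints $S(x)$ and $T(x)$, with $\alpha = 1$.

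To verify this satisfies Definition~\ref{def:impl}, I would check the three clauses in turn. First, because $S$ and $T$ are disjoint unary predicates, no value of $x$ can lie in both $S$ and $T$ simultaneously, so for every assignment to $x$ the sum $S(x) + T(x)$ is at most $1 = \alpha$. Second, if $x \in S \cup T$ then exactly one of $S(x)$, $T(x)$ holds, so the sum equals $\alpha = 1$, giving an assignment that achieves the maximum (the ``auxiliary'' part is vacuous). Third, for $x \notin S \cup T$ neither constraint is satisfied, so the sum is $0 = \alpha - 1$; this simultaneously establishes the upper bound of $\alpha - 1$ and the strictness requirement that this bound be attained.

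There is no real obstacle here: the argument is essentially an unpacking of the definition, and the disjointness hypothesis $S \cap T = \emptyset$ is precisely what guarantees the overall cap of $\alpha = 1$. The only thing worth flagging is that the implementation uses zero auxiliary variables, which is allowed by the definition since the quantification over $\tup{y}$ is simply vacuous in that case.
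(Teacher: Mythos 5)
Your proposal is correct and is essentially identical to the paper's one-line proof, which writes down exactly the same strict $1$-implementation $U(x) + (1-1) = S(x) + T(x)$ with no auxiliary variables. The verification you spell out (using disjointness to cap the sum at $1$, and noting the bound $\alpha-1=0$ is attained when $x\notin S\cup T$) is the intended, straightforward unpacking of Definition~\ref{def:impl}.
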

\begin{proof}
  Let $U = S \cup T$.  Then $U(x) + (1-1) = S(x) + T(x)$ is a strict
  implementation of $U(x)$.
\end{proof}

\begin{lemma} \label{lem:orbit+-impl}
  Let $H$ be a core digraph and $\Omega$ an orbit in $\Aut(H)$.  Then,
  $H$ strictly implements $\Omega^+$ and $\Omega^-$.
\end{lemma}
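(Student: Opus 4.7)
The plan is to build a single gadget on top of the ``automorphism group relation'' already used inside the proof of Theorem~\ref{th:maxCSP-unary}. Enumerate $V[H]=\{v_1,\ldots,v_n\}$ and let $S=\Aut(H)$. Since $H$ is a core, the same appeal to Proposition~1.3 of Szendrei's book (combined with Theorem~\ref{th:polinv}) that was used in the proof of Theorem~\ref{th:maxCSP-unary} shows that the $n$-ary relation
\[
R_S \;=\; \{(\sigma(v_1),\ldots,\sigma(v_n)) \mid \sigma\in S\}
\]
lies in $\la\{H\}\ra$, so it admits a perfect implementation $G_S$ built out of some number $m$ of $H$-constraints over primary variables $u_1,\ldots,u_n$ and a set of auxiliary variables. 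Fix any index $i$ with $v_i\in\Omega$; because $\Omega$ is the orbit of $v_i$ in $S$, in every satisfying assignment of $R_S$ the variable $u_i$ takes values exactly in $\Omega$.

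My proposed strict $(m+1)$-implementation of $\Omega^+$ is $G_S$ together with the single extra constraint $H(u_i,y)$, with $y$ as the only primary variable and $u_1,\ldots,u_n$ (together with the auxiliaries of $G_S$) as auxiliaries; the implementation of $\Omega^-$ is identical except that $H(y,u_i)$ replaces $H(u_i,y)$.

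The verification would split into three routine cases. If $y\in\Omega^+$, pick $x\in\Omega$ with $(x,y)\in E[H]$ and $\sigma\in S$ with $\sigma(v_i)=x$; setting $u_j=\sigma(v_j)$ places the tuple in $R_S$, so by the perfect implementation of $R_S$ some extension of this assignment saturates $G_S$, and $H(u_i,y)=H(x,y)=1$, giving all $m+1$ constraints satisfied. If $y\notin\Omega^+$, satisfying all $m+1$ would force $(u_1,\ldots,u_n)\in R_S$, hence $u_i\in\Omega$, hence $(u_i,y)\in E[H]$, contradicting $y\notin\Omega^+$; so at most $m$ constraints can ever be satisfied. Strictness in this case is witnessed by choosing any $\sigma\in S$ and setting $u_j=\sigma(v_j)$: this saturates $G_S$ ($m$ constraints satisfied) while forcing $H(u_i,y)=0$ because $u_i\in\Omega$ and $y\notin\Omega^+$, so exactly $m$ constraints are achieved.

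The only non-routine ingredient is the perfect implementation of $R_S$, which is imported wholesale. The main conceptual point, and the reason the argument cannot simply be a pp-definition $\exists x\,\Omega(x)\wedge H(x,y)$, is the strictness clause: when $y\notin\Omega^+$ I must exhibit an assignment that satisfies exactly $m$ out of the $m+1$ constraints, and this is precisely what the presence of $G_S$ delivers, since for \emph{any} automorphism $\sigma\in S$ the tuple $(\sigma(v_1),\ldots,\sigma(v_n))$ lies in $R_S$ and saturates $G_S$ while leaving only the single constraint $H(u_i,y)$ unsatisfied.
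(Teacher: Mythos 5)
Your proof is correct and takes essentially the same route as the paper: perfectly implement the relation $R_S=\{(\sigma(v_1),\ldots,\sigma(v_n))\mid\sigma\in\Aut(H)\}$, then append the single extra constraint $H(u_i,y)$ (resp.\ $H(y,u_i)$) to obtain a strict implementation of $\Omega^+$ (resp.\ $\Omega^-$), with strictness for $y\notin\Omega^+$ witnessed by saturating the $R_S$-gadget with any automorphism. The only difference is one of explicitness: instead of importing an abstract perfect implementation $G_S$ of $R_S$ from the proof of Theorem~\ref{th:maxCSP-unary}, the paper writes down the concrete auxiliary-free gadget $g(z_1,\ldots,z_p)=\sum_{(i,j)\in E[H]}H(z_i,z_j)$ and observes (using that $H$ is a core, via Theorem~1 of \cite{expressive-power}) that $g$ attains its maximum value $|E[H]|$ exactly on automorphisms, which is the same as saying $g$ perfectly implements $R_S$ with $m=|E[H]|$ constraints and no auxiliaries; thereafter both verifications are identical.
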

\begin{proof}
  Assume that $H \in R_D$ where $D = \{1, 2, \ldots, p\}$
  and (without loss of generality) assume that $1 \in \Omega$. We
  construct a strict implementation of $\Omega^+$; the other case can
  be proved in a similar way.  Consider the function
\[g(z_1,\ldots,z_p) = \sum_{H(i,j)=1} H(z_i,z_j).\]
By combining the fact that $H$ is a core with Theorem~1
in~\cite{expressive-power},
one sees that the following holds: $g(z_1,\ldots,z_p)=|E[H]|$ if
and only if the function $\{1 \mapsto z_1,\ldots,p \mapsto z_p\}$
is an automorphism of $H$. This also implies that
a necessary condition for $g(z_1,\ldots,z_p)=|E[H]|$ is that $z_1$
is assigned some element in the orbit containing $1$,
i.e. the orbit $\Omega$.
We claim that $\Omega^+$ can be strictly implemented
  as follows:
\[\Omega^+(x) + (\alpha - 1) = \max_{\tup{z}} \left( H(z_1, x) + g(\tup{z}) \right) \notag\]
  where $\tup{z} = (z_1, z_2, \ldots, z_p)$ and $\alpha = |E[H]|+1$.

Assume first that $x \in \Omega^+$ and choose $y \in \Omega$
such that $H(y,x)=1$. Then, there exists an automorphism
$\sigma$ such that $\sigma(1)=y$ and
$H(z_1, x) + g(\tup{z}) = 1 + |E[H]|$
by assigning variable $z_i$, $1 \leq i \leq p$, the
value $\sigma(i)$.

If $x \not\in \Omega^+$, then there is no $y \in \Omega$
such that $H(y,x)=1$. If the constraint $H(z_1,x)$ is to be
satisfied, then $z_1$ must be chosen such that $z_1 \not\in \Omega$.
We have already observed that such an assignment cannot be extended
to an automorphism of $H$ and, consequently,
$H(z_1, x) + g(\tup{z}) < 1 + |E[H]|$ whenever $z_1 \not\in \Omega$.
However, the assignment $z_i=i$, $1 \leq i \leq p$, makes
$H(z_1, x) + g(\tup{z}) = |E[H]|$
since the identity function is an
automorphism of $H$.
\end{proof}

\begin{lemma} \label{lem:orbit-rest}
  If $H$ is a core digraph and $\Omega$ an orbit in $\Aut(H)$, then,
  for every $k$, there is a number $k'$ such that \prob{Max
  CSP$(\{H|_{\Omega}\})$-$k$} $\leq_{AP}$ \prob{Max
  CSP$(\{H\})$-$k'$}.
\end{lemma}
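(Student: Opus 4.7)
The plan is to reduce the problem in two stages: first, strictly implement the unary predicate $\Omega$ (viewed as a subset of $D$) from $H$, and then use Lemma~\ref{lem:strict} together with Lemma~\ref{lem:domrest} to assemble the AP-reduction. Concretely, if $H$ strictly implements $\Omega$, then Lemma~\ref{lem:strict} gives \prob{Max CSP$(\{H, \Omega\})$-$k$} $\leq_{AP}$ \prob{Max CSP$(\{H\})$-$k_1$} for some $k_1$, and Lemma~\ref{lem:domrest} applied with $D' = \Omega$ then yields \prob{Max CSP$(\proj{\{H,\Omega\}}{\Omega})$-$B$} $\leq_{AP}$ \prob{Max CSP$(\{H,\Omega\})$-$B$}. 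Since $\proj{\{H,\Omega\}}{\Omega}$ equals $\{H|_{\Omega}, \Omega|_{\Omega}\}$, with $\Omega|_{\Omega}$ the trivially-true unary predicate on $\Omega$ (which can be dropped), composing the two reductions produces \prob{Max CSP$(\{H|_{\Omega}\})$-$k$} $\leq_{AP}$ \prob{Max CSP$(\{H\})$-$k'$} for the required $k'$.

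For the strict implementation of $\Omega$, I would first observe that, since $H$ is a core, the relation $R_S = \{(g(1), \ldots, g(p)) \mid g \in \Aut(H)\}$ is perfectly implemented by the formula $\bigwedge_{(i,j) \in E[H]} H(z_i, z_j)$: by the argument in the proof of Lemma~\ref{lem:orbit+-impl} (which invokes Theorem~1 of~\cite{expressive-power}), the tuple $(z_1,\ldots,z_p)$ satisfies all of these constraints if and only if the map $i \mapsto z_i$ is an automorphism of $H$. Fixing any $k \in \Omega$ and identifying $z_k$ with the primary variable $x$ then gives a perfect $|E[H]|$-implementation of $\Omega$, since the projection of $R_S$ onto its $k$-th coordinate is exactly $\Omega$.

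The main obstacle is upgrading this perfect implementation to a strict one, which requires that for every $x \notin \Omega$ some assignment to the remaining $z_i$ satisfies exactly $|E[H]|-1$ of the constraints. The naive identity assignment $z_i = i$ does not always achieve this, as can be seen when $H$ is a complete bipartite digraph $K_{m,n}$ with $\Omega$ one of the bipartition classes. I would address this by combining the above gadget with the strict implementations of $\Omega_i^+$ and $\Omega_j^-$ supplied by Lemma~\ref{lem:orbit+-impl} for every orbit of $\Aut(H)$, using disjoint unions via Lemma~\ref{lem:unary-union} and the identity $(P \cap Q)(x) + 1 = P(x) + Q(x)$, which is a strict implementation of $P \cap Q$ whenever $P \cup Q = D$. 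Extending the construction of Lemma~\ref{lem:orbit+-impl} so that $U^+$ and $U^-$ are strictly implementable for every strictly-implemented union of orbits $U$, and then verifying that the resulting Boolean algebra of unary predicates isolates every single orbit of a core digraph, forms the central technical step of the proof.
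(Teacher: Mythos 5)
Your plan — first strictly implement the unary predicate $\Omega$ and then apply Lemma~\ref{lem:strict} and Lemma~\ref{lem:domrest} — is natural, and you correctly spot the obstruction: the perfect implementation of $\Omega$ via the $R_S$ gadget need not be strict. (A concrete instance: let $H$ consist of a vertex $1$ joined by a digon to each vertex of a directed $4$-cycle on $\{2,3,4,5\}$; this is a core with orbits $\{1\}$ and $\{2,3,4,5\}$, and with $z_1$ identified with $x$ one checks that the maximum of $\sum_{(i,j)\in E[H]}H(z_i,z_j)$ over $\tup z$ with $z_1=2$ is $10$, whereas $|E[H]|-1=11$.) However, the repair you sketch does not close the gap. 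The only unary relations your toolbox yields are $\Omega^+$ and $\Omega^-$ for orbits $\Omega$, together with disjoint unions and intersections of covering pairs, and these need not isolate any orbit. Take $H$ on $\{1,\ldots,5\}$ consisting of a directed $3$-cycle on $\{1,2,3\}$, a digon on $\{4,5\}$, and all arcs in both directions between the two parts. This is a core with $\Aut(H)\cong\mathbb{Z}_3\times\mathbb{Z}_2$ and orbits $\Omega_1=\{1,2,3\}$, $\Omega_2=\{4,5\}$, yet $\Omega_1^+=\Omega_1^-=\Omega_2^+=\Omega_2^-=V[H]$. Your closure therefore produces only $V[H]$ and never $\Omega_1$ or $\Omega_2$; in particular the nontrivial orbit $\Omega_1$, for which $H|_{\Omega_1}$ is a directed $3$-cycle and the lemma is actually needed, is out of reach. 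So the ``central technical step'' you defer to is not merely unproved — the claim that the Boolean algebra generated this way isolates every orbit of a core digraph is false.

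The paper avoids strict implementation of $\Omega$ altogether. Its proof builds the $AP$-reduction directly: for each variable $v_i$ it introduces fresh variables $w_i^1,\ldots,w_i^p$ (one per vertex of $H$), attaches $k$ copies of every constraint $H(w_i^a,w_i^b)$ for $(a,b)\in E[H]$, and identifies $v_i$ with $w_i^d$ for a fixed $d\in\Omega$. The multiplicity $k$ makes it possible to locally repair any solution so that every gadget constraint is satisfied without decreasing the measure; once the gadget is satisfied, the core property of $H$ forces $w_i^1,\ldots,w_i^p$ to induce an automorphism, hence $w_i^d\in\Omega$. The $AP$-reduction is then completed by bounding $\opt(\I')$ against $\opt(\I)$ using the constant-factor lower bound from Lemma~\ref{lem:in-apx}, and by returning the better of the repaired solution and the approximation-algorithm solution. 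This is a genuinely different route: a weighted-gadget $AP$-reduction in which the perfect (not strict) implementation suffices because the multiplicities, rather than strictness, keep the approximation loss under control.
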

\begin{proof}
  Let $V[H] = \{1, 2, \ldots, p\}$ and arbitrarily choose one element
  $d \in \Omega$.  Let $\I = (V, C)$ be an arbitrary instance of
  \prob{Max CSP$(\{H|_{\Omega}\})$-$k$} and let $V = \{v_1, \ldots,
  v_n\}$. We construct an instance $\I' = (V' \cup V, C' \cup C)$ of
  \prob{Max CSP$(\{H\})$-$k'$} ($k'$ will be specified below) as
  follows: for each variable $v_i \in V$:
\begin{enumerate}
\item Add fresh variables $w^1_i, \ldots, w^p_i$ to $V'$. For each
  $(a, b) \in E[H]$, add $k$ copies of the constraint $H(w^a_i,
  w^b_i)$ to $C'$.

\item Identify the variables $v_i$ and $w^d_i$ and remove $v_i$ from $V'$.
\end{enumerate}

It is clear that there exist an integer $k'$, independent of $\I'$,
such that $\I'$ is an instance of \prob{Max CSP$(\{H\})$-$k'$}.

Let $s'$ be a solution to $\I'$. For an arbitrary variable $v_i \in
V$, if there is some constraint in $C'$ which is not satisfied by
$s'$, then we can get another solution $s''$ by modifying $s'$ so that
every constraint in $C'$ is satisfied (if $H(w^a_i, w^b_i)$ is a
constraint which is not satisfied by $s'$ then set $s''(w^a_i) = a$
and $s''(w^b_i) = b$). We will denote this polynomial-time algorithm
by $P'$, so $s'' = P'(s')$. The corresponding solution to $\I$ will be
denoted by $P(s')$, so $P(s')(v_i) = P'(s')(w^d_i)$.

The algorithm $P$ may make some of the constraints involving $v_i$
unsatisfied. However, the number of copies, $k$, of the constraints in
$C'$ implies that $m(\I', s') \leq m(\I', P'(s'))$. In particular, this
means that any optimal solution to $\I'$ can be used to construct
another optimal solution which satisfies all constraints in $C'$.

Hence, for each $v_i \in V$, all constraints from step 1 are satisfied
by $s'' = P'(s')$. As $H$ is a core, $s''$ restricted to $w^1_i,
\ldots, w^p_i$ (for any $v_i \in V$) induces an automorphism of $H$.
Denote the automorphism by $f : V[H] \rightarrow V[H]$ and note that
$f$ can be defined as $f(x) = s''(w^x_i)$. Furthermore, $s''(w^d_i)
\in \Omega$ for all $w^d_i \in V$ since $d \in \Omega$.

To simplify the notation we let $l = |E[H]|$. By a straightforward
probabilistic argument we have $\opt(\I) \geq \frac{l}{p^2} |C|$.
Using this fact and the argument above we can bound the optimum of
$\I'$ as follows:
\begin{align}
\opt(\I') &\leq \opt(\I) +  kl|V|         \notag \\
          &\leq \opt(\I) + 2kl|C|         \notag \\
          &\leq \opt(\I) + 2kp^2 \opt(\I) \notag \\
          &=    (1 + 2kp^2) \opt(\I) .    \notag
\end{align}


From Lemma~\ref{lem:in-apx} we know that there exists a
polynomial-time approximation algorithm $A$ for \prob{Max
CSP$(\proj{H}{\Omega})$}.  Let us assume that $A$ is a
$c$-approximation algorithm, i.e., it produces solutions which are
$c$-approximate in polynomial time. We construct the algorithm $G$ in
the $AP$-reduction as follows:
\[
G(\I, s') =
\left\{ \begin{array}{ll}
P(s') & \textrm{if } m(\I, P(s')) \geq m(\I, A(\I)), \\
A(\I) & \textrm{otherwise.}
\end{array} \right.
\]
We see that $\opt(\I) / m(\I, G(\I, s')) \leq c$. Let $s'$ be a
$r$-approximate solution to $\I'$. As $m(\I', s') \leq m(\I', P'(s'))$,
we get that $P'(s')$ is a $r$-approximate solution to $\I'$, too.
Furthermore, since $P'(s')$ satisfies all constraints introduced in
step 1, we have $\opt(\I') - m(\I', P'(s')) = \opt(\I) - m(\I,
P(s'))$.  Let $\beta = 1 + 2kp^2$ and note that
\begin{align}
\frac{\opt(\I)}{m(\I, G(\I, s'))}
&=        \frac{m(\I, P(s'))}{m(\I, G(\I, s'))} + \frac{\opt(\I') - m(\I', P'(s'))}{m(\I, G(\I, s'))}                       &\leq \notag \\
&\leq 1 + \frac{\opt(\I') - m(\I', P'(s'))}{m(\I, G(\I, s'))} &\leq \notag \\
&\leq 1 + c \cdot \frac{\opt(\I') - m(\I', P'(s'))}{\opt(\I)} &\leq \notag \\
&\leq 1 + c\beta \cdot \frac{\opt(\I') - m(\I', P'(s'))}{\opt(\I')} &\leq \notag \\
&\leq 1 + c\beta \cdot \frac{\opt(\I') - m(\I', P'(s'))}{m(\I', P'(s'))} &\leq \notag \\
&\leq 1 + c\beta(r-1) . & \notag
\end{align}

\end{proof}

\begin{lemma} \label{lem:non-transitive}
  Let $H$ be a non-empty non-valid digraph such that
  \begin{itemize}
  \item $|V[H]| > 2$,
  \item $H$ is a core, and
  \item $H$ is not vertex-transitive.
  \end{itemize}
  Then, either (a) \prob{Max CSP$(\{H\})$-$B$} is hard to approximate,
  or (b) there exists a proper subset $X$ of $V$ such that $|X| \geq
  2$, $\proj{H}{X}$ is non-empty, $\proj{H}{X}$ is non-valid and for
  every $k$ there exists a $k'$ such that \prob{Max
    CSP$(\{\proj{H}{X}\})$-$k$} $\leq_{AP}$ \prob{Max
    CSP$(\{H\})$-$k'$}.
\end{lemma}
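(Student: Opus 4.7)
The plan is to leverage the orbit structure of $\Aut(H)$ together with the domain-restriction machinery already developed. Since $H$ is not vertex-transitive, the orbits $\Omega_1, \ldots, \Omega_t$ of $\Aut(H)$ partition $V[H]$ with $t \geq 2$. By Lemma~\ref{lem:orbit+-impl}, $H$ strictly implements the unary predicates $\Omega_i^+$ and $\Omega_i^-$ for each $i$, and by Lemma~\ref{lem:unary-union} disjoint unions of such predicates are also strictly implementable. Note that each $\Omega_i^+$ and $\Omega_i^-$ is $\Aut(H)$-invariant, hence itself a union of orbits.

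The task is then to locate a proper subset $X \subsetneq V[H]$ with $|X| \geq 2$, expressible as a disjoint union of sets of the form $\Omega_j^+$ or $\Omega_j^-$, for which $\proj{H}{X}$ is non-empty and non-valid. Once such an $X$ is found, chaining the strict implementation of the unary predicate $X$ (Lemma~\ref{lem:strict}) with the domain restriction of Lemma~\ref{lem:domrest} produces the $AP$-reduction required by option (b). A natural starting point is to fix an edge $(a,b) \in E[H]$, which exists because $H$ is non-empty, and let $\Omega$ be the orbit of $a$; then $b \in \Omega^+$, so $\Omega^+$ is non-empty. If $\Omega^+$ is a proper subset of $V[H]$ of size at least two and $\proj{H}{\Omega^+}$ is non-empty and non-valid, we are done; the symmetric case uses $\Omega^-$. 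When a single orbit-image fails one of the three requirements, one tries disjoint unions in order to enlarge $|X|$, to avoid covering the whole vertex set, or to force an internal edge in $\proj{H}{X}$.

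The main obstacle is the residual case where no candidate $X$ satisfies all three requirements simultaneously. The combinatorial constraints in that case are severe: every proper non-singleton subset obtainable as a disjoint union of orbit-images has either no $H$-edges inside it or contains a loop of $H$. In this residual case I would invoke the algebraic framework of Section~\ref{sec:gap1} to show that these constraints force $\mathcal{A}^c_H$ to have a non-trivial factor whose term operations are only projections, so that Theorem~\ref{th:maxCSP-unary} delivers option (a), in fact in the stronger form of a hard gap at location 1. Bridging the combinatorial constraints of the residual case to this algebraic conclusion is the delicate step.
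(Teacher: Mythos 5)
Your setup for option (b) --- exploiting orbits of $\Aut(H)$, the strictly implementable unary predicates $\Omega^+$ and $\Omega^-$ from Lemma~\ref{lem:orbit+-impl}, their disjoint unions via Lemma~\ref{lem:unary-union}, and the chaining of Lemma~\ref{lem:strict} with Lemma~\ref{lem:domrest} --- is the same machinery the paper uses. One simplification you missed: since $H$ is non-valid it has no loops, so \emph{every} restriction $\proj{H}{X}$ is automatically non-valid; the only condition you must engineer is non-emptiness. Also note that the paper additionally needs Lemma~\ref{lem:orbit-rest} (a direct $AP$-reduction to a restriction onto an orbit $\Omega$, not merely onto $\Omega^+$ or $\Omega^-$) for the subcase where $\proj{H}{\Omega}$ is already non-empty; your plan, which only produces sets of the form $\Omega^{\pm}$ and unions thereof, does not cover that subcase, because $\Omega$ itself is generally not such a set.

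The serious gap is your treatment of the residual case. You propose to show that the residual combinatorial constraints force $\mathcal{A}^c_H$ to have a non-trivial factor with only projections as term operations, and then invoke Theorem~\ref{th:maxCSP-unary}. This cannot succeed: the paper's analysis shows that when every candidate $X$ fails, one is driven to a partition $V[H]=\Omega_1^+\cup\Omega_2^-$ with $\proj{H}{\Omega_1^+}$ and $\proj{H}{\Omega_2^-}$ both empty, i.e.\ $H$ is bipartite. For a bipartite digraph, \prob{CSP$(\{H\})$} is polynomial-time solvable, so by Theorem~\ref{th:CSP-unary} the algebra $\mathcal{A}^c_H$ has \emph{no} such projection-only factor, and Theorem~\ref{th:maxCSP-unary} is simply inapplicable. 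The paper instead closes this case with Lemma~\ref{lem:bipartite}, whose hardness is inherited from \prob{Max Cut} via Example~\ref{ex:maxkcol} and Lemma~\ref{lem:double-edge} --- a route entirely outside the algebraic hard-gap framework of Section~\ref{sec:gap1}. So the ``delicate step'' you flagged is not merely delicate; as you have set it up, it is impossible, and the correct resolution is the bipartite reduction rather than the algebraic one.
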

\begin{proof}
  We split the proof into three cases.

  \noindent \textbf{Case 1: There exists an orbit $\Omega_1 \subsetneq V[H]$ such
    that $\Omega_1^+$ contains at least one orbit.}

  If $\proj{H}{\Omega_1}$ is non-empty, then we get the result from
  Lemma~\ref{lem:orbit-rest} since $\Omega_1 \subsetneq V[H]$ (we
  cannot have $|\Omega_1| = 1$ because then $H$ would contain a loop).  Assume
  that $\proj{H}{\Omega_1}$ is empty.  As $\proj{H}{\Omega_1}$ is
  empty, we get that $\Omega_1^+$ is a proper subset of $V[H]$. If
  $\proj{H}{\Omega_1^+}$ is non-empty, then we get the result from
  Lemmas~\ref{lem:orbit+-impl}, \ref{lem:strict}
  and~\ref{lem:domrest}. Hence, we assume that $\proj{H}{\Omega_1^+}$
  is empty.

  Arbitrarily choose an orbit $\Omega_2 \subseteq \Omega_1^+$ and note
  that $\Omega_1^+ \cap \Omega_2^- = \emptyset$ since
  $\proj{H}{\Omega_1^+}$ is empty. If $\Omega_1^+ \cup \Omega_2^-
  \subsetneq V[H]$, then we get the result from
  Lemmas~\ref{lem:orbit+-impl}, \ref{lem:strict},
  \ref{lem:unary-union} and~\ref{lem:domrest} because
  $\proj{H}{\Omega_1^+ \cup \Omega_2^-}$ is non-empty. Hence, we can
  assume without loss of generality that $\Omega_1^+ \cup \Omega_2^- =
  V[H]$, and since $\Omega_1^+ \cap \Omega_2^- = \emptyset$, we have an
  partition of $V[H]$ into the sets $\Omega_1^+$ and $\Omega_2^-$.
  Using the same argument as for $\Omega_1^+$, we can assume that
  $\proj{H}{\Omega_2^-}$ is empty.
  Therefore, $\Omega_1^+$,$\Omega_2^-$ is a partition of $V[H]$ and
  $\proj{H}{\Omega_1^+}$,$\proj{H}{\Omega_2^-}$ are both empty. This
  implies that $H$ is bipartite and we get the result from
  Lemma~\ref{lem:bipartite}.

  \noindent \textbf{Case 2: There exists an orbit $\Omega_1 \subset V[H]$ such
    that $\Omega_1^-$ contains at least one orbit.}

  This case is analogous to the previous case.

  \noindent \textbf{Case 3: For every orbit $\Omega \subseteq V[H]$, neither $\Omega^+$
    nor $\Omega^-$ contains any orbits.}

  Pick any two orbits $\Omega_1$ and $\Omega_2$ (not necessarily
  distinct). Assume that there are $x \in \Omega_1$ and $y \in
  \Omega_2$ such that $(x, y) \in E[H]$.  Let $z$ be an arbitrary
  vertex in $\Omega_2$. Since $\Omega_2$ is an orbit of $H$, there is an
  automorphism $\rho \in \Aut(H)$ such that $\rho(y) = z$, so
  $(\rho(x), z) \in E[H]$. Furthermore, $\Omega_1$ is an orbit of
  $\Aut(H)$ so $\rho(x) \in \Omega_1$.  Since $z$ was chosen
  arbitrarily, we conclude that $\Omega_2 \subseteq \Omega_1^+$.
  However, this contradicts our assumption that neither $\Omega_1^+$
  nor $\Omega_1^-$ contains any orbit. We conclude that for any pair
  $\Omega_1$, $\Omega_2$ of orbits and any $x \in \Omega_1$, $y \in
  \Omega_2$, we have $(x, y) \not \in E[G]$.  This implies that $H$ is
  empty and Case~3 cannot occur.
\end{proof}

\begin{lemma} \label{lem:digraph}
  Let $H$ be a non-empty non-valid digraph. Then, \prob{Max
    CSP$(\{H\})$-$B$} is hard to approximate.
\end{lemma}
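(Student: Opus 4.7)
The plan is to run an induction on $|V[H]|$, using the preceding lemmas to reduce each case to a strictly smaller sub-digraph or to a case that has already been settled. First, by Lemma~\ref{lem:core} together with Lemma~\ref{lem:ap-red}, it suffices to prove the claim when $H$ is a core; replacing $H$ by its core preserves non-emptiness and (since ``non-valid'' means no tuple of the form $(d,d)$ lies in $E[H]$, a property inherited by restrictions to images of retractions) preserves non-validity as well.

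For the base case $|V[H]|=2$, non-validity forbids any loop, so $E[H]\sse \{(a,b),(b,a)\}$ where $V[H]=\{a,b\}$; thus $H$ is bipartite with the partition $\{a\},\{b\}$, and Lemma~\ref{lem:bipartite} gives the conclusion. For the inductive step, assume the statement holds for all non-empty non-valid digraphs with fewer vertices, and let $H$ be a core with $|V[H]|\ge 3$. If $H$ is vertex-transitive, Lemma~\ref{lem:trans-digraph} finishes the argument directly. Otherwise, apply Lemma~\ref{lem:non-transitive} to $H$: in case~(a) the problem \prob{Max CSP$(\{H\})$-$B$} is hard to approximate and we are done, while in case~(b) we obtain a proper subset $X\subsetneq V[H]$ with $|X|\ge 2$ such that $\proj{H}{X}$ is non-empty and non-valid, and an $AP$-reduction \prob{Max CSP$(\{\proj{H}{X}\})$-$k$} $\leq_{AP}$ \prob{Max CSP$(\{H\})$-$k'$} for every $k$. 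Since $|X|<|V[H]|$, the induction hypothesis tells us that \prob{Max CSP$(\{\proj{H}{X}\})$-$B$} is hard to approximate; composing with the $AP$-reduction from Lemma~\ref{lem:non-transitive} and invoking Lemma~\ref{lem:ap-red} transfers hardness to \prob{Max CSP$(\{H\})$-$B$}.

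There is essentially no remaining obstacle beyond assembling the pieces: the heavy lifting has already been done in Lemmas~\ref{lem:trans-digraph}, \ref{lem:bipartite} and \ref{lem:non-transitive}. The one point that deserves care is keeping track of the bounded-occurrence parameter across the induction, but this is handled uniformly by the ``for every $k$ there exists $k'$'' formulation of the $AP$-reductions (as codified in the ``-$B$'' notation), so the composed chain of reductions still yields a constant bound on variable occurrences in the final instance.
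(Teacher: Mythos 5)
Your proof is correct and follows essentially the same route as the paper: reduce to a core, dispatch the vertex-transitive case via Lemma~\ref{lem:trans-digraph}, and otherwise iterate Lemma~\ref{lem:non-transitive}. You formalize the paper's ``by repeatedly using Lemma~\ref{lem:non-transitive}'' as an explicit induction on $|V[H]|$, and you add a base case for $|V[H]|=2$, which is actually a worthwhile precision: the paper's phrasing suggests the iteration must terminate at a vertex-transitive digraph or a hardness proof, but it can also terminate at a two-vertex non-vertex-transitive core (a single directed edge), to which Lemma~\ref{lem:non-transitive} does not apply; your observation that any loop-free two-vertex digraph is bipartite and so falls under Lemma~\ref{lem:bipartite} cleanly closes this case.
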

\begin{proof}
  Due to Lemmas~\ref{lem:core} and~\ref{lem:ap-red}, we can assume that
  $H$ is a core. If $H$ is vertex-transitive, then the result follows
  from Lemma~\ref{lem:trans-digraph}. If $H$ is not vertex-transitive,
  then we can obtain, by Lemma~\ref{lem:non-transitive}, a smaller
  graph $G$ such that $G$ has at least two vertices, $G$ is
  non-empty, $G$ is non-valid, and \prob{Max CSP$(G)$-$B$}
  $\leq_{AP}$ \prob{Max CSP$(H)$-$B$}. By repeatedly using
  Lemma~\ref{lem:non-transitive}, we will eventually obtain either a
  graph which is vertex-transitive graph or a proof of approximation
  hardness. In the former case the result follows from
  Lemma~\ref{lem:trans-digraph}.
\end{proof}

\subsection{Main Result} \label{sec:main-th}
Armed with the previous lemmas, it is sufficient to provide an arity
reduction argument (Lemma~\ref{lem:arity} below) and assemble the
various pieces to prove the main theorem. Lemma~\ref{lem:arity} was first
proved in~\cite{maxCSP-three} but we repeat the proof here to make
this report more self-contained.

\begin{lemma} \label{lem:arity}
  If $R$ is a non-empty non-valid relation of arity $n \geq 2$,
  then $R$ strictly implements a binary non-empty non-valid
  relation.
\end{lemma}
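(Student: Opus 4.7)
My plan is induction on the arity $n$ of $R$. For the base case $n=2$, $R$ is itself a binary non-empty non-valid relation and is strictly implemented by the single constraint $R(x,y)$, which is a trivial strict 1-implementation. For the inductive step $n\geq 3$, I split into two exhaustive cases according to whether some tuple of $R$ has a repeated coordinate.

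Case A: some $\tup{t}\in R$ satisfies $t_i=t_j$ for some $i<j$. I form the $(n-1)$-ary relation $R_*$ obtained by identifying positions $i$ and $j$, so that $R_*(x_1,\ldots,x_{n-1})$ holds iff the $n$-tuple obtained by copying the variable at position $i$ into position $j$ lies in $R$. Then $R_*$ is non-empty, since it contains the collapse of $\tup{t}$, and non-valid, since a constant tuple $(d,\ldots,d)\in R_*$ would force $(d,\ldots,d)\in R$, contradicting the non-validity of $R$. The identification is a strict 1-implementation, and by the inductive hypothesis $R_*$ in turn strictly implements some binary non-empty non-valid relation; since strict implementations compose (as noted after Definition~\ref{def:impl}), so does $R$.

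Case B: every tuple of $R$ has pairwise distinct coordinates. Define $R'(x,y)=\exists z_3,\ldots,z_n\, R(x,y,z_3,\ldots,z_n)$. Any $\tup{t}\in R$ witnesses $(t_1,t_2)\in R'$, so $R'$ is non-empty. For non-validity, $(d,d)\in R'$ would produce a tuple in $R$ with first two coordinates equal, contradicting the Case B hypothesis. The pp-definition is a single constraint, and therefore a strict 1-implementation.

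The only subtlety is checking strictness for these one-constraint implementations. With $m=\alpha=1$, the single constraint is satisfied exactly when the target holds (by design of the pp-definition), and when the target fails the unique constraint is unsatisfied under every auxiliary assignment, giving exactly $\alpha-1=0$ satisfied constraints as required. I do not anticipate a genuine obstacle; the only care needed is that the case split is exhaustive (it is, trivially) and that non-validity is preserved under both the identification and the projection, both of which follow directly from the non-validity of $R$.
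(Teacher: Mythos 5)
Your proof is correct and follows essentially the same strategy as the paper: induction on arity, with the same two-case split (some tuple has a repeated coordinate vs. all tuples have pairwise distinct coordinates), using identification in the first case and projection in the second, both as strict $1$-implementations. The only minor difference is that in your Case B you project away $n-2$ variables at once to land directly on a binary relation, whereas the paper projects away a single variable and recurses; this is a slight streamlining but not a different argument.
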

\begin{proof}
  We prove the lemma by induction on the arity of $R$. The result
  trivially holds for $n = 2$. Assume that the result holds for $n =
  k$, $k \geq 2$. We show that it holds for $n=k+1$.  Assume first
  that there exists $(a_1,\ldots,a_{k+1}) \in D^{k+1}$ such that
  $R(a_1,\ldots,a_{k+1}) = 1$ and $|\{a_1,\ldots,a_{k+1}\}| \leq k$.
  We assume without loss of generality that $a_k=a_{k+1}$ and consider
  the predicate $R'(x_1\zd x_k) = R(x_1,\ldots,x_k,x_k)$.  Note that
  this is a strict 1-implementation and $R'(d,\ldots,d) = 0$ for all
  $d \in D$. Furthermore, note that $R'$ is non-empty since
  $R'(a_1,\ldots,a_k)=1$.

  Assume now that $|\{a_1,\ldots,a_{k+1}\}| = k+1$ whenever
  $R(a_1,\ldots,a_{k+1})=1$. Consider the predicate $R'(x_1, \ldots,
  x_k)= \max_{y} R(x_1, \ldots, x_k, y)$, and note that this is a
  strict 1-implementation. We see that $R'(d,\ldots,d)=0$ for all $d
  \in D$ (due to the condition above) and $R'$ is non-empty since
  $R$ is non-empty.
\end{proof}

We are finally able to state the proof of the main theorem of this
section, Theorem~\ref{th:single}.

\begin{proof}
  Let $R$ be a relation in $R_D^{(n)}$. Clearly, \prob{Max
    CSP$(\{R\})$} can be solved in polynomial time if $R$ is valid. If
  $R$ is empty, then all solutions have the same
  measure.

  Otherwise, if $R$ is non-empty and not valid, then we can, due to
  Lemma~\ref{lem:arity}, strictly implement a binary relation $R'$
  with $R$ such that $R'$ is neither valid nor empty. Together with
  Lemma~\ref{lem:digraph} and Lemma~\ref{lem:strict}, we get the
  desired result.
\end{proof}

We will now give a simple example on how Theorem~\ref{th:single} can
be used for studying the approximability of constraint
languages. Consider the following observation: Let $\Gamma$ be a
constraint language, $R \in \Gamma$ and $\Omega$ an orbit in
$\Aut(\Gamma)$.  Then, $\proj{R}{\Omega}$ is either $d$-valid for
every $d \in \Omega$ or not $d$-valid for any $d \in \Omega$.

\begin{proposition} \label{prop:orbits}
  Let $O = \{\Omega \mid \mbox{$\Omega$ is an orbit in
    $\Aut(\Gamma)$}\}$ and let $\Gamma$ be a constraint language such
    that $O \subseteq \Gamma$.  If $\Gamma$ contains a $k$-ary,
    $k > 1$, relation $R$ that contains a tuple $(t_1,\ldots,t_k)$
    such that $R$ is not $t_i$-valid, for any $1 \leq i \leq k$, then
    \prob{Max CSP$(\Gamma)$} is hard to approximate.
\end{proposition}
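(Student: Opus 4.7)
The strategy is to reduce to the single-relation case handled by Theorem~\ref{th:single}. For each $i \in \{1,\ldots,k\}$, let $\Omega_i$ denote the orbit of $\Aut(\Gamma)$ containing $t_i$; by hypothesis $\Omega_i \in O \subseteq \Gamma$. Since distinct orbits are disjoint, the members of $\{\Omega_1,\ldots,\Omega_k\}$ (with duplicates removed) form a pairwise disjoint family of unary relations lying in $\Gamma$, and their union $S = \Omega_1 \cup \cdots \cup \Omega_k$ can be strictly implemented from $\Gamma$ by iterating Lemma~\ref{lem:unary-union}. Hence, by Lemma~\ref{lem:strict}, it suffices to prove the conclusion with $\Gamma$ replaced by $\Gamma \cup \{S\}$.

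Next I restrict the domain to $S$. Since $S \in \Gamma \cup \{S\}$, Lemma~\ref{lem:domrest} provides an AP-reduction from \prob{Max CSP}$(\proj{(\Gamma \cup \{S\})}{S})$-$B$ to \prob{Max CSP}$(\Gamma \cup \{S\})$-$B$. Set $R' = \proj{R}{S}$, which belongs to $\proj{(\Gamma \cup \{S\})}{S}$. It is non-empty because $t_i \in \Omega_i \subseteq S$ for every $i$, so $(t_1,\ldots,t_k) \in R'$. I claim $R'$ is not valid: if $(d,\ldots,d) \in R'$ for some $d \in S$, then $d \in \Omega_i$ for some $i$, and the observation preceding the proposition, applied to the orbit $\Omega_i$, forces $\proj{R}{\Omega_i}$ to be valid at every element of $\Omega_i$, in particular at $t_i$; this gives $(t_i,\ldots,t_i) \in R$, contradicting the hypothesis that $R$ is not $t_i$-valid.

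Thus $R'$ is a $k$-ary, non-empty, non-valid relation with $k > 1$, so Theorem~\ref{th:single} implies that \prob{Max CSP}$(\{R'\})$-$B$ is hard to approximate. Chaining the AP-reductions --- the trivial inclusion of $\{R'\}$ into $\proj{(\Gamma \cup \{S\})}{S}$, then Lemma~\ref{lem:domrest} to pass to $\Gamma \cup \{S\}$, and finally Lemma~\ref{lem:strict} to eliminate $S$ --- and invoking Lemma~\ref{lem:ap-red} yields hardness of approximation for \prob{Max CSP}$(\Gamma)$. The only substantive point is the orbit-based verification that $R'$ is non-valid; everything else is a routine assembly of the reduction tools developed earlier.
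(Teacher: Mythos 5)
Your proof is correct and follows essentially the same route as the paper's: form the union of the orbits containing $t_1,\ldots,t_k$ (your $S$ is the paper's $U$) by iterated use of Lemma~\ref{lem:unary-union} and Lemma~\ref{lem:strict}, restrict the domain to that set via Lemma~\ref{lem:domrest}, and apply Theorem~\ref{th:single} to the resulting non-empty, non-valid restriction of $R$. The only difference is presentational: you spell out why $\proj{R}{S}$ is non-valid using the observation preceding the proposition, a point the paper leaves implicit.
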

\begin{proof}
  We can view the unary relation
  \[
  U = \bigcup \{\Omega \in O \mid t_i \in \Omega \mbox{ for some } 1 \leq i \leq k \}
  \]
  as a member of $\Gamma$ due to Lemma~\ref{lem:unary-union}. Now,
  $\proj{R}{U}$ is a non-empty, non-valid relation and
  approximability hardness follows from
  Lemmas~\ref{lem:strict},~\ref{lem:domrest}, and
  Theorem~\ref{th:single}.
\end{proof}

\begin{corollary} \label{cor:gamma-trans}
  Let $\Gamma$ be a constraint language such that $\Aut(\Gamma)$
  contains a single orbit.  If $\Gamma$ contains a non-empty
  $k$-ary, $k > 1$, relation $R$ which is not $d$-valid for all $d \in
  D$, then \prob{Max CSP$(\Gamma)$} is hard to approximate. Otherwise,
  \prob{Max CSP$(\Gamma)$} is tractable.
\end{corollary}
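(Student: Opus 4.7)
The plan is to prove the two directions separately, with the hardness direction being a quick appeal to Theorem~\ref{th:single} and the tractability direction a structural observation about what a single orbit forces on the relations in $\Gamma$.

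For the hardness direction, suppose $\Gamma$ contains a non-empty $k$-ary relation $R$ ($k>1$) that is not $d$-valid for all $d \in D$. The first step is to upgrade this hypothesis to ``$R$ is not valid'' in the sense required by Theorem~\ref{th:single}. Since $\Aut(\Gamma)$ is transitive on $D$, for any two elements $d, d' \in D$ there is $\sigma \in \Aut(\Gamma)$ with $\sigma(d)=d'$. Applying $\sigma$ componentwise to the constant tuple $(d,\ldots,d)$, we see that $(d,\ldots,d) \in R \iff (d',\ldots,d') \in R$, so $R$ is either $d$-valid for every $d$ or for none. By hypothesis it cannot be the former, so $R$ is non-empty and not valid. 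Theorem~\ref{th:single} then gives that \prob{Max CSP$(\{R\})$-$B$} is hard to approximate, and since $\{R\} \subseteq \Gamma$ this is an instance of \prob{Max CSP$(\Gamma)$-$B$} (with the same measure), yielding the $AP$-reduction and hence approximation hardness of \prob{Max CSP$(\Gamma)$}.

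For the tractability direction, assume the hypothesis fails: every non-empty $k$-ary relation in $\Gamma$ with $k>1$ is $d$-valid for all $d \in D$. The key observation is that the single-orbit assumption also constrains the unary relations: any unary $U \in \Gamma$ must be invariant under $\Aut(\Gamma)$, hence a union of orbits, hence either empty or all of $D$. Now fix an arbitrary element $d_0 \in D$ and consider the constant assignment $s(v) = d_0$ for every variable $v$. For each constraint $(Q,\tup{s})$ in an instance, $Q$ is either empty (and so unsatisfiable by any assignment), or $Q$ is non-empty and hence contains the tuple $(d_0,\ldots,d_0)$ (either because $Q$ is a $k$-ary relation with $k>1$ containing all constant tuples, or because $Q$ is the full unary relation $D$). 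Thus $s$ satisfies every constraint whose relation is non-empty, which is the maximum possible, and the optimum can be computed in linear time by simply counting the constraints with non-empty relations.

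There is no real obstacle here beyond correctly handling the definitional subtleties in both directions (namely, that a single orbit forces ``$d$-valid for some $d$'' to be equivalent to ``$d$-valid for all $d$'', and that unary relations must be trivial). Everything else is assembly of results already proved in the paper.
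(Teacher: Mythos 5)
Your proof is correct, and the overall strategy matches the paper's: establish that the single-orbit hypothesis upgrades ``not $d$-valid for some $d$'' to ``not valid at all'', then invoke the single-relation hardness result; for the converse, observe that a single orbit forces all unary relations to be trivial and conclude that a constant assignment is optimal. The one difference worth noting is in the hardness direction: the paper cites Proposition~\ref{prop:orbits}, whose formal hypothesis $O \subseteq \Gamma$ (i.e.\ the orbit $D$ being present in $\Gamma$) is not guaranteed by the corollary's assumptions, whereas you bypass this by observing that the only orbit is $D$ itself, so the domain restriction is vacuous and one can appeal directly to Theorem~\ref{th:single}. This is a slightly cleaner route. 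In the tractability direction you are also a bit more careful than the paper: you explicitly handle empty relations (of any arity), which the paper's argument ``$U \subsetneq D$ forces two orbits'' silently misses, since the empty unary relation is preserved by every permutation and does not create a second orbit; your remark that empty constraints are unsatisfiable by everybody and hence irrelevant to the optimum fills that small gap.
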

\begin{proof}
  If a relation $R$ with the properties described above exists, then
  \prob{Max CSP$(\Gamma)$} is hard to approximate by
  Proposition~\ref{prop:orbits} (note that $R$ cannot be $d$-valid for
  any $d$). Otherwise, every $k$-ary, $k>1$, relation $S \in \Gamma$
  is $d$-valid for all $d \in D$. If $\Gamma$ contains a unary
  relation $U$ such that $U \subsetneq D$, then $\Aut(\Gamma)$ would
  contain at least two orbits which contradict our assumptions. It
  follows that \prob{Max CSP$(\Gamma)$} is trivially solvable.
\end{proof}

Note that the constraint languages considered in
Corollary~\ref{cor:gamma-trans} may be seen as a generalisation of
vertex-transitive graphs.

\subsection{{\sc Max CSP} and Supermodularity} \label{sec:impl}
In this section, we will prove two results whose proofs make use of
Theorem~\ref{th:single}. The first result
(Proposition~\ref{prop:poset}) concerns the hardness of
approximating \prob{Max CSP$(\Gamma)$} for $\Gamma$ which contains
all at most binary relations which are 2-monotone (see
Section~\ref{sec:impl-prel} for a definition) on some partially ordered
set which is not a lattice order. The other result,
Theorem~\ref{th:2mon-lattice}, states that \prob{Max CSP$(\Gamma)$}
is hard to approximate if $\Gamma$ contains all at most binary
supermodular predicates on some lattice and in addition contains at
least one predicate which is not supermodular on the lattice.

These results strengthens earlier published
results~\cite{maxcsp-diamonds-tr,maxcsp-diamonds} in various ways
(e.g., they apply to a larger class of constraint languages or they
give approximation hardness instead of \cc{NP}-hardness). In
Section~\ref{sec:impl-prel} we give a few preliminaries which are needed
in this section while the new results are contained in
Section~\ref{sec:impl-results}.

\subsubsection{Preliminaries} \label{sec:impl-prel}
Recall that a partial order $\lattleq$ on a domain $D$ is a
\emph{lattice order} if, for every $x, y \in D$, there exist a
greatest lower bound $x \glb y$ and a least upper bound $x \lub y$.
The algebra $\mathcal{L} = (D; \glb, \lub)$ is a \emph{lattice}, and
$x \lub y = y \iff x \glb y = x \iff x \lattleq y$. We will write $x
\sqsubset y$ if $x \neq y$ and $x \sqsubseteq y$. All lattices we
consider will be finite, and we will simply refer to these algebras as
\emph{lattices} instead of using the more appropriate term
\emph{finite lattices}. The \emph{direct product} of $\mathcal{L}$,
denoted by $\mathcal{L}^n$, is the lattice with domain $D^n$ and
operations acting componentwise.

\begin{definition}[Supermodular function]
Let $\mathcal{L}$ be a lattice on $D$. A function $f : D^n \rightarrow
\mathbb{R}$ is called \emph{supermodular} on $\mathcal{L}$ if it satisfies,
\begin{align}
f(\tup{a}) + f(\tup{b}) \leq f(\tup{a} \sqcap \tup{b}) + f(\tup{a} \sqcup \tup{b})  \label{eq:supermod}
\end{align}
for all $\tup{a}, \tup{b} \in D^n$.
\end{definition}

The set of all supermodular predicates on a lattice $\mathcal{L}$ will
be denoted by $\Spmod_\mathcal{L}$ and a constraint language $\Gamma$ is
said to be supermodular on a lattice $\mathcal{L}$ if $\Gamma
\subseteq \Spmod_\mathcal{L}$. We will sometimes use an alternative
way of characterising supermodularity:

\begin{theorem}[\cite{greedyalg-posets-submod}] \label{th:spmod}
  An $n$-ary function $f$ is supermodular on a lattice $\mathcal{L}$
  if and only if it satisfies inequality \eqref{eq:supermod} for all
  $\tup{a} = (a_1, a_2, \ldots, a_n), \tup{b} = (b_1, b_2, \ldots,
  b_n) \in \mathcal{L}^n$ such that
  \begin{enumerate}
  \item $a_i = b_i$ with one exception, or
  \item $a_i = b_i$ with two exceptions, and, for each $i$, the
    elements $a_i$ and $b_i$ are comparable in $\mathcal{L}$.
  \end{enumerate}
\end{theorem}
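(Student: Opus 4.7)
The forward direction is immediate: if $f$ is supermodular, then inequality \eqref{eq:supermod} holds for every pair $\tup{a}, \tup{b} \in \mathcal{L}^n$, and in particular for the restricted pairs described by conditions (1) and (2).

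For the reverse direction, my plan is to induct on $k := |\{i : a_i \neq b_i\}|$. The case $k = 0$ is a trivial equality, and $k = 1$ is precisely condition (1). For the base $k = 2$: if both pairs of differing components are comparable, the inequality is condition (2) itself; otherwise I would combine condition (1), applied at each incomparable coordinate to replace $(a_i, b_i)$ by the comparable pair $(a_i \sqcap b_i, a_i \sqcup b_i)$, with condition (2), applied to an intermediate pair whose two exceptional coordinates are now comparable by construction. Summing the resulting inequalities causes the intermediate $f$-values to cancel and leaves exactly the required supermodular inequality.

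For the inductive step $k \geq 3$, I would pick a coordinate $i$ where $\tup{a}$ and $\tup{b}$ differ, set $\tup{a}' := \tup{a}[i \to b_i]$, and reduce the supermodular inequality on $(\tup{a}, \tup{b})$ to inequalities on pairs differing in at most $k - 1$ coordinates (which are available by the induction hypothesis) together with one application of condition (1) at coordinate $i$ on the pair $(\tup{a}, \tup{a}')$. Specifically, the pair $(\tup{a}', \tup{b})$ differs in $k - 1$ positions, so the induction hypothesis bounds $f(\tup{a}') + f(\tup{b})$ by $f((\tup{a} \sqcap \tup{b})[i \to b_i]) + f((\tup{a} \sqcup \tup{b})[i \to b_i])$; two further applications of the induction hypothesis relate these shifted meet/join tuples back to $f(\tup{a} \sqcap \tup{b})$ and $f(\tup{a} \sqcup \tup{b})$, with the intermediate $f$-values cancelling against those produced by condition (1) on $(\tup{a}, \tup{a}')$.

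The main obstacle is ensuring that every intermediate pair to which the induction hypothesis is applied really does differ in strictly fewer than $k$ coordinates: depending on the comparability pattern of $\tup{a}$ and $\tup{b}$, a naive pivot $i$ can yield auxiliary pairs still differing in $k$ coordinates, so the pivot must be chosen according to the comparability structure of $(a_i, b_i)$, possibly with a secondary induction on the number of coordinates at which $a_i$ and $b_i$ are incomparable. The $k = 2$ incomparable subcase is likewise delicate, but only requires a fixed finite number of applications of conditions (1) and (2); once it is handled, the inductive step $k \geq 3$ is essentially a bookkeeping exercise of summing four inequalities and cancelling intermediate terms.
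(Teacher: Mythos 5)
The paper does not prove this statement; it cites it from Dietrich and Hoffman \cite{greedyalg-posets-submod}, so there is no in-paper proof to compare against. Your outline is essentially the right plan, and you correctly sense where it might break, but the parts you describe as ``bookkeeping'' and ``possibly a secondary induction'' are in fact the whole content of the argument, and your sketch does not resolve them. In particular you never say which auxiliary pairs the ``two further applications of the induction hypothesis'' should be applied to, nor why the induction terminates when those pairs may still differ in $k$ coordinates.

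To close the gap, run a lexicographic induction on the pair (number of coordinates $i$ with $a_i,b_i$ incomparable, number of coordinates with $a_i\neq b_i$), and pivot on an \emph{incomparable} coordinate $i$ whenever one exists. With $\tup{a}'=\tup{a}[i\to b_i]$, $\tup{a}^\wedge=\tup{a}[i\to a_i\sqcap b_i]$, $\tup{a}^\vee=\tup{a}[i\to a_i\sqcup b_i]$, sum condition (1) on $(\tup{a},\tup{a}')$ with the induction hypothesis on $(\tup{a}',\tup{b})$, on $(\tup{a}^\wedge,\tup{a}'\sqcap\tup{b})$, and on $(\tup{a}^\vee,\tup{a}'\sqcup\tup{b})$; absorption gives $\tup{a}^\wedge\sqcap(\tup{a}'\sqcap\tup{b})=\tup{a}\sqcap\tup{b}$, $\tup{a}^\wedge\sqcup(\tup{a}'\sqcap\tup{b})=\tup{a}'$, $\tup{a}^\vee\sqcap(\tup{a}'\sqcup\tup{b})=\tup{a}'$, $\tup{a}^\vee\sqcup(\tup{a}'\sqcup\tup{b})=\tup{a}\sqcup\tup{b}$, and all intermediate values cancel to the target inequality. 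The observation your sketch is missing, and without which the induction need not terminate, is that when $i$ is incomparable the two pairs $(\tup{a}^\wedge,\tup{a}'\sqcap\tup{b})$ and $(\tup{a}^\vee,\tup{a}'\sqcup\tup{b})$ have \emph{no} incomparable coordinates at all (at coordinate $i$ they compare $a_i\sqcap b_i$ with $b_i$, and at each $j\neq i$ they compare $a_j$ with $a_j\sqcap b_j$ or $a_j\sqcup b_j$), even though they may still differ in as many coordinates as $(\tup{a},\tup{b})$; together with the drop of one incomparable coordinate in $(\tup{a}',\tup{b})$, this makes the lexicographic measure decrease strictly on all three subproblems. Once all coordinates are comparable, pivot inside whichever of $\{j:a_j\sqsubset b_j\}$, $\{j:b_j\sqsubset a_j\}$ has at least two elements (if either is empty the inequality is trivial, and if both are singletons it is exactly condition (2)); then the number of differing coordinates drops strictly on all three subproblems.
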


The following definition first occurred in~\cite{supmod-maxCSP}.
\begin{definition}[Generalised 2-monotone]
  Given a poset $\mathcal{P}=(D,\lattleq)$, a predicate $f$ is
  said to be \emph{generalised 2-monotone on $\mathcal{P}$} if
  \[
  f(\tup{x}) = 1 \iff ((x_{i_1} \lattleq a_{i_1}) \land \ldots \land (x_{i_s} \lattleq a_{i_s})) \lor
                      ((x_{j_1} \lattgeq b_{j_1}) \land \ldots \land (x_{j_s} \lattgeq b_{j_s}))
  \]
  where $\tup{x} = (x_1, x_2, \ldots, x_n)$ and $a_{i_1}, \ldots,
  a_{i_s}, b_{j_1}, \ldots, b_{j_s} \in D$, and either of the two
  disjuncts may be empty.
\end{definition}

It is not hard to verify that generalised 2-monotone predicates on
some lattice are supermodular on the same lattice. For brevity,
we will use the
word \emph{2-monotone} instead of generalised 2-monotone.

The following theorem follows from~\cite[Remark~4.7]{maxCSP-fixed}.
The proof in~\cite{maxCSP-fixed} uses the corresponding unbounded
occurrence case as an essential stepping stone;
see~\cite{boolean-csp} for a proof of this latter result.
\begin{theorem}[\prob{Max CSP} on a Boolean domain] \label{th:maxCSP-bool}
  Let $D = \{0,1\}$ and $\Gamma \subseteq R_D$ be a core. If $\Gamma$
  is not supermodular on any lattice on $D$, then \prob{Max
    CSP$(\Gamma)$-$B$} is hard to approximate. Otherwise, \prob{Max
    CSP$(\Gamma)$} is tractable.
\end{theorem}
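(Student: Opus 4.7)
The plan is to prove the two directions of the dichotomy separately. For the tractable direction, the key observation is that if every predicate in $\Gamma$ is supermodular on some fixed lattice $\mathcal{L}$ on $\{0,1\}$, then the measure $m(\I,s)=\sum_{c\in C} f_c(s(\tup{s_c}))$ of any instance $\I=(V,C)$ is itself a supermodular function on $\mathcal{L}^{|V|}$, since supermodularity is preserved under non-negative sums and under substitution of variables. Maximising a supermodular function over $\{0,1\}^{|V|}$ (equivalently, minimising a submodular function) is a classical polynomial-time problem, so \prob{Max CSP$(\Gamma)$} lies in \cc{PO}.

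For the hardness direction, the plan is to first invoke the corresponding \emph{unbounded}-occurrence version of the theorem from~\cite{boolean-csp}, which establishes that whenever a Boolean core $\Gamma$ is not supermodular on any lattice on $\{0,1\}$, the problem \prob{Max CSP$(\Gamma)$} is \cc{APX}-hard and in fact has a hard gap at location~$1$ (this is essentially the content of~\cite[Theorem~5.14]{cspapprox}, cf.\ the discussion following Result~A). The proof of this unbounded statement proceeds by a finite case analysis on how supermodularity fails, and in each case produces a strict implementation from $\Gamma$ of a ``witness'' predicate (related to \prob{Max Cut} or \prob{Max 3-Sat}) together with a gap-preserving reduction from a known hard problem.

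To lift this hardness to the bounded-occurrence setting, I would mirror the template used in the proof of Theorem~\ref{th:maxCSP-unary}. The starting point is Lemma~\ref{lem:3sat-B}, which already provides a hard gap at location~$1$ for \prob{Max 3-Sat} with bounded occurrences. The unbounded-case reduction only introduces a constant number of auxiliary variables per original constraint through its strict implementations, so every variable except at most a constant number of ``global'' auxiliary variables already has bounded occurrences. The global variables are then handled by the $14$-regular expander construction of Theorem~\ref{th:14expander}: each such variable is replaced by many copies linked by equality constraints along the edges of an expander, and a majority-vote argument, essentially identical to the one in the proof of Theorem~\ref{th:maxCSP-unary}, shows that an optimal solution may be assumed to assign the same Boolean value to every copy, at the cost of only a controllable additive loss in the ratio. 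Combining this with Lemmas~\ref{lem:strict} and~\ref{lem:ap-red} then yields the desired bounded-occurrence hardness.

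The main technical obstacle I expect is making the expander replacement interact cleanly with the supermodularity case analysis. Specifically, the reductions behind the unbounded case depend on precisely which form of non-supermodularity $\Gamma$ exhibits, and for each case one must verify that (i) the global auxiliary variables used in the strict implementations can be freely duplicated through equality constraints, and (ii) the equality predicate itself can be strictly implemented from $\Gamma$ in every subcase. Over the Boolean domain, any non-constant binary predicate easily strictly implements equality, so point~(ii) is routine; point~(i) is the genuine bookkeeping work of propagating the bounded-occurrence restriction through the finite case analysis of~\cite{boolean-csp}, and this is where all the delicate counting will take place.
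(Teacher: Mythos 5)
The paper does not give a self-contained proof of Theorem~\ref{th:maxCSP-bool}: it is a citation to \cite[Remark~4.7]{maxCSP-fixed}, which in turn lifts the unbounded-occurrence Boolean classification of \cite{boolean-csp} to the bounded-occurrence setting.

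Your tractability argument is correct and in fact gives a cleaner route than the references: on a two-element domain any lattice is the two-element chain, supermodularity is closed under nonnegative sums and variable identification (both commute with the componentwise lattice operations), so $m(\I,\cdot)$ is supermodular on $\{0,1\}^{|V|}$, and maximising it is submodular minimisation, which is polynomial. This differs from the references, which prove tractability via the characterisation of Boolean supermodular predicates as $2$-monotone and a flow/LP algorithm, but both work.

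The hardness direction contains a genuine error. You claim that a Boolean core $\Gamma$ that is not supermodular gives a \emph{hard gap at location~$1$}, attributing this to \cite[Theorem~5.14]{cspapprox}. That theorem (as the footnote after \textbf{Result~A} explains) concerns constraint languages for which \prob{CSP$(\Gamma)$} is \cc{NP}-hard, i.e.\ the algebraic condition of \textbf{Result~A}; ``not supermodular'' is a strictly weaker condition. Concretely, $\Gamma=\{N_2\}$ (\prob{Max Cut}) is a non-supermodular core, but satisfiability of its instances is bipartiteness testing, which is in \cc{P}, so \prob{Max Cut} cannot have a hard gap at location~$1$. Consequently your plan to mirror the proof of Theorem~\ref{th:maxCSP-unary} --- starting from the gap version of \prob{Max 3-Sat} and routing through expander graphs --- cannot establish the theorem: it proves the wrong property and fails on the simplest non-supermodular $\Gamma$. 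The expander machinery is also unnecessary here: the reduction chain in \cite{boolean-csp} proceeds entirely through strict implementations with fresh, constraint-local auxiliary variables (there are no ``global'' auxiliary variables of the kind that forced the expander construction in Theorem~\ref{th:maxCSP-unary}), so Lemma~\ref{lem:strict} combined with bounded-occurrence hardness of the base problems (\prob{Max Cut} or \prob{Max $k$-Sat}, which are \cc{APX}-hard even with occurrences bounded by a constant~\cite{Ausiello99:complexity}) already yields the ``hard to approximate'' conclusion. That is exactly what \cite[Remark~4.7]{maxCSP-fixed} does. You should replace ``hard gap at location~$1$'' with ``hard to approximate'' throughout the hardness half, drop the expander step, and let Lemma~\ref{lem:strict} do the bounded-occurrence bookkeeping.

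One smaller overstatement: ``any non-constant binary predicate easily strictly implements equality'' is false --- $f(x,y)=[x=0]$ is non-constant and binary but cannot strictly implement equality. The claim you actually need (that the constants $c_0,c_1$ and certain binary predicates are available via strict implementations in the specific subcases of the \cite{boolean-csp} case analysis) is true but requires the case-by-case verification, not a one-line shortcut.
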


\subsubsection{Results} \label{sec:impl-results}

The following proposition is a combination of results proved
in~\cite{supmod-maxCSP} and~\cite{maxcsp-diamonds-tr}.
\begin{proposition} $~$
\begin{itemize}
\item If $\Gamma$ consists of 2-monotone relations on a lattice, then
  \prob{Max CSP$(\Gamma)$} can be solved in polynomial time.

\item Let $\mathcal{P}=(D,\lattleq)$ be a poset which is not a
  lattice.  If $\Gamma$ contains all at most binary 2-monotone
  relations on $\mathcal{P}$, then \prob{Max CSP$(\Gamma)$} is
  \cc{NP}-hard.
\end{itemize}
\end{proposition}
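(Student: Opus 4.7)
The plan is to reduce \prob{Max CSP$(\Gamma)$} to the maximisation of a supermodular function on the product lattice $\mathcal{L}^{|V|}$. First I would verify that every 2-monotone relation on a lattice $\mathcal{L}$ is supermodular on $\mathcal{L}$; by Theorem~\ref{th:spmod} this amounts to checking the inequality only on pairs $\tup{a},\tup{b}$ that differ in at most two coordinates whose values are comparable, and in that restricted setting a direct case analysis on the two disjuncts in the definition of a 2-monotone predicate gives the result. Given an instance $(V,C)$, the objective is then a sum of supermodular predicates on $\mathcal{L}^{|V|}$, hence supermodular. Maximising a supermodular function on a finite lattice is equivalent, via passage to the dual lattice, to minimising a submodular function, which admits a polynomial-time algorithm (Gr\"otschel--Lov\'asz--Schrijver, Iwata--Fleischer--Fujishige, Schrijver). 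This gives tractability.

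\textbf{Approach for the second bullet.} The plan is to use the failure of the lattice property to strictly implement (in the sense of Section~\ref{sec:red-tech}) a predicate whose Max CSP problem is known to be hard. Since $\mathcal{P}$ is not a lattice, there exist $a,b\in D$ such that either the set of common upper bounds of $\{a,b\}$ is empty, or it contains at least two incomparable minimal elements (or the dual situation holds for lower bounds). The unary 2-monotone predicates $x\lattleq c$ and $x\lattgeq c$ let us carve out any principal filter and ideal of $\mathcal{P}$, and in particular isolate a small sub-poset containing the offending configuration. Combining these unary restrictions with binary 2-monotone predicates in suitable gadgets, the goal is to strictly implement a Boolean predicate that is not supermodular on any lattice order of $\{0,1\}$; Theorem~\ref{th:maxCSP-bool} then gives \cc{NP}-hardness of the Boolean Max CSP, and Lemma~\ref{lem:strict} transfers the hardness back to \prob{Max CSP$(\Gamma)$}.

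\textbf{Main obstacle.} The principal difficulty lies in the second bullet: one must show that \emph{every} non-lattice poset can be exploited by a gadget built solely from 2-monotone predicates. The delicate step is to translate the combinatorial witness of failure --- two incomparable minimal upper bounds or an empty set of upper bounds of a pair --- into a Boolean constraint that lands outside the tractable supermodular regime of Theorem~\ref{th:maxCSP-bool}. This requires a careful case split on the precise way the lattice axioms break, and it is exactly this case analysis that is carried out in~\cite{maxcsp-diamonds-tr}. The tractability direction, by contrast, is largely routine once supermodularity of 2-monotone predicates is established, so no serious obstacle is expected there.
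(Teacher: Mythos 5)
The paper does not actually prove this proposition; it cites it as a combination of results from~\cite{supmod-maxCSP} and~\cite{maxcsp-diamonds-tr}, and then proves a \emph{strengthening} of the second bullet as Proposition~\ref{prop:poset}. So the comparison must be made partly against those sources and partly against Proposition~\ref{prop:poset}.

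For the first bullet, your argument has a genuine gap. You correctly observe that 2-monotone predicates on a lattice $\mathcal{L}$ are supermodular on $\mathcal{L}$, and hence the objective of a \prob{Max CSP$(\Gamma)$} instance is supermodular on $\mathcal{L}^{|V|}$. But the step that invokes Gr\"otschel--Lov\'asz--Schrijver, Iwata--Fleischer--Fujishige or Schrijver for submodular minimisation is not available in general: those algorithms minimise submodular \emph{set} functions, i.e.\ work on Boolean lattices. The reduction to the Boolean case requires a meet- and join-preserving embedding of $\mathcal{L}^{|V|}$ into a polynomial-size Boolean lattice, which exists precisely when $\mathcal{L}$ is distributive. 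For a non-distributive $\mathcal{L}$ (for example a diamond or the pentagon), polynomial-time maximisation of an arbitrary supermodular function on $\mathcal{L}^{n}$ is not known. The actual proof in~\cite{supmod-maxCSP} does not go through generic supermodular optimisation: it exploits the 2-monotone structure directly, compiling the instance into a minimum cut problem, and this is why it applies to arbitrary lattices. Reducing ``2-monotone'' to the much weaker hypothesis ``supermodular'' loses precisely the structure that makes the problem tractable in general.

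For the second bullet, your plan --- isolate the offending pair $a,b$ with unary 2-monotone predicates, then build a gadget yielding a Boolean predicate outside the supermodular classes of Theorem~\ref{th:maxCSP-bool} --- matches the strategy of~\cite{maxcsp-diamonds-tr}, as you acknowledge, but you leave the case split to the cited source, so this part is a sketch rather than a proof. It is worth noting that the paper's Proposition~\ref{prop:poset} establishes the stronger approximation-hardness conclusion by a considerably lighter route: from the failure of $a\glb b$ (say) one directly exhibits the binary 2-monotone predicate $g(x,y)=1\iff (x\lattleq a)\land(y\lattleq b)$ and, after a domain restriction (Lemma~\ref{lem:domrest}) in the case of multiple maximal common lower bounds, shows $g$ (or its restriction) to be non-empty and non-valid; Theorem~\ref{th:single} then applies immediately. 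This avoids the delicate task of verifying non-supermodularity over every Boolean lattice order, since non-valid is a far easier condition to certify, and it is where Theorem~\ref{th:single} earns its keep in this paper.
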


We strengthen the second part of the above result as follows:
\begin{proposition} \label{prop:poset}
  Let $\lattleq$ be a partial order, which is not a lattice order, on
  $D$. If $\Gamma$ contains all at most binary 2-monotone relations on
  $\lattleq$, then \prob{Max CSP$(\Gamma)$-$B$} is hard to approximate.
\end{proposition}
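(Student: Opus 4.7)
The strategy is to produce, either directly in $\Gamma$ or after a suitable domain restriction, a non-empty binary relation that contains no tuple of the form $(d,d)$, and then invoke Theorem~\ref{th:single}. Since $\lattleq$ is not a lattice order, there exist $a,b\in D$ such that at least one of $a\lub b$ and $a\glb b$ fails to exist; by the evident symmetry between $\lattleq$ and $\lattgeq$ in the definition of 2-monotone relations, I assume without loss of generality that $a\lub b$ does not exist, and split the argument into two subcases.

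Subcase 1 is when $a,b$ have no common upper bound at all. The binary relation $R(x,y):=(x\lattgeq a)\land(y\lattgeq b)$ is 2-monotone, hence $R\in\Gamma$. It contains $(a,b)$ and is therefore non-empty; any tuple $(d,d)\in R$ would be a common upper bound of $a$ and $b$, which does not exist, so $R$ is non-valid as well. Theorem~\ref{th:single} then implies that \prob{Max CSP$(\{R\})$-$B$} is hard to approximate, and since $R\in\Gamma$ the trivial $AP$-reduction together with Lemma~\ref{lem:ap-red} carries this hardness to \prob{Max CSP$(\Gamma)$-$B$}. Subcase 2 is when the set $V:=\{x\in D:x\lattgeq a\text{ and }x\lattgeq b\}$ of common upper bounds is non-empty but has no least element. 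Here $V$ itself is a unary 2-monotone relation (a conjunction of two $\lattgeq$ atoms), so $V\in\Gamma$. Being a finite non-empty poset without a minimum, $V$ possesses two minimal elements $u_1\ne u_2$, necessarily incomparable. Set $R(x,y):=(x\lattleq u_1)\land(y\lattleq u_2)$; this is 2-monotone, so $R\in\Gamma$. Minimality of $u_1$ and $u_2$ in $V$ forces $\proj{R}{V}=\{(u_1,u_2)\}$, a non-empty and non-valid binary relation on the domain $V$. Theorem~\ref{th:single} (applied on $V$) makes \prob{Max CSP$(\{\proj{R}{V}\})$-$B$} hard to approximate. Since $V\in\Gamma$, Lemma~\ref{lem:domrest} yields \prob{Max CSP$(\proj{\Gamma}{V})$-$B$}~$\leq_{AP}$~\prob{Max CSP$(\Gamma)$-$B$}, and composing with the trivial inclusion reduction finishes the argument.

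The main obstacle is Subcase 2: if $\lattleq$ has a top element then every pair has a common upper bound and the direct construction of Subcase 1 produces a valid, hence useless, relation. The key trick is the domain restriction to $V$, which strips away all elements strictly below the minimal common upper bounds of $a,b$, so that the 2-monotone relation $R$ collapses to the manifestly non-valid singleton $\{(u_1,u_2)\}$. The dual case in which it is $a\glb b$ that fails to exist is handled by the symmetric argument with $\lattleq$ and $\lattgeq$ interchanged.
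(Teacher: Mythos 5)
Your proof is correct and follows essentially the same strategy as the paper's: split on whether $a,b$ have no common bound (so the obvious 2-monotone binary relation is already non-valid) versus several minimal/maximal common bounds (so a 2-monotone unary relation cuts the domain down to where a 2-monotone binary relation collapses to a non-valid one), then invoke Theorem~\ref{th:single} plus Lemma~\ref{lem:domrest}. The only cosmetic difference is that you work with joins where the paper works with meets, and in the second subcase you restrict a relation built from the two minimal upper bounds $u_1,u_2$ to the set of all upper bounds of $a,b$, whereas the paper restricts the original relation built from $a,b$ to the set of all upper bounds of the two maximal lower bounds; both yield the needed non-empty non-valid binary relation.
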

\begin{proof}
  Since $\lattleq$ is a non-lattice partial order, there exist two
  elements $a, b \in D$ such that either $a \glb b$ or $a \lub b$ do
  not exist.  We will give a proof for the first case and the other
  case can be handled analogously.

  Let $g(x, y) = 1 \iff (x \lattleq a) \land (y \lattleq b)$. The
  predicate $g$ is 2-monotone on $\mathcal{P}$ so $g \in \Gamma$. We
  have two cases to consider: (a) $a$ and $b$ have no common lower
  bound, and (b) $a$ and $b$ have at least two maximal common lower
  bounds. In the first case $g$ is not valid. To see this, note that
  if there is an element $c \in D$ such that $g(c,c) = 1$, then $c
  \lattleq a$ and $c \lattleq b$, and this means that $c$ is a common
  lower bound for $a$ and $b$, a contradiction.
  Hence, $g$ is not valid, and the proposition
  follows from Theorem~\ref{th:single}.

  In case (b) we will use the domain restriction technique from
  Lemma~\ref{lem:domrest} together with Theorem~\ref{th:single}. In
  case (b), there exist two distinct elements $c, d \in D$, such that
  $c, d \lattleq a$ and $c, d \lattleq b$. Furthermore, we can assume
  that there is no element $z \in D$ distinct from $a,b,c$ such that
  $c \lattleq z \lattleq a, b$, and, similarly, we can assume there is
  no element $z' \in D$ distinct from $a,b,d$ such that $d \lattleq z'
  \lattleq a, b$.

  Let $f(x) = 1 \iff (x \lattgeq c) \land (x \lattgeq d)$. This
  predicate is 2-monotone on $\mathcal{P}$. Note that there is no
  element $z \in D$ such that
  $f(z) = 1$ and $g(z, z) = 1$, but we have $f(a) = f(b) = g(a, b) = 1$.  By
  restricting the domain to $D' = \{x \in D \mid f(x) = 1\}$ with
  Lemma~\ref{lem:domrest}, the result follows from
  Theorem~\ref{th:single}.
\end{proof}

A \emph{diamond} is a lattice $\mathcal{L}$ on a domain $D$ such
that $|D|-2$ elements are pairwise incomparable. That is, a diamond
on $|D|$ elements consist of a top element, a bottom element and
$|D|-2$ elements which are pairwise incomparable. The following
result was proved in~\cite{maxcsp-diamonds}.
\begin{theorem} \label{th:2mon-diam}
  Let $\Gamma$ contain all at most binary 2-monotone predicates on
  some diamond $\mathcal{L}$. If $\Gamma \not \subseteq
  \Spmod_{\mathcal{L}}$, then \prob{Max CSP$(\Gamma)$} is \cc{NP}-hard.
\end{theorem}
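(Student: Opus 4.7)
The plan is to reduce an \cc{NP}-hard problem such as \prob{Max Cut} (i.e., \prob{Max CSP$(\{N_2\})$}) to \prob{Max CSP$(\Gamma)$} by using the non-supermodular predicate $R \in \Gamma \setminus \Spmod_\mathcal{L}$ together with the 2-monotone predicates of $\Gamma$ to (strictly or perfectly) implement the disequality relation on two middle elements of the diamond $\mathcal{L}$. Once such an implementation is in place, Lemmas~\ref{lem:strict} and~\ref{lem:domrest} together with the standard \cc{NP}-hardness of \prob{Max Cut} finish the argument.

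First I would apply Theorem~\ref{th:spmod} to locate tuples $\mathbf{a},\mathbf{b} \in D^n$ differing in at most two coordinates with pointwise incomparable entries, such that $R(\mathbf{a}) = R(\mathbf{b}) = 1$ but $R(\mathbf{a}\glb\mathbf{b}) = R(\mathbf{a}\lub\mathbf{b}) = 0$. In a diamond the only incomparable pairs consist of distinct middle elements $m_i, m_j$, so the differing coordinates take values in $\{m_i, m_j\}$. Next, for every $c \in D$ the singleton $\{c\}$ is pp-definable from the unary 2-monotone relations $\{x : x \lattleq c\}$ and $\{x : x \lattgeq c\}$, so I can fix every non-differing coordinate of $R$ to its common value in $\mathbf{a}$, obtaining a relation $R'$ of arity $1$ or $2$ that retains the non-supermodular pattern: in the unary case $R'$ accepts $m_i$ and $m_j$ while rejecting $\bot$ and $\top$; in the binary case $R'$ accepts $(m_i,m_j)$ and $(m_j,m_i)$ while rejecting $(\bot,\bot)$ and $(\top,\top)$.

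The decisive step is converting $R'$ into an implementation of disequality on $\{m_i, m_j\}$. In the unary case a natural candidate is the 2-monotone binary relation $\rho(x,y) \iff (x \lattleq m_i \land y \lattleq m_j) \lor (x \lattgeq m_i \land y \lattgeq m_j)$: imposing $R'(x) \land R'(y)$ on top of $\rho$ eliminates every assignment using $\bot$ or $\top$, while $\rho$ itself forbids the diagonal $(m_i,m_i),(m_j,m_j)$, leaving precisely $(m_i,m_j)$ and $(m_j,m_i)$. After restricting the domain to $\{m_i, m_j\}$ via Lemma~\ref{lem:domrest} (the restriction being pp-definable from the same ingredients), the induced binary relation is disequality, and the reduction is complete. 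The main obstacle is the binary case, where one must combine several copies of $R'$ with auxiliary 2-monotones to force both primary variables into $\{m_i, m_j\}$ and to kill the unwanted diagonal entries without disturbing the two mixed tuples; this requires careful combinatorial bookkeeping, with the cases $R'(m_i,m_i) = 1$ and $R'(m_j,m_j) = 1$ treated separately. A safer fallback route, if this direct combinatorial implementation stalls, is to invoke Theorem~\ref{th:maxCSP-unary}: classify the idempotent polymorphisms of the set of at most binary 2-monotone predicates on a diamond, observe that preserving $R$ rules out the lattice operations $(\glb, \lub)$, and verify that the idempotent reduct of the resulting algebra must then have a non-trivial factor whose term operations are only projections — which would in fact strengthen the conclusion from \cc{NP}-hardness to a hard gap at location $1$.
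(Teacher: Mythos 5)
Your overall plan — isolate a small-arity witness of non-supermodularity via Theorem~\ref{th:spmod}, fix the remaining coordinates to constants, restrict the domain with 2-monotone unaries and Lemma~\ref{lem:domrest}, and finish with a known hard problem — is the right skeleton, and indeed mirrors the proof the paper gives for the stronger Theorem~\ref{th:2mon-lattice} (of which Theorem~\ref{th:2mon-diam} is a special case). However, there are two concrete errors that make the argument as stated break down.

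First, you misread Theorem~\ref{th:spmod}. When the witness pair $(\mathbf{a},\mathbf{b})$ differs in two coordinates, the theorem guarantees that the differing entries are \emph{comparable}, not incomparable. So in the binary case the reduced predicate $R'$ does \emph{not} live on pairs of middle elements $\{m_i,m_j\}$; rather one has $a_1\lattleq b_1$ and $b_2\lattleq a_2$ (after normalisation). Your whole setup of ``the differing coordinates take values in $\{m_i,m_j\}$'' and the plan to ``force both primary variables into $\{m_i,m_j\}$'' therefore does not apply to the binary case at all. This is not a matter of bookkeeping: the structure is different, and the paper handles it with a genuinely different construction (a gadget reduction from Boolean \prob{Max CSP} using the auxiliary 2-monotone predicates $h_1,h_2$ and Theorem~\ref{th:maxCSP-bool}).

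Second, even in the unary case you silently assume the violation has the shape $R(\mathbf{a})=R(\mathbf{b})=1$ and $R(\mathbf{a}\glb\mathbf{b})=R(\mathbf{a}\lub\mathbf{b})=0$. For a $\{0,1\}$-valued $f$, non-supermodularity only requires $f(a)+f(b)>f(a\glb b)+f(a\lub b)$, and the violation can equally well look like $f(a)=1$, $f(b)=0$, $f(\bot)=f(\top)=0$ (e.g.\ $f=\chi_{\{m_1\}}$ on a diamond). In that situation $R'(x)\land R'(y)$ pins $x$ and $y$ to a single middle element and your gadget degenerates. The paper's Subcase~1b treats exactly this configuration, and it needs an extra step: strictly implement a new unary predicate $h(x)=f|_{D'}(x)+u|_{D'}(x)$ with a suitable 2-monotone $u$ before restricting the domain. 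Your Subcase with $f(a)=f(b)=1$ does coincide with the paper's Subcase~1a. Finally, the ``safer fallback route'' via Theorem~\ref{th:maxCSP-unary} is not a fallback at all: showing that the idempotent reduct of $\mathcal{A}_\Gamma$ has a factor with only projections as term operations is a substantial algebraic claim you neither prove nor reduce to anything known, and it is not how the paper (or its cited source) proceeds.
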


By modifying the original proof of Theorem~\ref{th:2mon-diam}, we can
strengthen the result in three ways: our result applies to
arbitrary lattices, we prove inapproximability results instead of
\cc{NP}-hardness, and we prove the result for bounded occurrence
instances.

\begin{theorem} \label{th:2mon-lattice}
  Let $\Gamma$ contain all at most binary 2-monotone predicates on an
  arbitrary lattice $\mathcal{L}$. If $\Gamma \not \subseteq
  \Spmod_{\mathcal{L}}$, then \prob{Max CSP$(\Gamma)$-$B$} is
  hard to approximate.
\end{theorem}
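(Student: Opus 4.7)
The strategy is to use strict implementations together with Lemma~\ref{lem:domrest} to extract from $\Gamma$ a single non-empty non-valid binary relation, and then apply Theorem~\ref{th:single}.

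Fix $f\in\Gamma\setminus\Spmod_\mathcal{L}$. By Theorem~\ref{th:spmod}, I may assume the failure of supermodularity is witnessed by a ``minimal'' pair of tuples $\tup{a},\tup{b}$ that differ in one or two coordinates (with comparable differing entries in the two-coordinate case), so that $f(\tup{a})=f(\tup{b})=1$ and $f(\tup{a}\sqcap\tup{b})=f(\tup{a}\sqcup\tup{b})=0$. The first step will be to reduce $f$ to a binary predicate $g_2$ by fixing every coordinate $k$ on which $\tup{a}$ and $\tup{b}$ coincide to their common value $c_k$; each such fixing is a strict implementation built out of the 2-monotone predicates $(x\lattleq c_k)$ and $(x\lattgeq c_k)$, perhaps combined with additional constraints to provide the slack required for strictness. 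In the one-coordinate case the resulting predicate is unary, and must subsequently be promoted to binary by applying it to two separate variables together with an appropriately chosen binary 2-monotone predicate via strict implementation. Either way we arrive at a binary predicate $g_2$ with $g_2(a_i,a_j)=g_2(b_i,b_j)=1$ and $g_2(a_i,b_j)=g_2(b_i,a_j)=0$, where (say) $a_i\sqsubset b_i$ and $a_j\sqsupset b_j$.

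Next, I would invoke Lemma~\ref{lem:domrest} using a 2-monotone unary predicate --- a conjunction of a principal filter and a principal ideal --- to cut the domain down to a subset $D'$ containing the four critical elements. On a favourable $D'$, after further strict implementations combining $g_2$ with additional 2-monotone predicates, the first variable can be forced into $\{a_i,b_i\}$ and the second into $\{a_j,b_j\}$, so when these sets are disjoint the resulting relation is a sub-relation of $\{a_i,b_i\}\times\{a_j,b_j\}$ containing the two ``on'' tuples $(a_i,a_j)$ and $(b_i,b_j)$, hence non-empty and automatically non-valid. Theorem~\ref{th:single} then gives approximation hardness for \prob{Max CSP} on this single binary relation even under a constant bound on variable occurrences, and Lemma~\ref{lem:strict} together with Lemma~\ref{lem:ap-red} transfers the hardness back to \prob{Max CSP$(\Gamma)$-$B$}.

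The main obstacle is the non-validity step. In the diamond setting of Theorem~\ref{th:2mon-diam} the interval $[a_i\sqcap a_j,\,b_i\sqcup b_j]$ is precisely four-element, so no diagonal tuples can intrude and the construction is essentially automatic. In an arbitrary lattice this interval may contain many additional elements and the restriction of $g_2$ there may carry unwanted diagonal entries, so one must carefully combine several 2-monotone predicates --- for example restricting the domain to the union of disjoint principal ideals/filters which force the first and second variables into disjoint subsets of $\mathcal{L}$ containing $\{a_i,b_i\}$ and $\{a_j,b_j\}$ respectively --- to excise them. Ensuring this extraction of a non-valid relation works uniformly across all finite lattices constitutes the principal technical work of the generalisation.
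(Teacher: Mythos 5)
Your proposal reproduces the paper's strategy in the unary case, but in the binary case it has two concrete gaps that make it fail.

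First, the simplifying assumption ``$f(\tup{a})=f(\tup{b})=1$ and $f(\tup{a}\sqcap\tup{b})=f(\tup{a}\sqcup\tup{b})=0$'' does not follow from non-supermodularity. For a $\{0,1\}$-valued $f$, the violated inequality $f(\tup{a})+f(\tup{b}) > f(\tup{a}\sqcap\tup{b})+f(\tup{a}\sqcup\tup{b})$ also covers the patterns in which $f(\tup{a})=f(\tup{b})=1$ but exactly one of the outer tuples also evaluates to $1$, and the patterns in which only one of $f(\tup{a}),f(\tup{b})$ is $1$. Restricting to $\{a_1,b_1\}\times\{a_2,b_2\}$ in the first family gives a Boolean relation that is $d$-valid for some $d$ (columns~1 and~5 of the paper's Table~\ref{tab:g}), so Theorem~\ref{th:single} is simply inapplicable there. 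You cannot in general choose a different witness to dodge these cases; they are genuine possibilities that the paper must (and does) handle.

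Second, the key step you flag as ``the principal technical work'' --- forcing the two arguments of $g_2$ into the two-element sets $\{a_1,b_1\}$ and $\{a_2,b_2\}$ --- cannot be done by Lemma~\ref{lem:domrest}. That lemma requires the restricting set to be \emph{a relation in $\Gamma$}, and a two-element set $\{a,b\}$ with $a \sqsubset b$ is not a 2-monotone subset of a general lattice (it is the intersection of the ideal below $b$, the filter above $a$, \emph{and} the 2-monotone predicate $N_{a,b}$, but that conjunction is itself not 2-monotone). Strict implementation cannot manufacture this either, because the resulting relation would still be a relation over the original domain. The paper instead performs an explicit gadget-based $AP$-reduction \emph{from the Boolean problem} \prob{Max CSP$(\{c_0, c_1, g\})$-$w$} (hard by Theorem~\ref{th:maxCSP-bool}, which requires the Boolean constants $c_0, c_1$ in the constraint language), introducing fresh auxiliary variables $y^i_j$ and attaching $2w+1$ copies each of $L_{b_1}$, $G_{a_1}$, $N_{a_1,b_1}$ and the transfer predicates $h_1, h_2$, so that any near-optimal solution is forced numerically to place the auxiliary variables inside $\{a_1,b_1\}$ or $\{a_2,b_2\}$. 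This is a quantitative weighting argument, bounded by the number of variable occurrences, not a domain restriction followed by a single-relation hardness result. Without it, and without Theorem~\ref{th:maxCSP-bool} to handle the $d$-valid Boolean patterns, your route does not close.
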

\begin{proof}
  Let $f \in \Gamma$ be a predicate such that $f \not \in
  \Spmod_{\mathcal{L}}$. We will first prove that $f$ can be assumed
  to be at most binary. By Theorem~\ref{th:spmod}, there is a unary or
  binary predicate $f' \not \in \Spmod_{\mathcal{L}}$ which can be
  obtained from $f$ by substituting all but at most two variables by
  constants. We present the initial part of the proof with the
  assumption that $f'$ is binary and the case when $f'$ is unary can
  be dealt with in the same way. Denote the constants by $a_3, a_4,
  \ldots, a_n$ and assume that $f'(x, y) = f(x, y, a_3, a_4, \ldots,
  a_n)$.

  Let $k \geq 5$ be an integer and assume that \prob{Max CSP$(\Gamma
    \cup \{f'\})$-$k$} is hard to approximate. We will prove that
  \prob{Max CSP$(\Gamma)$-$k$} is hard to approximate by exhibiting an
  $AP$-reduction from \prob{Max CSP$(\Gamma \cup \{f'\})$-$k$} to
  \prob{Max CSP$(\Gamma)$-$k$}.  Given an instance $\I = (V, C)$ of
  \prob{Max CSP$(\Gamma \cup \{f'\})$-$k$}, where $C = \{C_1, C_2,
  \ldots, C_q\}$, we construct an instance $\I' = (V', C')$ of
  \prob{Max CSP$(\Gamma)$-$k$} as follows:
  \begin{enumerate}
  \item for any constraint $(f', \tup{v}) = C_j \in C$, introduce the constraint $(f,
    \tup{v'})$ into $C$, where $\tup{v'} = (v_1, v_2, y^j_3,$ $\ldots,
    y^j_n)$, and add the fresh variables $y^j_3, y^j_4, \ldots, y^j_n$
    to $V'$.  Add two copies of the constraints $y^j_i \lattleq a_i$
    and $a_i \lattleq y^j_i$ for each $i \in \{3, 4, \ldots, n\}$ to $C'$.

  \item for other constraints, i.e., $(g, \tup{v}) \in C$ where $g
    \neq f'$, add $(g, \tup{v})$ to $C'$.
  \end{enumerate}

  It is clear that $\I'$ is an instance of \prob{Max
    CSP$(\Gamma)$-$k$}. If we are given a solution $s'$ to $\I'$, we
  can construct a new solution $s''$ to $\I'$ by letting $s''(y^j_i) =
  a_i$ for all $i,j$ and $s''(x) = s'(x)$, otherwise. Denote this
  transformation by $P$, so $s'' = P(s')$. It is not hard
  to see that $m(\I', P(s')) \geq m(\I', s')$.

  From Lemma~\ref{lem:in-apx} we know that there is a constant $c$
  and polynomial-time $c$-approximation algorithm $A$ for \prob{Max
    CSP$(\Gamma \cup \{f'\})$}. We construct the algorithm $G$ in the
  $AP$-reduction as follows:
\[
G(\I, s') =
\left\{ \begin{array}{ll}
\proj{P(s')}{V} & \textrm{if } m(\I, \proj{P(s')}{V}) \geq m(\I, A(\I)), \\
A(\I)           & \textrm{otherwise.}
\end{array} \right.
\]
We see that $\opt(\I) / m(\I, G(\I, s')) \leq c$.

By Lemma~\ref{lem:in-apx}, there is a constant $c'$ such that for
any instance $\I$ of \prob{Max CSP$(\Gamma)$}, we have $\opt(\I) \geq
c'|C|$. Furthermore, due to the construction of $\I'$ and the fact
that $m(\I', P(s')) \geq m(\I', s')$, we have
\begin{align} \notag
\opt(\I') &\leq \opt(\I) + 4(n-2)|C| \\ \notag
          &\leq \opt(\I) + \frac{4(n-2)}{c'} \cdot \opt(\I) \\ \notag
          &\leq \opt(\I) \cdot \left(1 + \frac{4(n-2)}{c'} \right) . \notag
\end{align}

Let $s'$ be an $r$-approximate solution to $\I'$. As $m(\I', s')
\leq m(\I', P(s'))$, we get that $P(s')$ also is an $r$-approximate
solution to $\I'$.  Furthermore, since $P(s')$ satisfies all
constraints introduced in step 1, we have $\opt(\I') - m(\I', P(s'))
= \opt(\I) - m(\I, \proj{P(s')}{V})$.  Let $\beta = 1 + 4(n-2)/c'$
and note that
\begin{align}
& \ \ \ \ \frac{\opt(\I)}{m(\I, G(\I, s'))} & = \notag \\
&=    \frac{m(\I, \proj{P(s')}{V})}{m(\I, G(\I, s'))} + \frac{\opt(\I') - m(\I', P(s'))}{m(\I, G(\I, s'))}                       &\leq & & \notag \\
&\leq 1 + \frac{\opt(\I') - m(\I', P(s'))}{m(\I, G(\I, s'))} &\leq& \ 1 + c \cdot \frac{\opt(\I') - m(\I', P(s'))}{\opt(\I)}           &\leq \notag \\
&\leq 1 + c\beta \cdot \frac{\opt(\I') - m(\I', P(s'))}{\opt(\I')} &\leq& \ 1 + c\beta \cdot \frac{\opt(\I') - m(\I', P(s'))}{m(\I', P(s'))} &\leq \notag \\
&\leq 1 + c\beta(r-1) . & & \notag
\end{align}

We conclude that \prob{Max CSP$(\Gamma)$-$k$} is hard to approximate
if \prob{Max CSP$(\Gamma \cup \{f'\})$-$k$} is hard to approximate.

  We will now prove that \prob{Max CSP$(\Gamma)$-$B$} is hard to
  approximate under the assumption that $f$ is at most binary.  We say
  that the pair $(\tup{a}, \tup{b})$ \emph{witnesses the
    non-supermodularity of $f$} if $f(\tup{a}) + f(\tup{b}) \not \leq
  f(\tup{a} \glb \tup{b}) + f(\tup{a} \lub \tup{b})$.

\noindent \textbf{Case 1: $f$ is unary.} As $f$ is not supermodular
on $\mathcal{L}$, there exists elements $a, b \in \mathcal{L}$ such
that $(a, b)$ witnesses the non-supermodularity of $f$.

Note that $a$ and $b$ cannot be comparable because we would have
$\{a \lub b, a \glb b\} = \{a, b\}$, and so $f(a \lub b) + f(a \glb
b) = f(a) + f(b)$ contradicting the choice of $(a,b)$. We can now
assume, without loss of generality, that $f(a) = 1$. Let
$z_* = a \glb b$ and $z^* = a \lub b$. Note that the two predicates
$u(x) = 1 \iff x \lattleq z^*$ and $u'(x) = 1 \iff z_* \lattleq x$
are 2-monotone and, hence, contained in $\Gamma$. By using
Lemma~\ref{lem:domrest}, it is therefore enough to prove
approximation hardness for \prob{Max CSP$(\proj{\Gamma}{D'})$-$B$},
where $D' = \{x \in D \mid z_* \lattleq x \lattleq z^* \}$.

\noindent \textbf{Subcase 1a: $f(a) = 1$ and $f(b) = 1$.}
At least one of $f(z^*) = 0$ and $f(z_*) = 0$ must hold.

Assume that $f(z_*) = 0$, the other case can be handled in a similar
way. Let $g(x, y) = 1 \iff [(x \sqsubseteq a) \land (y \sqsubseteq
b)]$ and note that $g$ is 2-monotone so $g \in \Gamma$.

Let $d$ be an arbitrary element in $D'$ such that $g(d, d) = 1$.
From the definition of $g$ we know that $d \lattleq a, b$ so $d
\lattleq z_*$ which implies that $d = z_*$. Furthermore, we have
$g(a, b) = 1, f(a) = f(b) = 1$, and $f(z_*) = 0$. Let $D''=\{x\in
D'\mid f(x)=1\}$. By applying Theorem~\ref{th:single} to $g|_{D''}$,
we see that \prob{Max CSP$(\proj{\Gamma}{D''})$-$B$} is hard to
approximate. Now Lemma~\ref{lem:domrest} implies the result for
\prob{Max CSP$(\proj{\Gamma}{D'})$-$B$}, and hence for \prob{Max
CSP$(\Gamma)$-$B$}.

\noindent \textbf{Subcase 1b: $f(a) = 1$ and $f(b) = 0$.}
In this case, $f(z^*) = 0$ and $f(z_*) = 0$ holds.

If there exists $d \in D'$ such that $b \lattl d \lattl z^*$ and
$f(d) = 1$, then we get $f(a) = 1$, $f(d) = 1$, $a \lub d = z^*$ and
$f(z^*) = 0$, so this case can be handled by Subcase~1a. Assume that
such an element $d$ does not exist.

Let $u(x) = 1 \iff b \sqsubseteq x$. The predicate $u$ is 2-monotone
so $u \in \Gamma$. Let $h(x) = f|_{D'}(x) + u|_{D'}(x)$. By the
observation above, this is a strict implementation. By
Lemmas~\ref{lem:strict} and~\ref{lem:ap-red}, it is sufficient to
prove the result for $\Gamma'=\Gamma|_{D'} \cup\{ h\}$. This can be
done exactly as in the previous subcase, with $D''=\{x\in D'\mid
h(x)=1\}$.


\noindent \textbf{Case 2: $f$ is binary.} We now assume that Case~1
does not apply. By Theorem~\ref{th:spmod}, there exist $a_1, a_2,
b_1, b_2$ such that
\begin{align} \label{eq:non-spmod}
  f(a_1, a_2) + f(b_1, b_2) \not \leq f(a_1 \lub b_1, a_2 \lub b_2) +
  f(a_1 \glb b_1, a_2 \glb b_2)
\end{align}
where $a_1, b_1$ are comparable and $a_2, b_2$ are comparable. Note
that we cannot have $a_1 \lattleq b_1$ and $a_2 \lattleq b_2$,
because then the right hand side of~\eqref{eq:non-spmod} is equal to
$f(b_1, b_2) + f(a_1, a_2)$ which is a contradiction. Hence, we can
without loss of generality assume that $a_1 \lattleq b_1$ and $b_2
\lattleq a_2$.

As in Case~1, we will use Lemma~\ref{lem:domrest} to restrict our
domain. In this case, we will consider the subdomain $D' = \{ x \in D
\mid z_* \lattleq x \lattleq z^* \}$ where $z_* = a_1 \glb b_2$ and
$z^* = a_2 \lub b_1$. As the two predicates $u_{z^*}(x)$ and
$u_{z_*}(x)$, defined by $u_{z^*}(x) = 1 \iff x \lattleq z^*$ and
$u_{z_*}(x) = 1 \iff z_* \lattleq x$, are 2-monotone predicates and
members of $\Gamma$, Lemma~\ref{lem:domrest} tells us that it is
sufficient to prove hardness for \prob{Max CSP$(\Gamma')$-$B$} where
$\Gamma' = \proj{\Gamma}{D'}$.

We define the functions $t_i : \{0,1\} \rightarrow \{a_i, b_i\}$,
$i=1,2$ as follows:
\begin{itemize}
\item $t_1(0) = a_1$ and $t_1(1) = b_1$;
\item $t_2(0) = b_2$ and $t_2(1) = a_2$.
\end{itemize}
Hence, $t_i(0)$ is the least element of $a_i$ and $b_i$ and $t_i(1)$
is the greatest element of $a_i$ and $b_i$.

Our strategy will be to reduce a certain Boolean \prob{Max CSP}
problem to \prob{Max CSP$(\Gamma')$-$B$}. Define three Boolean
predicates as follows: $g(x, y) = f(t_1(x), t_2(y))$, $c_0(x) = 1
\iff x = 0$, and $c_1(x) = 1 \iff x = 1$. One can verify that
\prob{Max CSP$(\{c_0, c_1, g\})$-$B$} is hard to approximate for
each possible choice of $g$, by using Theorem~\ref{th:maxCSP-bool};
consult Table~\ref{tab:g} for the different possibilities of $g$.

\begin{table}
\caption{Possibilities for $g$.} \label{tab:g}
\begin{center}
\begin{tabular}{cc|cc|ccccc}
$x$ & $y$ & $t_1(x)$ & $t_2(y)$ & \multicolumn{5}{c}{$g(x, y)$} \\
\hline
$0$ & $0$ & $a_1$    & $b_2$    & $0$ & $0$ & $0$ & $0$ & $1$ \\
$0$ & $1$ & $a_1$    & $a_2$    & $1$ & $1$ & $0$ & $1$ & $1$ \\
$1$ & $0$ & $b_1$    & $b_2$    & $1$ & $0$ & $1$ & $1$ & $1$ \\
$1$ & $1$ & $b_1$    & $a_2$    & $1$ & $0$ & $0$ & $0$ & $0$
\end{tabular}
\end{center}
\end{table}

The following 2-monotone predicates (on $D'$) will be used in the
reduction:
\[
h_i(x, y) = 1 \iff [(x \sqsubseteq z_*) \land (y \sqsubseteq
t_i(0))] \lor [(z^* \sqsubseteq x) \land (t_i(1) \sqsubseteq y)],
i=1,2.
\]
The predicates $h_1, h_2$ are 2-monotone so they belong to $\Gamma'$.
We will also use the following predicates:
\begin{itemize}
\item $L_d(x) = 1 \iff x \sqsubseteq d$,
\item $G_d(x) = 1 \iff d \sqsubseteq x$, and
\item $N_{d, d'}(x) = 1 \iff (x \sqsubseteq d) \lor (d' \sqsubseteq x)$
\end{itemize}
for arbitrary $d, d' \in D'$. These predicates are 2-monotone.

Let $w$ be an integer such that \prob{Max CSP$(\{g,c_0,c_1\})$-$w$}
is hard to approximate; such an integer exists according to
Theorem~\ref{th:maxCSP-bool}. Let $\I = (V, C)$, where $V = \{x_1,
x_2, \ldots, x_n\}$ and $C = \{C_1, \ldots, C_m\}$, be an instance
of \prob{Max CSP$(\{g, c_0, c_1\})$-$w$}.  We will construct an
instance $\I'$ of \prob{Max CSP$(\Gamma')$-$w'$}, where $w' = 8w+5$,
as follows:
\begin{enumerate}
\item[1.] For every $C_i \in C$ such that $C_i = g(x_j, x_k)$,
  introduce
\begin{enumerate}
\item two fresh variables $y_j^i$ and $y_k^i$,
\item the constraint $f(y_j^i, y_k^i)$,
\item $2w+1$ copies of the constraints $L_{b_1}(y_j^i), G_{a_1}(y_j^i), N_{a_1, b_1}(y_j^i)$,
\item $2w+1$ copies of the constraints $L_{a_2}(y_k^i), G_{b_2}(y_k^i), N_{b_2, a_2}(y_k^i)$, and
\item $2w+1$ copies of the constraints $h_1(x_j, y_j^i), h_2(x_k, y_k^i)$.
\end{enumerate}
\item[2.] for every $C_i \in C$ such that $C_i = c_0(x_j)$, introduce the
  constraint $L_{z_*}(x_j)$, and
\item[3.] for every $C_i \in C$ such that $C_i = c_1(x_j)$, introduce the
  constraint $G_{z^*}(x_j)$.
\end{enumerate}

The intuition behind this construction is as follows: due to the
bounded occurrence property and the quite large number of copies of
the constraints in steps 1c, 1d and 1e, all of those constraints will
be satisfied in ``good'' solutions. The elements $0$ and $1$ in the
Boolean problem corresponds to $z_*$ and $z^*$, respectively. This may
be seen in the constraints introduced in steps 2 and 3. The
constraints introduced in step 1c essentially force the variables
$y_j^i$ to be either $a_1$ or $b_1$, and the constraints in step 1d work
in a similar way. The constraints in step 1e work as bijective
mappings from the domains $\{a_1, b_1\}$ and $\{a_2, b_2\}$ to $\{z_*,
z^*\}$. For example, $h_1(x_j, y_j^i)$ will set $x_j$ to $z_*$ if
$y_j^i$ is $a_1$, otherwise if $y_j^i$ is $b_1$, then $x_j$ will be
set to $z^*$. Finally, the constraint introduced in step 1b
corresponds to $g(x_j, x_k)$ in the original problem.

It is clear that $\I'$ is an instance of \prob{Max
  CSP$(\Gamma')$-$w'$}. Note that due to the bounded occurrence
  property of $\I'$, a solution which does not satisfy all constraints
  introduced in steps 1c, 1d and 1e can be used to construct a new
  solution which satisfies those constraints and has a measure which
  is greater than or equal to the measure of the original solution. We
  will denote this transformation of solutions by $P$.

Given a solution $s'$ to $\I'$, we can construct a solution $s =
G(s')$ to $\I$ by, for every $x \in V$, letting $s(x) = 0$ if
$P(s')(x) = z_*$ and $s(x) = 1$, otherwise.

Let $M$ be the number of constraints in $C$ of type $g$. We have that,
for an arbitrary solution $s'$ to $\I'$, $m(\I', P(s')) = m(\I, G(s'))
+ 8(2w+1) \cdot M \geq m(\I', s')$.  Furthermore, $\opt(\I') = \opt(I)
+ 8(2w+1)M$.

Now, assume that $\opt(\I')/m(\I', s') \leq \epsilon'$. It follows that
$\opt(\I')/m(\I', P(s')) \leq \epsilon'$ and
\begin{align} \notag
&\frac{\opt(I) + 8(2w+1)M}{m(I, G(s')) + 8(2w+1)M} \leq \epsilon'   &\Rightarrow  \\ \notag
&\opt(I) \leq \epsilon' m(I, G(s')) + (\epsilon' - 1) 8(2w+1)M &\Rightarrow \\ \notag
&\frac{\opt(\I)}{m(\I, G(s'))} \leq \epsilon' + \frac{8(2w+1)M(\epsilon' - 1)}{m(\I, G(s'))} .\notag
\end{align}
Furthermore, by standard arguments, we can assume that $m(\I, G(s'))
\geq |C|/c$, for some constant $c$. We get,
\begin{align}
\frac{\opt(\I)}{m(\I, G(s'))} \leq \epsilon' + 8(2w+1)c(\epsilon' -
1) . \notag
\end{align}
Hence, a polynomial time approximation algorithm for \prob{Max
  CSP$(\Gamma')$-$w'$} with performance ratio $\epsilon'$ can be used
to obtain $\epsilon''$-approximate solutions, where
$\epsilon''$ is given by $\epsilon'+8(2w+1)c(\epsilon'-1)$, for \prob{Max
  CSP$(\{c_0, c_1, g\})$-$w$} in polynomial time. Note that
$\epsilon''$ tends to $1$ as $\epsilon'$
approaches $1$. This implies that \prob{Max CSP$(\Gamma')$-$w'$} is
hard to approximate because \prob{Max
  CSP$(\{c_0, c_1, g\})$-$w$} is hard to approximate.
\end{proof}

\section{Conclusions and Future Work} \label{sec:concl}
This report have two main results: the first one is that \prob{Max
  CSP$(\Gamma)$} has a hard gap at location 1 whenever $\Gamma$ satisfies
a certain condition which makes \prob{CSP$(\Gamma)$} \cc{NP}-hard.
This condition captures all constraint languages which are currently known to make
\prob{CSP$(\Gamma)$} \cc{NP}-hard. This condition has also been
conjectured to be the dividing line between tractable (in \cc{P})
\prob{CSP}s and \cc{NP}-hard \prob{CSP}s. The second result is that
single relation \prob{Max CSP} is hard to approximate except in a
few cases where optimal solutions can be found trivially.

It is possible to strengthen these results in a number of ways. The
following possibilities applies to both of our results.

We have paid no attention to the constant which we prove
inapproximability for. That is, given a constraint language
$\Gamma$, what is the smallest constant $c$ such that \prob{Max
CSP$(\Gamma)$} is not approximable within $c-\epsilon$ for any
$\epsilon > 0$ in polynomial time? For some relations a lot of work has
been done in this direction, cf.~\cite{Ausiello99:complexity} for
more details.


We have a constant number of variable occurrences in our hardness
results, but the constant is unspecified. For some problems, for
example \prob{Max 2Sat}, it is known that allowing only three variable
occurrences still makes the problem hard to approximate (even
\cc{APX}-hard)~\cite{Ausiello99:complexity}. This is also true for
some other \prob{Max CSP} problems such as \prob{Max
  Cut}~\cite{apx-cubic}. This leads to the questions: is \prob{Max
  CSP$(\{R\})$-$3$} hard to approximate for all non-valid non-empty
$R$? and, is it true that \prob{Max CSP$(\Gamma)$-$3$} has a hard gap
at location 1 whenever \prob{Max CSP$(\Gamma)$-$B$} has a hard gap at
location 1?

One of the main open problems is to classify \prob{Max
CSP$(\Gamma)$} for all constraint languages $\Gamma$, with respect
to tractability of finding an optimal solution. The current results
in this
direction~\cite{supmod-maxCSP,maxCSP-fixed,maxCSP-three,maxcsp-diamonds}
seems to indicate that the concept of \emph{supermodularity} is of
central importance for the complexity of \prob{Max CSP}. However,
the problem is open on both ends --- we do not know if supermodularity
implies tractability and neither do we know if non-supermodularity
implies non-tractability. Here ``non-tractability'' should be
interpreted as ``not in \cc{PO}'' under some suitable
complexity-theoretic assumption, the questions of \cc{NP}-hardness
and approximation hardness are, of course, also open.

\section*{Acknowledgements.}
The authors would like to thank Gustav Nordh for comments which have
improved the presentation of this paper. Peter Jonsson is partially
supported by the \emph{Center for Industrial Information Technology}
(CENIIT) under grant 04.01, and by the \emph{Swedish Research Council}
(VR) under grant 621-2003-3421.  Andrei Krokhin is supported by the UK
EPSRC grant EP/C543831/1.  Fredrik Kuivinen is supported by the
\emph{National Graduate School in Computer Science} (CUGS), Sweden.


\end{document}